\documentclass[12pt]{article}
\usepackage{graphicx}
\usepackage{amsmath}
\usepackage{amsthm}
\usepackage{amssymb}
\usepackage{amsfonts}
\usepackage{epsfig}

\topmargin       0.00in \oddsidemargin   0.00in \evensidemargin
0.00in \marginparwidth  0.00in \marginparsep    0.00in

\def\Im {\mathop{\rm Im}\nolimits}
\def\arg {\mathop{\rm arg}\nolimits}
\def\Re {\mathop{\rm Re}\nolimits}
\def\Ai {{\rm Ai}}

\newtheorem{pro}{PROPOSITION}
\newtheorem{cor}{COROLLARY}
\newtheorem{lem}{LEMMA}
\newtheorem{thm}{THEOREM}
\textwidth        6.27in \textheight       8.99in

\pagestyle{plain}

\numberwithin{equation}{section}

\setcounter{section} {0}

\title{{Gaussian unitary ensemble with boundary spectrum singularity and  $\sigma$-form of the Painlev\'{e} II equation }}

\author {Xiao-Bo Wu$^a$, Shuai-Xia Xu$^b$\footnote{Corresponding author (Shuai-Xia Xu).
 {\it{E-mail
address:}  xushx3@mail.sysu.edu.cn }}
  and  Yu-Qiu Zhao$^c$}
\date{
{\it{$^a$School of Mathematics and Computer Science, Shangrao Normal University, Shangrao 334001, China}}\\
 {\it{$^b$Institut Franco-Chinois de l'Energie Nucleaire, Sun Yat-sen University, GuangZhou
510275,  China}}\\
 {\it{$^c$Department of Mathematics, Sun Yat-sen University, GuangZhou
510275, China}}}

\begin{document}

\maketitle

\noindent \hrule width 6.27in\vskip .3cm

\noindent {\bf{Abstract }} We consider the Gaussian unitary ensemble perturbed by a Fisher-Hartwig singularity  simultaneously of both root type and jump type.
In the critical regime where the singularity approaches the soft edge, namely, the  edge of the support of the equilibrium measure for the Gaussian weight,
the asymptotics of the Hankel determinant and the recurrence coefficients, for the orthogonal polynomials associated with the
perturbed Gaussian weight, are obtained and expressed in terms of a family of smooth solutions to the Painlev\'{e} XXXIV equation and the  $\sigma$-form of the Painlev\'{e} II equation. In addition, we further obtain the double scaling limit of the distribution of the largest eigenvalue in a thinning procedure of the  conditioning Gaussian unitary ensemble, and the  double scaling limit of the correlation kernel for the critical  perturbed Gaussian unitary ensemble.
The asymptotic properties  of the  Painlev\'{e} XXXIV functions and the  $\sigma$-form of the Painlev\'{e} II equation are also studied.

  \vskip .5cm
 \noindent {\it{2010 Mathematics subject classification:}} 33E17; 34M55; 41A60

\vspace{.2in} \noindent {\it {Keywords: Painlev\'{e} XXXIV equation, perturbed Gaussian unitary ensemble, Hankel determinant, orthogonal polynomials, Riemann-Hilbert approach}}

\noindent \hrule width 6.27in\vskip 1.3cm

\newpage

\section{Introduction and statement of results} \indent\setcounter{section} {1}

\noindent

We consider the perturbed Gaussian Unitary Ensemble (pGUE) defined by the following joint probability density function of the eigenvalues
\begin{equation}\label{pGUE}\rho_n(\lambda_1,...,\lambda_n)=\frac{1}{Z_{n,\alpha,\omega}}\prod_{i=1}^{n}w(\lambda_i)\prod_{1\leq i<j\leq n}^{n}|\lambda_i-\lambda_j|^2,\end{equation}
with $w(x)$ being the Gaussian weight perturbed by a Fisher-Hartwig singularity, that is,
\begin{equation}\label{weight}
w(x;\alpha,\mu,\omega)=e^{-x^2}|x-\mu|^{2\alpha}\left\{\begin{array}{cc}
                                                         1 & x\leq\mu\\
                                                         \omega & x>\mu,
                                                       \end{array}
\right.  \end{equation}
where $\alpha>-\frac 12$, the constant $ \omega\in \mathbb{C}\setminus(-\infty,0)$, and the partition function $Z_{n,\alpha,\omega}$ is a normalization constant.
It is readily seen that there is a  singularity at $x=\mu$, simultaneously of   root type and jump type.
  The case where  $\mu=\mu_n$ approaches  the soft edge  $\sqrt{2n}$ is of particular interest to us.

   It is well known that (\ref{pGUE}) admits a determinantal form
\begin{equation}\label{determiantal form} \rho_n(\lambda_1,...,\lambda_n)=\det[K_n(\lambda_i,\lambda_j)]_{1\leq i,j\leq n},\end{equation}
where
\begin{equation}\label{Weighted OP kernel}
K_n(x,y)=\sqrt{w(x)w(y)}\sum_{k=0}^{n-1}P_k(x)P_k(y),
\end{equation}
and $P_k(x)$ are  orthonormal polynomials   with
respect to the weight (\ref{weight}). For $\alpha>-\frac 12$ and $\omega\geq0$, the system of orthogonal  polynomials are well defined. We will prove later that the orthogonal  polynomials $P_n$ are also well defined for  $\alpha>-\frac 12$, $ \omega\in \mathbb{C}\setminus(-\infty,0)$ and $n$ large enough.
Let $P_n(x)=\gamma_n\pi_n(x)$, $\gamma_n$ being  the leading coefficient of the orthonormal polynomial, we have
the three-term recurrence relation
\begin{equation}\label{recurrence relations}
z\pi_n(z)=\pi_{n+1}(z)+a_n \pi_n(z)+b_n^2 \pi_{n-1}(z).
\end{equation}

Let $H_n(\mu;\alpha,\omega)$ be the Hankel determinant with respect to \eqref{weight}, that is,
\begin{eqnarray}
H_n(\mu;\alpha,\omega) &=& \det\left(\int_{-\infty}^{+\infty}x^{i+j}w(x)dx\right)_{i,j=0}^{n-1} \nonumber\\
 &=& \frac {1}{n!}\int_{-\infty}^{\infty}...\int_{-\infty}^{+\infty}\prod_{1\leq i<j\leq n}^{n}|\lambda_i-\lambda_j|^2\prod_{i=1}^{n}w(\lambda_i)d\lambda_i,\label{Hankel determiant}
\end{eqnarray}
then the normalization constant $Z_{n,\alpha,\omega}$ in \eqref{pGUE} is related to the Hankel determinant via
$$Z_{n,\alpha,\omega}=n!H_n(\mu;\alpha,\omega).$$

The pGUE arises naturally in  the statistics of eigenvalues of Gaussian unitary ensemble of random matrices.
 For $\alpha\in\mathbb{ N}$ and $\omega=1$, the pGUE can be interpreted as the probability density function of the classical Gaussian unitary ensemble under the condition  that $\mu$ is an eigenvalue with multiplicity $\alpha$ \cite{Forrester,ForresterWitte-2015}. Moreover, the distribution  of  the largest eigenvalue of this conditioning Gaussian unitary ensemble can be expressed as the ratio of two Hankel determinants with different parameters defined in \eqref{Hankel determiant}
 $$\mathrm{Pro}(\lambda_{max}\leq \mu\; |\; \lambda=\mu \mbox{ is an  eigenvalue  with  multiplicity } \alpha)=\frac {H_{n}(\mu;\alpha,0)}{H_{n}(\mu;\alpha,1)}.$$ Generally,  if we remove each eigenvalue of the  conditioning Gaussian unitary ensemble with probability $\omega\in [0,1]$, then the distribution  of the remaining largest eigenvalue is described as
\begin{equation}\label{gab pro}
\mathrm{Pro}\left (\lambda_{max}^{\mathrm{R}}\leq \mu \;|\; \lambda=\mu \mbox{ is an  eigenvalue  with  multiplicity } \alpha\right )=\frac {H_{n}(\mu;\alpha,\omega)}{H_{n}(\mu;\alpha,1)}.
\end{equation}
The study of thinning and conditioning random processes  can be found in      \cite{BogatskiyClaeysIts, bp,cc}.

 For the Gaussian unitary ensemble, there is the  celebrated  Tracy-Widom formula for the  distribution of the
 largest eigenvalue \cite{TracyWidom}
 \begin{equation}\label{Tracy-Widom formula}
 \lim_{n\to\infty}\mathrm{Pro}\left (\sqrt{2} n^{1/6} \left (\lambda_{max}-\sqrt{2n}\right )\leq s\right )=\exp\left(-\int_s^{\infty}(x-s)q^2(x)dx\right),
 \end{equation}
where $q(x)$ is the Hastings-Mcleod solution to the Painlev\'{e} II equation
\begin{equation}\label{Painleve II}
 q_{xx}=xq+2q^3,
 \end{equation}
determined by the boundary condition
\begin{equation}\label{HM-infinity}
 q(x)\sim \Ai (x),\quad  x\rightarrow +\infty.
  \end{equation}
It is noted that, if we take $\alpha=0$ and $\mu=\sqrt{2n}+\frac s{\sqrt{2}  n^{1/6}}$ in (\ref{pGUE}), then
$$\mathrm{Pro}\left (\lambda_{max}\leq \sqrt{2n}+\frac s{\sqrt{2}n^{1/6}}\right  )=\frac {H_{n}(\mu;0,0)}{H_{n}(\mu;0,1)}.$$
The Tracy-Widom distribution holds for general random matrix ensembles \cite{DeiftZhouU}, and is thus termed   a universality property.

In \cite{BogatskiyClaeysIts}, Bogatskiy, Claeys and Its studied  the thinned Gaussian unitary ensemble.  By removing  each eigenvalue of the Gaussian unitary ensemble with probability $1-k^2$, they  arrive at a thinning process. In our notations, the  distribution of the largest particle of the thinning process can by expressed as
$$\mathrm{Pro}\left (\lambda_{max}\leq \sqrt{2n}+\frac s{\sqrt{2}n^{1/6}}\right  )=\frac {H_{n}(\mu;0,1-k^2)}{H_{n}(\mu;0,1)}.$$
The  large $n$ asymptotic approximation of the distribution was proved in \cite{BogatskiyClaeysIts},
 \begin{equation}\label{g-Tracy-Widom formula}
 \mathrm{Pro}\left (\lambda_{max}\leq \sqrt{2n}+\frac s{\sqrt{2}n^{1/6}} \right )=\exp\left(-\int_s^{\infty}(x-s)q^2(x)dx\right)(1+o(1)),
 \end{equation}
where  $q(x)$ is the Ablowitz-Segur solution to the Painlev\'{e} II equation determined by the boundary condition
\begin{equation}\label{Ablowitz-Segur-infinity }
 q(x)\sim k \Ai (x),\quad  x\rightarrow +\infty.
  \end{equation}
The Tracy-Widom type formula (\ref{g-Tracy-Widom formula}) was obtained in \cite{BogatskiyClaeysIts} by studying the asmptotics of the orthogonal polynomials with respect to (\ref{weight}) via the Riemann-Hilbert approach, specifying  $\alpha=0$. This system of orthogonal polynomials  has  also been  considered by Xu and Zhao \cite{XuZhao}, in which the asymptotics of recurrence coefficients were obtained, also by applying  the Riemann-Hilbert approach.

In \cite{ItsKuijlaarsOstensson2008}, Its, Kuijlaars and \"{O}stensson studied pGUE (\ref{pGUE}) in the case $\omega=0$.
When the  algebra singularity $\mu$ is close to the soft edge, or, more precisely, when   $\mu=\sqrt{2n}+\frac s{\sqrt{2}n^{1/6}}$ with mild $s$, the asymptotics of the correlation kernel (\ref{Weighted OP kernel}) was found and characterised by  the solution to a Lax pair,  related to a certain Painlev\'{e} XXXIV transcendent. The authors of \cite{ItsKuijlaarsOstensson2008} showed that the Painlev\'{e} XXXIV kernel is valid   for quite general ensembles with root-typed  singularity $|z-\mu|^{2\alpha}$, $\mu$ being close to the soft edge.

Earlier in \cite{ForresterWitte}, Forrester and Witte considered   pGUE (\ref{pGUE}) with the parameter $2\alpha\in \mathbb{N}$ and $\omega=0,\; 1$. By applying the Okamoto $\tau$-function theory, they found  that the asymptotics of the quantities $H_n(\mu;\alpha,\omega)$ in  (\ref{Hankel determiant}), with $\mu=\sqrt{2n}+\frac s{\sqrt{2}n^{1/6}}$, can be expressed in terms of solutions of the following Jimbo-Miwa-Okamoto $\sigma$-form of the Painlev\'{e} II equation:
\begin{equation}\label{sigma form}
(\sigma'')^2+4(\sigma')^3-4s(\sigma')^2+4\sigma'\sigma- (2\alpha)^2=0,
 \end{equation}
 with the boundary condition
 \begin{equation}\label{sigma form-boundary-1}
\sigma(s)\sim \frac {s^2}{4}+\frac {16\alpha^2-1} {8}s^{-1},~~ s\to -\infty
 \end{equation}
 for $\omega=0$,
 and
 \begin{equation}\label{sigma form-boundary-2}
\sigma(s)\sim -2\alpha s^{1/2}-\frac {\alpha^2}{s} ,~~ s\to+ \infty
 \end{equation}
 for $\omega=1$. In \cite{ForresterWitte}, the authors also noted that for $\alpha=0$,
\begin{equation}\label{sigma form-painleve}
\sigma'=-q^2,
 \end{equation}
and they further obtained (\ref{g-Tracy-Widom formula})-(\ref{Ablowitz-Segur-infinity }).
In \cite{ForresterWitte-2002} and \cite{XuZhao-2015,ZengXuZhao},  asymptotic formulas  have been derived  of the Hankel determinants associated   with   the Jacobi weight perturbed by  a Fisher-Hartwig
singularity  close to the hard edge $x=1$,    involving  the Jimbo-Miwa-Okamoto $\sigma$-form of the Painlev\'e III equation.
 The Painlev\'{e} equations   play an important role in the asymptotic study  of the Hankel determinants; see \cite{DeiftItsKrasovky,ForresterWitte-2015,XDZ-1,XDZ-2}.

In this paper, we consider   pGUE \eqref{pGUE} with Fisher-Hartwig singularity of both root type and jump type simultaneously at $\lambda=\mu_n$,
which approaches  the soft edge $\sqrt{2n}$ at a certain speed such that $\sqrt{2}n^{1/6} (\mu_n-\sqrt{2n}\,  )\to s$ with finite $s$, as $n\to\infty$. We derive the asymptotics of the
Hankel determinant, the recurrence coefficients and the correlation kernel in the double scaling limit. The asymptotic results  are expressed in terms of a family of
solutions to the  Jimbo-Miwa-Okamoto $\sigma$-form of the Painlev\'e II equation and the Painlev\'e XXXIV equation. The asymptotic properties  of this family of solutions  are also investigated.

\subsection{Statement of main results}

\subsubsection{Asymptotics of  solutions to the Painlev\'e XXXIV equation and the $\sigma$-form of the Painlev\'{e} II equation }
Our first result is on the asymptotics of a family of  solutions to the Painlev\'e XXXIV equation  and the Jimbo-Miwa-Okamoto $\sigma$-form of the Painlev\'e II equation. The asymptotics at positive infinity are derived by using known results on the second Painlev\'{e} transcendents in \cite{FokasItsKapaevNovokshenov}.  These solutions are  used to describe the asymptotics of the Hankel determinants and the recurrence coefficients.
\begin{thm}\label{Asymptotic of sigma form}
Let $\alpha>-\frac 12$ and $\omega\in \mathbb{C}\setminus (-\infty,0)$,
then for $s\in \mathbb{R}$, there is an analytic solution to the Jimbo-Miwa-Okamoto $\sigma$-form of the Painlev\'{e} II equation (\ref{sigma form}),  uniquely determined by  the boundary condition as $s\to \infty$ and $\arg s\in (-\frac {\pi}{3}, \frac {\pi }{3}]$, namely
\begin{equation}\label{thm: sigma asy positive}
 \begin{split}
  \sigma(s)=&-2\alpha s^{1/2} \left[\sum_{k=0}^m a_ks^{-\frac {3}{2}k }+O\left (s^{-\frac {3(m+1)}{2}}\right )\right]\\
  &+\left (e^{2\pi i\alpha}-\omega\right )\frac {\Gamma(2\alpha+1)}{2^{3+6\alpha}\pi }s^{-(3\alpha+1)}e^{-\frac 43 s^{3/2}} \left [1+O\left (s^{-\frac {1}{4}}\right ) \right ],
  \end{split}
\end{equation}
where $m$ is a positive integer,   $a_0=1$, and  $a_1=\frac {\alpha}{2}$.
Furthermore, the asymptotic behavior of the solution  at negative infinity
is
\begin{equation}
\sigma(s)=\frac {s^2}{4}+\frac {16\alpha^2-1} {8}s^{-1}+O(s^{-5/2})~~\mbox{as}~s\to -\infty
\end{equation}
for $\omega=0$,  and
\begin{align}\label{thm: sigma asy negative infinity} \sigma(s)=&2\beta i (-s)^{1/2}
+\frac{i}{4s}\left[2i(\beta^2-\alpha^2)+\frac{\Gamma(1+\alpha-\beta)}{\Gamma(\alpha+\beta)}
e^{i\theta(s;\alpha,\beta)}-\frac{\Gamma(1+\alpha+\beta)}{\Gamma(\alpha-\beta)}
e^{-i\theta(s;\alpha,\beta)}\right]\nonumber\\
&+O(s^{3|\Re \beta|-\frac 52})~~\mbox{as}~s\to -\infty
\end{align}
for $ \omega=e^{-2\pi i\beta}$ with $|\Re \beta|<1/2$,
where $\theta(s;\alpha,\beta)=\frac 43 |s|^{3/2}-\alpha \pi-6i \beta \ln 2-3i\beta\ln|s|$.
\end{thm}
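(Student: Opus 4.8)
The plan is to realize $\sigma(s)=\sigma(s;\alpha,\omega)$ through a distinguished Painlev\'{e} II transcendent $q=q(s;\alpha,\omega)$ (equivalently, a Painlev\'{e} XXXIV transcendent) whose monodromy data are prescribed by $\alpha$ and $\omega$, and then to transport the known large-$s$ asymptotics and connection formulae for Painlev\'{e} II onto $\sigma$. Concretely, one sets up the model Riemann--Hilbert problem (equivalently the Flaschka--Newell/Jimbo--Miwa Lax pair) whose Stokes multipliers are fixed by taking $\alpha$ as the formal-monodromy parameter and by encoding $\omega$ through a single Stokes multiplier proportional to $e^{2\pi i\alpha}-\omega$. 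Solvability of this model problem for every $s\in\mathbb{R}$ when $\omega\in\mathbb{C}\setminus(-\infty,0)$ follows from a vanishing-lemma argument and produces a transcendent $q$ holomorphic on $\mathbb{R}$; the function $\sigma$ is then the associated Jimbo--Miwa--Okamoto $\tau$-function (a normalization of the Painlev\'{e} II Hamiltonian), which by a direct computation satisfies (\ref{sigma form}) and reduces, for $\alpha=0$, to the relation (\ref{sigma form-painleve}) with $q$ solving (\ref{Painleve II}). Uniqueness is then immediate: the full $s\to\infty$ asymptotics in (\ref{thm: sigma asy positive}), \emph{including} the exponentially small term, determines all the Stokes data, and $\arg s\in(-\tfrac{\pi}{3},\tfrac{\pi}{3}]$ is precisely the sector in which $e^{-\frac{4}{3}s^{3/2}}$ is recessive, so that there $\sigma$ is a truncated solution.

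For $s\to\infty$ in that sector I would quote the asymptotics of Painlev\'{e} II transcendents from \cite{FokasItsKapaevNovokshenov}: for the Stokes data above, $q$ carries the algebraic expansion dictated by the parameter $\alpha$ together with a subdominant exponential whose coefficient is proportional to the Stokes multiplier $e^{2\pi i\alpha}-\omega$. Feeding this, along with the differentiated expansion for $q'$, into the expression defining $\sigma$ yields (\ref{thm: sigma asy positive}) after collecting terms; the coefficients $a_0=1$ and $a_1=\tfrac{\alpha}{2}$ are most easily obtained by substituting the formal series $\sigma=-2\alpha s^{1/2}\sum_{k\ge 0}a_k s^{-3k/2}$ directly into the ODE (\ref{sigma form}) and solving the resulting recursion.

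The core of the argument is the $s\to-\infty$ connection problem, for which I would again invoke the connection formulae of \cite{FokasItsKapaevNovokshenov} relating the negative-infinity data to the same Stokes multipliers. Two regimes occur. When $\omega=0$ the Stokes data are those of the (generalized) Hastings--McLeod solution --- the hard-cutoff case of \cite{ItsKuijlaarsOstensson2008,ForresterWitte} --- so $q(s)$ behaves at $-\infty$ like the separatrix $\sqrt{-s/2}$ with an $\alpha$-dependent algebraic correction, and inserting $q$ and $q'$ into $\sigma$, carried one order beyond the leading term, produces $\sigma(s)=\tfrac{s^{2}}{4}+\tfrac{16\alpha^{2}-1}{8}s^{-1}+O(s^{-5/2})$, the $s^{-1}$ coefficient coming from that correction. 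When $\omega=e^{-2\pi i\beta}$ with $|\Re\beta|<1/2$ --- which is exactly the statement that $\omega$ is a nonzero element of $\mathbb{C}\setminus(-\infty,0)$, since $|\omega|=e^{2\pi\Im\beta}$ and $\arg\omega=-2\pi\Re\beta$ --- the data are generic and $q$ is an oscillatory (``small'') solution of amplitude $\sim(-s)^{-1/4}$, whose phase and amplitude are dictated by the Stokes data so that the quadratic combinations of $q,q'$ entering $\sigma$ oscillate like $e^{\pm i\theta(s;\alpha,\beta)}$ with coefficients $\Gamma(1+\alpha\mp\beta)/\Gamma(\alpha\pm\beta)$, where $\theta(s;\alpha,\beta)=\tfrac{4}{3}|s|^{3/2}-\alpha\pi-6i\beta\ln 2-3i\beta\ln|s|$. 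Substituting into the defining expression for $\sigma$ --- in the course of which the would-be leading oscillatory term of size $(-s)^{1/2}e^{\pm i\theta}$ cancels, leaving an oscillation only at order $s^{-1}$ --- reproduces (\ref{thm: sigma asy negative infinity}); the stated error $O(s^{3|\Re\beta|-5/2})$ is the next correction to the dominant of the two oscillatory terms, namely its size $s^{-1+3|\Re\beta|}$ multiplied by the natural expansion parameter $(-s)^{-3/2}$, and it is $o(s^{-1})$ exactly because $|\Re\beta|<1/2$.

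The step I expect to be the main obstacle is the bookkeeping that glues these pieces together. One must (i) bring the model Riemann--Hilbert problem --- whose jump matrices carry $\omega$ and $\alpha$ on a configuration of rays --- into the precise normalization in which the asymptotics and connection formulae of \cite{FokasItsKapaevNovokshenov} are stated, and check that the Stokes multipliers thereby attached to $(\alpha,\omega)$ map under those formulae onto exactly the constants displayed, in particular that $\omega=e^{-2\pi i\beta}$ is the correct dictionary and that the range of validity of the connection formula is exactly $|\Re\beta|<1/2$; (ii) carry out the substitution of the Painlev\'{e} II asymptotics into $\sigma$ carefully enough --- one order beyond the leading behaviour, through the oscillatory cancellation just mentioned --- to obtain the sharp $\Re\beta$-dependent remainder rather than a cruder bound; and (iii) establish, for complex $\omega$, that $q$ and hence $\sigma$ is pole-free on $\mathbb{R}$, which is where the vanishing lemma for the model problem is really needed.
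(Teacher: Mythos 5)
Your overall architecture is right, and for the first half of the theorem (existence, uniqueness via Stokes data, and the $s\to+\infty$ expansion) you do exactly what the paper does: set up the model $\Psi$ RH problem, invoke the vanishing lemma to get a pole-free solution on $\mathbb{R}$, identify the Stokes multipliers $\{s_1=-e^{-2\pi i\alpha},\,s_2=\omega,\,s_3=-e^{2\pi i\alpha}\}$ of the Painlev\'e~II equation \eqref{PII}, and pull the $s\to+\infty$ behaviour of $\sigma$ and $u$ from the FIKN asymptotics \eqref{PII-asy} through the map \eqref{PII-xxxiv} (your shortcut of extracting $a_0,a_1$ by substituting a formal series into \eqref{sigma form} is fine and is in fact simpler than what the paper does, which is to push the series for $q$ through the Miura-type change of variable).

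The genuine divergence --- and the place where your plan would likely stall --- is the $s\to-\infty$ connection problem. You propose to again quote Painlev\'e~II asymptotics from \cite{FokasItsKapaevNovokshenov} and substitute $q,q'$ into the expression for $\sigma$ (or $u$). The paper explicitly flags that this does not work as stated: on the side relevant to $u(s\to-\infty)$, which under $u(s)=2^{1/3}U(-2^{-1/3}s)$, $U=q'+q^2+\tfrac{s}{2}$, is the \emph{opposite} end of the $q$-axis, the leading term of $q$ cancels in the combination $q'+q^2+\tfrac{s}{2}$. For the oscillatory regime $\omega=e^{-2\pi i\beta}$ the surviving part of $u$ is two full orders down from the naive leading contribution, so to reach the $O(s^{-1})$ accuracy required in \eqref{thm: sigma asy negative infinity} --- and in particular the constant $2i(\beta^2-\alpha^2)$ and the sharp $O(s^{3|\Re\beta|-5/2})$ error --- you would need the subleading corrections to the oscillatory asymptotics of $q$, including their amplitude and phase corrections, at a precision that is not readily available off the shelf in \cite{FokasItsKapaevNovokshenov}. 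You acknowledge this as obstacle (ii), but the bookkeeping you describe as ``one order beyond the leading behaviour'' understates the cancellation. The paper sidesteps the issue entirely: in the Appendix it performs a direct nonlinear steepest-descent analysis of the $\Psi$ RH problem at large $-s$, with a Bessel local parametrix at the origin in the case $\omega=0$ and, in the case $\omega=e^{-2\pi i\beta}$, a confluent-hypergeometric local parametrix at the origin plus an Airy parametrix at the turning point, then reads $\sigma(s)$ off the coefficient $A_1$ in the $1/z$-expansion of the rescaled solution via \eqref{sigma-A1}. That route produces the cancelled-down terms automatically without ever needing subleading $q$-asymptotics.

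A smaller point: in your $\omega=0$ sketch you quote $q(s)\sim\sqrt{-s/2}$ ``at $-\infty$'' for the generalized Hastings--McLeod solution and then use it to get $\sigma$ at $-\infty$. Under the paper's normalization \eqref{PII-xxxiv} the side of $q$ relevant to $\sigma(s\to-\infty)$ is where the Hastings--McLeod-type solution is small/algebraic, not the separatrix side. This is a convention mix-up rather than a conceptual error, but it is worth flagging since it is exactly the kind of sign-tracing that makes the PII-substitution route delicate in practice.
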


\begin{thm}\label{Asymptotic of Painleve}Let $\alpha>-\frac 12$ and $\omega\in \mathbb{C}\setminus (-\infty,0)$,
then for $s\in \mathbb{R}$, there is an analytic solution to the Painlev\'{e} XXXIV equation
\begin{equation}\label{pXXXV}
 {u}_{ss}=\frac{{u}_s^2}{2{u}}+4{u}^2+2s{u}
 -\frac{(2\alpha)^2}{2{u}},
  \end{equation}
 uniquely determined by  the boundary condition as $s\to \infty$ and $\arg s\in (-\frac {\pi}{3}, \frac {\pi }{3}]$, namely
\begin{equation}\label{thm: painleve  positive infinity}
\begin{split}
  u(s)=&\frac {\alpha}{\sqrt{s}}\left [\sum_{k=0}^m c_ks^{-\frac {3}{2}k}+O\left (s^{-\frac {3(m+1)}{2}}\right )\right]\\
  &+\left (e^{2\pi i\alpha}-\omega\right )\frac {\Gamma(2\alpha+1)}{2^{2+6\alpha}\pi }s^{-(3\alpha+\frac 12)}e^{-\frac 43 s^{3/2}}\left [1+O\left (s^{-\frac {1}{4}}\right )\right ],
\end{split}
\end{equation}
where $m$ is a positive integer, $c_0=1$ and $c_1=-\alpha$.
Furthermore,   the asymptotic behavior of the solution  at negative infinity
is
\begin{equation}
u(s)=-\frac {s}{2}+\frac {16\alpha^2-1} {8}s^{-2}+O(s^{-7/2}) ~~\mbox{as}~s\to -\infty
\end{equation}
for $\omega=0$, and
\begin{equation} \label{thm: painleve  negative infinity} u(s)=\frac{1}{\sqrt{-s}}\left[i\beta+\frac 12 \frac{\Gamma(1+\alpha-\beta)}{\Gamma(\alpha+\beta)}
e^{i\theta(s;\alpha,\beta)}+\frac 12 \frac{\Gamma(1+\alpha+\beta)}{\Gamma(\alpha-\beta)}
e^{-i\theta(s;\alpha,\beta)}\right]+O(s^{3|\Re \beta|-2})
\end{equation}as $s\to -\infty$
for $ \omega=e^{-2\pi i\beta}$ with $|\Re \beta|<1/2$,
where $\theta(s;\alpha,\beta)=\frac 43 |s|^{3/2}-\alpha \pi-6i \beta \ln 2-3i\beta\ln|s|$.
\end{thm}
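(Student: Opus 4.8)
The plan is to obtain Theorem~\ref{Asymptotic of Painleve} from Theorem~\ref{Asymptotic of sigma form} by means of the classical substitution $u=-\sigma'$, which links the Painlev\'e XXXIV equation \eqref{pXXXV} with the Jimbo--Miwa--Okamoto $\sigma$-form \eqref{sigma form} of Painlev\'e II. First I would record this link. Differentiating \eqref{sigma form} once, the non-$\sigma''$ parts produced by $-4s(\sigma')^2$ and by $4\sigma\sigma'$ are $-4(\sigma')^2$ and $+4(\sigma')^2$ and cancel, leaving $2\sigma''\bigl(\sigma'''+6(\sigma')^2-4s\sigma'+2\sigma\bigr)=0$; away from the isolated zeros of $\sigma''$ this forces $\sigma=-\tfrac12\sigma'''-3(\sigma')^2+2s\sigma'$, i.e.\ $\sigma=\tfrac12 u''-3u^2-2su$ with $u=-\sigma'$, and substituting this expression back into \eqref{sigma form} produces precisely \eqref{pXXXV}. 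Conversely, if $u$ solves \eqref{pXXXV} then (eliminating $u''$ by \eqref{pXXXV}) $\sigma:=\tfrac12 u''-3u^2-2su=\bigl((u')^2-4u^3-4su^2-(2\alpha)^2\bigr)/(4u)$ satisfies $\sigma'=-u$ and solves \eqref{sigma form}. Hence $\sigma\mapsto-\sigma'$ is a bijection between the solution sets of the two equations; the zeros of $u=-\sigma'$ are harmless, since there \eqref{sigma form} forces $(u'(s_0))^2=(2\alpha)^2$, so the simple poles of $(u')^2/(2u)$ and $(2\alpha)^2/(2u)$ in \eqref{pXXXV} cancel and $u$ stays analytic. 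For $\alpha=0$ this is the relation $\sigma'=-q^2$ of \eqref{sigma form-painleve}, so $u=q^2$.

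Given this correspondence, I would take the analytic solution $\sigma$ of \eqref{sigma form} provided by Theorem~\ref{Asymptotic of sigma form} on $\mathbb R$ (and on the sector $\arg s\in(-\tfrac\pi3,\tfrac\pi3]$) and set $u:=-\sigma'$; analyticity and the fact that $u$ solves \eqref{pXXXV} are then immediate. Uniqueness of $u$ subject to \eqref{thm: painleve  positive infinity} follows from the uniqueness part of Theorem~\ref{Asymptotic of sigma form} via the bijection: two solutions of \eqref{pXXXV} with boundary behaviour \eqref{thm: painleve  positive infinity} give two solutions $\tfrac12 u''-3u^2-2su$ of \eqref{sigma form} with boundary behaviour \eqref{thm: sigma asy positive}, which therefore coincide, whence so do the two $u$'s. (Equivalently, the data at $s=\infty$ fix the Stokes multipliers of the associated Painlev\'e II transcendent, which determine the solution.)

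It then remains to propagate the asymptotics, by differentiating the three expansions of Theorem~\ref{Asymptotic of sigma form} term by term. At $+\infty$, using $\tfrac{d}{ds}s^{\frac12-\frac32 k}=\bigl(\tfrac12-\tfrac32 k\bigr)s^{-\frac12-\frac32 k}$ and $\tfrac{d}{ds}e^{-\frac43 s^{3/2}}=-2s^{1/2}e^{-\frac43 s^{3/2}}$ (the dominant contribution to the derivative of the subdominant term), one gets $u=-\sigma'$ in the form \eqref{thm: painleve  positive infinity} with $c_k=(1-3k)a_k$ — so $c_0=a_0=1$, $c_1=-2a_1=-\alpha$ — and with the coefficient $\tfrac{\Gamma(2\alpha+1)}{2^{3+6\alpha}\pi}$ of the exponentially small term doubled to $\tfrac{\Gamma(2\alpha+1)}{2^{2+6\alpha}\pi}$. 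At $-\infty$ the same is carried out in the two regimes: for $\omega=0$, differentiating $\tfrac{s^2}4+\tfrac{16\alpha^2-1}{8}s^{-1}+O(s^{-5/2})$ yields $-\tfrac s2+\tfrac{16\alpha^2-1}{8}s^{-2}+O(s^{-7/2})$; for $\omega=e^{-2\pi i\beta}$ one differentiates the oscillatory part through $\tfrac{d}{ds}e^{\pm i\theta(s;\alpha,\beta)}=\mp 2i(-s)^{1/2}e^{\pm i\theta(s;\alpha,\beta)}(1+o(1))$ together with $(-s)^{1/2}/s=-(-s)^{-1/2}$, which turns the prefactor $\tfrac{i}{4s}$ of the $\Gamma$-function terms into $\tfrac12(-s)^{-1/2}$ and reproduces \eqref{thm: painleve  negative infinity}; the error orders are consistent throughout.

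The one genuinely technical point — and the step I expect to require care — is the justification that these asymptotic expansions (including the oscillatory and exponentially small tails and their remainders) may be differentiated term by term. I would secure it either by noting that the Riemann--Hilbert/isomonodromy analysis behind Theorem~\ref{Asymptotic of sigma form} is uniform in $s$ on full sectors and hence controls $s$-derivatives (so that Cauchy's estimate applied to the analytic continuation bounds the differentiated remainders), or, more self-containedly, by importing from Theorem~\ref{Asymptotic of sigma form} only the leading behaviour of $u=-\sigma'$ and then recovering its full expansion, with the stated error orders, by substituting the corresponding ansatz directly into its own equation \eqref{pXXXV} and invoking the standard theory of asymptotic solutions of the Painlev\'e equations near the irregular singular point $s=\infty$.
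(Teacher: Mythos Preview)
Your approach is correct and, for the $s\to-\infty$ asymptotics, coincides exactly with the paper: the Appendix carries out the Riemann--Hilbert analysis to obtain $\sigma(s)$ and then states in one line that the asymptotics of $u(s)$ ``follows from the relation $u(s)=-\sigma'(s)$''. For the $s\to+\infty$ asymptotics there is a mild route difference: rather than differentiating the $\sigma$-expansion, the paper invokes the known asymptotics of the Painlev\'e~II solution $q(s)$ with Stokes multipliers $\{-e^{-2\pi i\alpha},\omega,-e^{2\pi i\alpha}\}$ (from \cite{FokasItsKapaevNovokshenov}) and then reads off both $u$ and $\sigma$ simultaneously via the algebraic relation $u(s)=2^{1/3}U(-2^{-1/3}s)$, $U=q'+q^2+\tfrac s2$. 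Your differentiation argument yields the same formulas (your identifications $c_k=(1-3k)a_k$ and the doubling of the exponential prefactor are correct), but the paper's route has the advantage that the needed derivative $q'$ is part of the already-established Painlev\'e~II asymptotics, so no separate justification of term-by-term differentiation is needed at $+\infty$. Existence, analyticity and uniqueness are handled in the paper directly via the RH problem for $\Psi$ (Propositions~1 and~2), which is equivalent to your bijection argument since Theorem~\ref{Asymptotic of sigma form} rests on the same RH construction.
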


\subsubsection{Asymptotics of the Hankel determinant and recurrence coefficients}

Our second result is on the asymptotics of the Hankel determinant,  characterized by the   Jimbo-Miwa-Okamoto $\sigma$-form of the Painlev\'{e} II equation.
\begin{thm}\label{Asymptotic of Hankel determinant}
Let $\alpha>-\frac 12$, $\omega\in \mathbb{C}\setminus (-\infty,0)$ and $\mu_n=\sqrt{2n}+\frac s{\sqrt{2}n^{1/6}}$, we have the following asymptotic formula  of the Hankel determinant
\begin{equation}\label{Hankel-Asy}
\frac {d}{d\mu_n} \ln H_n(\mu_n;\alpha,\omega)=
\sqrt{2n}\left(2\alpha +\frac{\sigma(s)}{n^{1/3}} +\frac{\alpha(u(s)+s) }{n^{2/3}}+O\left (\frac 1 n\right )\right),
\end{equation}
where  $\sigma(s)$ and  $u(s)$  are the analytic solutions to  the   Jimbo-Miwa-Okamoto $\sigma$-form of the Painlev\'{e} II equation (\ref{sigma form})  and the Painlev\'e XXXIV equation,  determined   respectively  by
 the boundary condition  \eqref{thm: sigma asy positive} and \eqref{thm: painleve  positive infinity}.
\end{thm}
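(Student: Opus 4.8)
I would deduce Theorem~\ref{Asymptotic of Hankel determinant} from a differential identity for the Hankel determinant combined with the Deift--Zhou steepest-descent analysis of the Riemann--Hilbert (RH) problem for the orthogonal polynomials with weight \eqref{weight}, the local parametrix at the soft edge being built from the model problem that produces the solutions $u(s)$ and $\sigma(s)$ of Theorems~\ref{Asymptotic of sigma form} and~\ref{Asymptotic of Painleve}. \emph{Step 1 (differential identity).} Let $Y=Y(z;\mu_n)$ be the $2\times2$ matrix solving the Fokas--Its--Kitaev RH problem for the orthogonal polynomials with weight $w(x;\alpha,\mu_n,\omega)$. The weight depends on $\mu_n$ only through the root factor $|x-\mu_n|^{2\alpha}$ and the location of the $\omega$-jump, so $\partial_{\mu_n}\log w(x)=-2\alpha/(x-\mu_n)$ for $x\ne\mu_n$. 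Using $H_n=\prod_{j=0}^{n-1}h_j$ together with $\frac{d}{d\mu_n}\log H_n=\int_{\mathbb R}K_n(x,x)\,\partial_{\mu_n}\log w(x)\,dx$, or, equivalently, differentiating the jump condition for $Y$ and applying a Cauchy-integral argument, one obtains a differential identity that, after the steepest-descent transformations below, reduces $\frac{d}{d\mu_n}\log H_n(\mu_n;\alpha,\omega)$ to the evaluation near $z=\mu_n$ of a matrix entry built from the transformed RH solution.

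\emph{Step 2 (steepest descent).} Perform the standard chain of transformations $Y\mapsto T\mapsto S\mapsto R$: normalization at infinity by the $g$-function of the equilibrium measure for the rescaled Gaussian weight on $[-1,1]$; opening of lenses; construction of the global parametrix, whose Szeg\H{o} function encodes both $|x-\mu_n|^{2\alpha}$ and the jump $\omega$; an Airy parametrix at the left soft edge $-\sqrt{2n}$; and, crucially, a local parametrix in a fixed disk around the right soft edge $\sqrt{2n}$, which in the double scaling limit $\sqrt2\,n^{1/6}(\mu_n-\sqrt{2n})\to s$ coincides with the approaching singularity $\mu_n$. This local parametrix is built from the model RH problem $\Psi$ attached to the Painlev\'e~XXXIV Lax pair with the extra jump $\omega$ inserted; its solvability for every $s\in\mathbb R$, $\alpha>-\tfrac12$, $\omega\in\mathbb C\setminus(-\infty,0)$ is exactly what yields the analytic functions $u(s)$, $\sigma(s)$ of Theorems~\ref{Asymptotic of sigma form} and~\ref{Asymptotic of Painleve}, and the coefficients in the expansion of $\Psi$ at its irregular point are identified there with $\sigma(s)$, $u(s)$ and the Painlev\'e~XXXIV Hamiltonian. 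The parametrices match on the boundary circle up to an error of order $n^{-1/3}$, so $R$ solves a small-norm RH problem and admits an expansion $R(z)=I+R^{(1)}(z)n^{-1/3}+R^{(2)}(z)n^{-2/3}+O(n^{-1})$, uniformly for $s$ in compact subsets of $\mathbb R$, with $R^{(1)},R^{(2)}$ given by the residues of the matching expansion on the circle.

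\emph{Step 3 (substitution).} Unravel $R\leftarrow S\leftarrow T\leftarrow Y$ inside the differential identity of Step~1 and separate the result into three pieces: the contribution of the $g$-function and the Szeg\H{o} function near $\mu_n$, which produces the leading term $\sqrt{2n}\,(2\alpha)$; the contribution of the $\Psi$-parametrix, which, after using $\frac{d}{d\mu_n}=\sqrt2\,n^{1/6}\frac{d}{ds}$ and the identification of the parametrix coefficients with $\sigma$ and $u$, produces $\sqrt{2n}\big(\sigma(s)n^{-1/3}+\alpha(u(s)+s)n^{-2/3}\big)$; and the contribution of $R-I$, which is $O(\sqrt{2n}\,n^{-1})=O(n^{-1/2})$. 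Collecting the three pieces gives \eqref{Hankel-Asy}.

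\emph{Main obstacle.} The delicate part is the $n^{-2/3}$ term: it requires the second-order term in the local parametrix matching, i.e. the subleading coefficient in the large-argument expansion of $\Psi$, and a careful accounting of how $\frac{d}{d\mu_n}$ acts both on the explicit dependence $\mu_n=\sqrt{2n}+s/(\sqrt2\,n^{1/6})$ and on the implicit $n$-dependence of the $g$-function, the conformal map at the edge, and the Szeg\H{o} function; one must also verify that the small-norm estimates, and hence the error term, are uniform for $s$ in compact sets (the matching with Theorems~\ref{Asymptotic of sigma form}--\ref{Asymptotic of Painleve} then controls the behaviour as $s\to\pm\infty$).
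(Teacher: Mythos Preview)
Your overall strategy---differential identity plus Deift--Zhou steepest descent with the Painlev\'e XXXIV model problem at the coalescing edge/singularity---is exactly the paper's. The steepest-descent chain in your Step~2 matches the paper's $Y\to T\to S\to R$ almost verbatim.

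The substantive difference is in Step~1. The paper does \emph{not} use the local identity $\frac{d}{d\mu}\ln H_n=\int K_n(x,x)\,\partial_\mu\log w(x)\,dx$ with an evaluation ``near $z=\mu_n$''. Instead it exploits the translation structure of the weight: after the shift $x\mapsto x+\mu$ the factor $|x|^{2\alpha}\theta_\omega(x)$ is $\mu$-independent and only the Gaussian carries the $\mu$-dependence. A short Christoffel--Darboux computation then yields the clean identity
\[
\frac{d}{d\mu}\ln H_n \;=\; 2\,(Y_1)_{11},
\]
i.e.\ twice the $(1,1)$ entry of the first large-$z$ coefficient of $Y$. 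This turns Step~3 into a pure \emph{at-infinity} substitution: from $Y(\sqrt{2n}\,z)=\cdots R(z)N(z)\cdots$ one gets $\frac{d}{d\mu}\ln H_n=2\sqrt{2n}\bigl((N_1)_{11}+(R_1)_{11}\bigr)$, and the three pieces in \eqref{Hankel-Asy} come from $(N_1)_{11}=\tfrac{\alpha(1+\lambda)}{2}$ (producing $2\alpha$ and part of the $n^{-2/3}$ term) together with the residues $k_1,k_5$ that determine $(R_1)_{11}$ (producing $\sigma(s)n^{-1/3}$ and the rest of $\alpha(u+s)n^{-2/3}$). No local evaluation near $\mu_n$, no differentiation of the conformal map or of the Szeg\H{o} function in $\mu$, is needed.

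Your route can be made to work, but as written it has a gap: the formula $\partial_{\mu_n}\log w(x)=-2\alpha/(x-\mu_n)$ drops the distributional contribution from the $\omega$-step. For $\omega\neq 1$ one has $\partial_\mu\theta_\omega(x-\mu)=-(\omega-1)\delta(x-\mu)$, and while the pairing with $|x-\mu|^{2\alpha}$ kills this for $\alpha>0$, it does not for $-\tfrac12<\alpha\le 0$; in that range your integral identity is not literally $-2\alpha\int K_n(x,x)/(x-\mu_n)\,dx$ and an additional boundary-type term has to be tracked. The paper's at-infinity identity sidesteps this entirely and also removes the ``main obstacle'' you flag, since the $n^{-2/3}$ coefficient is read off directly from the second-order term $(R_1)_{11}$ at $O(n^{-2/3})$ rather than from differentiating local data.
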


The above theorem  can be applied directly to obtain  the double scaling limit of the distribution of the largest eigenvalue in the conditioning and thinning procedure in GUE; cf. \eqref{gab pro}. The limit distribution  is connected to the  perturbed GUE \eqref{pGUE}, and can also  be expressed in terms  of the Hankel determinant in \eqref{Hankel determiant}.
\begin{cor}\label{thm: TW for pGUE}
Let $\alpha>-\frac 12$, $\omega\in \mathbb{C}\setminus (-\infty,0)$ and $\mu_n=\sqrt{2n}+\frac s{\sqrt{2}n^{1/6}}$ with finite $s$, we have the following double scaling limit for the distribution of the largest eigenvalue remaining  in the thinning procedure of the conditioning  GUE  defined in \eqref{gab pro}:
\begin{equation}\label{TW for pGUE}
\begin{split}
   & \lim_{n\to+\infty} \mathrm{Pro} \left (\lambda_{max}^{\mathrm{R}}\leq \mu_n \;|\; \lambda=\mu_n \mbox{ is an  eigenvalue  with  multiplicity } \alpha\right ) \\
    &  =\lim_{n\to+\infty}\frac {H_{n}(\mu_n;\alpha,\omega)}{H_{n}(\mu_n;\alpha,1)}\\
    &=\exp\left(\int_{s}^{+\infty}\sigma(t;\alpha,1)-\sigma(t;\alpha,\omega)dt\right),
\end{split}
\end{equation}
 where  $\sigma(s;\alpha,\omega)$ are a family of  analytic solutions to  the   Jimbo-Miwa-Okamoto $\sigma$-form of the Painlev\'{e} II equation (\ref{sigma form}),  determined    by  the boundary condition  \eqref{thm: sigma asy positive}, and depending on the parameter $\alpha$ and $\omega$.
\end{cor}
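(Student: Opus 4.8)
The plan is to integrate the logarithmic derivative of the Hankel determinant supplied by Theorem~\ref{Asymptotic of Hankel determinant} and to fix the constant of integration at $s=+\infty$. By \eqref{gab pro} the conditional probability equals $H_n(\mu_n;\alpha,\omega)/H_n(\mu_n;\alpha,1)$, so it is enough to analyse this ratio. Since $\mu_n=\sqrt{2n}+s/(\sqrt2\,n^{1/6})$ we have $d\mu_n/ds=1/(\sqrt2\,n^{1/6})$, and multiplying \eqref{Hankel-Asy} by $d\mu_n/ds$ gives
\[
\frac{d}{ds}\ln H_n(\mu_n;\alpha,\omega)=2\alpha\,n^{1/3}+\sigma(s;\alpha,\omega)+\frac{\alpha\bigl(u(s;\alpha,\omega)+s\bigr)}{n^{1/3}}+O\bigl(n^{-2/3}\bigr),
\]
uniformly for $s$ in compact subsets of $\mathbb{R}$. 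Subtracting the same identity with $\omega$ replaced by $1$, the term $2\alpha\,n^{1/3}$ does not depend on $\omega$ and cancels, leaving
\[
\frac{d}{ds}\ln\frac{H_n(\mu_n;\alpha,\omega)}{H_n(\mu_n;\alpha,1)}=\sigma(s;\alpha,\omega)-\sigma(s;\alpha,1)+O\bigl(n^{-1/3}\bigr).
\]

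To pin down the integration constant I would show that, for each fixed large $n$, $\ln\bigl(H_n(\mu;\alpha,\omega)/H_n(\mu;\alpha,1)\bigr)\to0$ as $\mu\to+\infty$. Writing the Hankel determinants as determinants of moment matrices, the two moment matrices differ in each entry only by $-(1-\omega)\int_\mu^{+\infty}x^{i+j}e^{-x^2}|x-\mu|^{2\alpha}\,dx$, which is exponentially small in $\mu$ because of the Gaussian factor in \eqref{weight}, whereas $H_n(\mu;\alpha,1)$ itself decays only polynomially; hence the ratio tends to $1$. Setting $\mu_n^{(t)}=\sqrt{2n}+t/(\sqrt2\,n^{1/6})$, this yields the exact representation
\[
\ln\frac{H_n(\mu_n;\alpha,\omega)}{H_n(\mu_n;\alpha,1)}=-\int_{s}^{+\infty}\frac{d}{dt}\ln\frac{H_n(\mu_n^{(t)};\alpha,\omega)}{H_n(\mu_n^{(t)};\alpha,1)}\,dt .
\]

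It remains to pass the limit $n\to\infty$ through the integral. Splitting $\int_s^{+\infty}=\int_s^{T}+\int_{T}^{+\infty}$, on the compact piece the integrand converges to $-(\sigma(t;\alpha,\omega)-\sigma(t;\alpha,1))$ with error $O(n^{-1/3})$, so $\int_s^{T}\to\int_s^{T}\bigl(\sigma(t;\alpha,1)-\sigma(t;\alpha,\omega)\bigr)\,dt$. The \emph{main obstacle} is the tail: one needs a bound on $\frac{d}{dt}\ln\bigl(H_n(\mu_n^{(t)};\alpha,\omega)/H_n(\mu_n^{(t)};\alpha,1)\bigr)$ that is uniform in $n$ and integrable up to $t=+\infty$, with $\int_T^{+\infty}\to0$ as $T\to\infty$. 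I would get it by extending the Riemann--Hilbert analysis underlying Theorem~\ref{Asymptotic of Hankel determinant} to the regime in which $\mu_n$ has detached from the soft edge, where no edge parametrix is needed and a routine outer analysis produces an exponentially small difference between the two Hankel determinants; alternatively, for $\omega\in[0,1)$ one may use that $H_n$ is monotone in the weight, so the ratio lies in $[0,1]$, combined with the explicit edge asymptotics, and then extend to $\omega\in\mathbb{C}\setminus(-\infty,0)$ by analyticity in $\omega$. Letting $T\to+\infty$ and using \eqref{thm: sigma asy positive}, which shows $\sigma(t;\alpha,1)-\sigma(t;\alpha,\omega)=(\omega-1)\frac{\Gamma(2\alpha+1)}{2^{3+6\alpha}\pi}t^{-(3\alpha+1)}e^{-\frac43 t^{3/2}}\bigl(1+O(t^{-1/4})\bigr)$ (so the improper integral converges absolutely), we obtain
\[
\ln\lim_{n\to+\infty}\frac{H_n(\mu_n;\alpha,\omega)}{H_n(\mu_n;\alpha,1)}=\int_{s}^{+\infty}\bigl(\sigma(t;\alpha,1)-\sigma(t;\alpha,\omega)\bigr)\,dt ,
\]
and exponentiating gives \eqref{TW for pGUE}. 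Apart from the tail estimate, the proof is a change of variables together with termwise integration of the expansion in Theorem~\ref{Asymptotic of Hankel determinant}.
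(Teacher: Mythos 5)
Your route---change variables, cancel the $\omega$-independent leading term, integrate from $s$ to $+\infty$, and anchor the constant at infinity---is the natural deduction and precisely what the paper intends when it says Theorem~\ref{Asymptotic of Hankel determinant} ``can be applied directly''; the paper itself offers no further proof of the corollary. You also rightly identify the uniform tail estimate as the genuine obstacle, since the error term in \eqref{Hankel-Asy} is only controlled for $s$ in compact sets.

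However, neither of your sketched remedies closes that gap as stated. Monotonicity of the Hankel determinant in $\omega\in[0,1]$ shows that $\ln\bigl(H_n(\mu;\alpha,\omega)/H_n(\mu;\alpha,1)\bigr)\in(-\infty,0]$, but gives no pointwise $n$-uniform bound on the $t$-derivative of this quantity, so it does not by itself supply the dominating integrable function needed to interchange $n\to\infty$ with $\int_T^{+\infty}$; you would still need a two-sided control of $\ln\bigl(H_n(\mu_n^{(T)};\alpha,\omega)/H_n(\mu_n^{(T)};\alpha,1)\bigr)$ uniform in $n$, and extending from $\omega\in[0,1]$ to all $\omega\in\mathbb{C}\setminus(-\infty,0)$ by analyticity requires a compactness argument (normal families or a Vitali-type step) that should be said. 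The Riemann--Hilbert route is the correct remedy, but it is a genuinely new analysis, not a corollary of what is proved: one needs the error in \eqref{Hankel-Asy} to be uniform for $s\in[s_0,\infty)$, which in particular means handling the transitional scale $1\ll s\ll n^{2/3}$, where $\lambda=\mu_n^{(s)}/\sqrt{2n}$ is neither in the Painlev\'e double-scaling window nor bounded away from the edge, before one reaches the fully detached regime you mention. Finally, a small slip: $H_n(\mu;\alpha,1)$ does not decay polynomially as $\mu\to+\infty$ (for $\alpha>0$ the factor $|x-\mu|^{2\alpha}$ makes it grow like $\mu^{2\alpha n}$); the point you actually use---that the perturbation of the moment matrix is exponentially small relative to this polynomial scale, so the ratio tends to $1$---is nonetheless correct.
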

For $\alpha=0$, the Jimbo-Miwa-Okamoto $\sigma$-form  is related to the  Painlev\'{e} II equation by $\sigma'(s)=-q(s)^2$; see \eqref{sigma form-painleve}, we can accordingly derive from \eqref{TW for pGUE} and  \eqref{thm: sigma asy positive} the Tracy-Widom distribution of the largest eigenvalue for GUE \eqref{Tracy-Widom formula}-\eqref{HM-infinity} and for   the thinned  GUE \eqref{g-Tracy-Widom formula}-\eqref{Ablowitz-Segur-infinity },  which have previously been obtained in \cite{TracyWidom} and \cite{BogatskiyClaeysIts}, respectively.

We also have the following asymptotic approximations  of the recurrence coefficients in \eqref{recurrence relations},  in terms of  the Painlev\'e XXXIV transcendent.

\begin{thm}\label{Asymptotic of coefficients}
Let $\alpha>-\frac 12$, $\omega\in \mathbb{C}\setminus (-\infty,0)$ and $\mu_n=\sqrt{2n}+\frac s{\sqrt{2}n^{1/6}}$, we have
\begin{equation} \label{an-asymptotic}
  a_n=-\frac {u(s)}{2^{1/2}}n^{-1/6}+O(n^{-1/2})
\end{equation}
and \begin{equation} \label{bn-asymptotic}
   b_n=\frac{1}{\sqrt{2}}n^{1/2}-\frac {u(s)}{2^{3/2}}n^{-1/6}+O(n^{-1/2})
\end{equation}as $n\to\infty$,
where  $u(s)$  is the Painlev\'e XXXIV transcendent \eqref{pXXXV}, analytic for $s\in \mathbb{R}$ and subject to the boundary condition  \eqref{thm: painleve  positive infinity}.
\end{thm}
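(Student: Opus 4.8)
The plan is to read off the recurrence coefficients from the Riemann--Hilbert problem for the orthogonal polynomials, reusing the entire steepest-descent analysis already carried out to prove Theorem~\ref{Asymptotic of Hankel determinant}. Let $Y(z)$ be the $2\times2$ matrix solving the Fokas--Its--Kitaev RH problem for the weight \eqref{weight}, normalized so that $Y(z)=\left(I+Y_1z^{-1}+Y_2z^{-2}+O(z^{-3})\right)z^{n\sigma_3}$ as $z\to\infty$. The classical identities
\begin{equation*}
 b_n^2=(Y_1)_{12}(Y_1)_{21},\qquad a_n=\frac{(Y_2)_{12}}{(Y_1)_{12}}-(Y_1)_{22}
\end{equation*}
reduce the theorem to computing $Y_1$ and $Y_2$ asymptotically in $n$.

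Next I would unravel the chain of transformations $Y\mapsto T\mapsto S\mapsto R$ (the $g$-function normalization, opening of lenses, and subtraction of the global and local parametrices) from the proof of Theorem~\ref{Asymptotic of Hankel determinant}, and express $Y_1,Y_2$ in terms of: (i) the large-$z$ coefficients of the global parametrix $N(z)$, which reproduce the unperturbed Gaussian values $a_n\sim0$, $b_n\sim\frac{1}{\sqrt2}n^{1/2}$; and (ii) the coefficients in the expansion $R=I+R_1+R_2+\cdots$ of the residual RH problem, which carry the perturbation. Since $\mu_n$ lies within $O(n^{-2/3})$ of the soft edge $\sqrt{2n}$, the relevant local parametrix on a fixed disk around $\sqrt{2n}$ is built from the model RH problem whose connection to the Painlev\'e XXXIV equation \eqref{pXXXV} was established earlier; its expansion at infinity in the local variable $\zeta$ supplies $R_1$ with an explicit dependence on $u(s)$ and $u'(s)$ through the residue matrix (equivalently the Hamiltonian) of the model problem, together with a conformal-map Jacobian producing the powers $n^{-1/6}$.

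Then I would substitute these expansions into the identities for $a_n$ and $b_n$, keep the first correction, and simplify. The contributions from $N$ and from the $\zeta^{-1}$-coefficient of the local model combine so that only the single transcendent $u(s)$ survives at order $n^{-1/6}$; tracking the scaling $\mu_n=\sqrt{2n}+s/(\sqrt2\,n^{1/6})$ and the normalization of the conformal map pins down the constants $-1/\sqrt2$ in \eqref{an-asymptotic} and $-1/2^{3/2}$ in \eqref{bn-asymptotic}, while the next term in $R$ is $O(n^{-1/2})$, yielding the stated error. Uniformity for $s$ in compact subsets of $\mathbb R$, and indeed on all of $\mathbb R$ by matching the edge behaviour as $s\to\pm\infty$ with the Airy and shifted-edge parametrices, follows from the uniform smallness of $R-I$ already proved.

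The main obstacle is the bookkeeping in the middle and last steps: one must extract the $\zeta^{-1}$ term of the local parametrix at infinity precisely --- identifying exactly which entries of the model problem's residue matrix equal $u(s)$ and $u'(s)$ --- and then carry the several conformal-map and $g$-function Jacobians through without losing a constant, so that the coefficients in \eqref{an-asymptotic}--\eqref{bn-asymptotic} emerge exactly as claimed rather than up to an undetermined scalar. Everything else is a direct specialization of the machinery already assembled for Theorem~\ref{Asymptotic of Hankel determinant}.
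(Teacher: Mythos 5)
Your proposal is correct and follows essentially the same route as the paper: read off $a_n,b_n$ from $Y_1,Y_2$ via the Fokas--Its--Kitaev identities, trace back the chain $Y\to T\to S\to R$, and combine the expansion coefficients of the global parametrix $N$ with those of the residual $R$ (driven by the Painlev\'e XXXIV model problem at the soft edge). Note your $a_n=\frac{(Y_2)_{12}}{(Y_1)_{12}}-(Y_1)_{22}$ agrees with the paper's $a_n=(Y_1)_{11}+\frac{(Y_2)_{12}}{(Y_1)_{12}}$ because $\det Y\equiv1$ forces $(Y_1)_{11}=-(Y_1)_{22}$; and the $\zeta^{-1}$ residue of the model $\Psi$-problem actually carries $\hat{\sigma}(s)=\sigma(s)-s^2/4$ and $\hat{\sigma}'(s)=-u(s)$ (not $u$ and $u'$ directly) -- the Hamiltonian terms at relative order $n^{-1/3}$ cancel in the final combination and it is the $n^{-2/3}$ coefficient, proportional to $\sigma'(s)=-u(s)$, that survives, which is exactly how the paper extracts the $n^{-1/6}$ corrections. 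The remaining work is the constant-tracking you already flagged, and it goes through as you expect.
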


\subsubsection{Asymptotics of the correlation kernel}
Let $\left(
                                \begin{array}{c}
                                  \psi_1(x;s) \\
                                  \psi_2(x;s)\\
                                \end{array}
                              \right)$ be the  unique solution to the  linear differential equation
\begin{align}
\frac {\partial}{\partial x}\left(
                                \begin{array}{c}
                                  \psi_1(x;s) \\
                                  \psi_2(x;s)\\
                                \end{array}
                              \right)=&\left(
      \begin{array}{cc}
        1 & 0 \\
       -i(\sigma(s)-\frac {s^2}{4}) & 1 \\
      \end{array}
    \right)
  \left(
      \begin{array}{cc}
        \frac {u'}{2x} & i-i\frac {u}{x} \\
        -i x-i(u+s)-i\frac {(u')^2-(2\alpha)^2}{4u x} & -\frac {u'}{2x} \\
      \end{array}
    \right)\\\nonumber
    &\times \left(
      \begin{array}{cc}
        1 & 0 \\
       i(\sigma(s)-\frac {s^2}{4}) & 1 \\
      \end{array}
    \right)
\left(
                                \begin{array}{c}
                                  \psi_1(x;s) \\
                                  \psi_2(x;s)\\
                                \end{array}
                              \right),
\end{align}
subject to the boundary conditions
\begin{equation}\label{generalized airy kernel-infty}\left(
                                \begin{array}{c}
                                  \psi_1(x;s) \\
                                  \psi_2(x;s)\\
                                \end{array}
                              \right)
=\frac{\omega^{\frac12}}{\sqrt{2}}e^{-\left (\frac{2}{3}x^{\frac{3}{2}}+s\sqrt{x}\right )}\left(
   \begin{array}{c}
    x^{-\frac{1}{4}}+O(x^{-\frac {3}{4}})\\
     ix^{\frac{1}{4}}+O(x^{-\frac {1}{4}})\\
   \end{array}
 \right)
 ~~\mbox{as}~x\rightarrow+\infty;\end{equation}
 and
\begin{equation}\label{generalized airy kernel- negative infinity}\left(
                                \begin{array}{c}
                                  \psi_1(x;s) \\
                                  \psi_2(x;s)\\
                                \end{array}
                              \right)
=\sqrt{2}\left(
   \begin{array}{c}
    |x|^{-\frac{1}{4}}\cos(\frac{2}{3}|x|^{\frac{3}{2}}-t\sqrt{|x|}-\alpha\pi -\pi/4)+O(|x|^{-\frac 34})\\
     -i |x|^{\frac{1}{4}}\sin(\frac{2}{3}|x|^{\frac{3}{2}}-t\sqrt{|x|}-\alpha\pi -\pi/4)+O(|x|^{-\frac 14})\\
   \end{array}
 \right)
 ~~\mbox{as}~x\rightarrow-\infty.\end{equation}
The $\Psi$-kernel is defined as
\begin{equation}\label{generalized airy kernel-integrable form}K_{\Psi}(x,y;s)=\frac{\psi_2(x;s)\psi_1(y;s)-\psi_1(x;s)\psi_2(y;s)}{2\pi i(x-y)}.\end{equation}
Then, our result on the  correlation kernel can be stated as follows.
\begin{thm}\label{Asymptotic of kernel}
Let $\alpha>-\frac 12$, $\omega\in \mathbb{C}\setminus (-\infty,0)$ and $\mu_n=\sqrt{2n}+\frac s{\sqrt{2}n^{1/6}}$, we have the limit  of the correlation kernel
\begin{equation}\label{kernel-asymptotic}\lim_{n\to\infty}\frac 1{\sqrt{2}n^{1/6}}K_n(\mu_n+\frac {v_1}{\sqrt{2}n^{1/6}},\mu_n+\frac {v_2}{\sqrt{2}n^{1/6}})=K_{\Psi}(v_1,v_2)
\end{equation}
for $v_1,v_2\in \mathbb{R}\setminus \{0\}$.
\end{thm}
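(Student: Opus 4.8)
The plan is to read the kernel off the Riemann--Hilbert (RH) analysis for the orthogonal polynomials with weight \eqref{weight} that was already set up to prove Theorems \ref{Asymptotic of Hankel determinant} and \ref{Asymptotic of coefficients}. Write $Y(z)$ for the $2\times2$ solution of the Fokas--Its--Kitaev RH problem, normalized by $Y(z)=(I+O(1/z))z^{n\sigma_3}$ at infinity; by the Christoffel--Darboux formula the weighted kernel \eqref{Weighted OP kernel} is
\begin{equation}
K_n(x,y)=\frac{\sqrt{w(x)w(y)}}{2\pi i(x-y)}\begin{pmatrix}0&1\end{pmatrix}Y^{-1}(y)Y(x)\begin{pmatrix}1\\0\end{pmatrix}.
\end{equation}
I would then substitute, in turn, the chain of transformations $Y\mapsto T\mapsto S\mapsto R$ from the asymptotic analysis: normalization at infinity by the $g$-function of the equilibrium measure for the Gaussian weight, opening of lenses along the support $[-\sqrt{2n},\sqrt{2n}]$, the global parametrix $N$ built from the relevant Szeg\H{o} function, the Airy parametrix at the left soft edge $-\sqrt{2n}$, and --- the crucial ingredient --- the local parametrix in a disk $D_{\mu_n}$ around $\mu_n$, which because $\mu_n\to\sqrt{2n}$ must simultaneously handle the right soft edge $\sqrt{2n}$ and the Fisher--Hartwig root-and-jump singularity, merging the Airy behaviour with the singularity data into a single model problem.

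That local parametrix is the heart of the matter. I would recall that it has the form $P(z)=E_n(z)\,\Psi\big(\zeta_n(z);s\big)\,D_n(z)$, where $\zeta_n$ is the (rescaled) conformal map normalized so that $\zeta_n\big(\mu_n+\tfrac{v}{\sqrt2\,n^{1/6}}\big)\to v$, $D_n$ collects the diagonal exponential and Szeg\H{o} factors, $E_n$ is the analytic prefactor making $P$ match $N$ on $\partial D_{\mu_n}$, and $\Psi(\zeta;s)$ is exactly the solution of the model RH problem whose isomonodromic $x$-equation is the Lax pair displayed just before the theorem; the vector $\big(\psi_1(\zeta;s),\psi_2(\zeta;s)\big)^{T}$ in \eqref{generalized airy kernel-infty}--\eqref{generalized airy kernel- negative infinity} is the column of $\Psi$ that is recessive at $+\infty$, analytic for real $\zeta$. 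Feeding $x=\mu_n+\tfrac{v_1}{\sqrt2\,n^{1/6}}$ and $y=\mu_n+\tfrac{v_2}{\sqrt2\,n^{1/6}}$ into the unravelled formula for $K_n$ inside $D_{\mu_n}$, the contributions of $N$, of the $g$-function, and of $D_n$ either cancel between the two arguments of the kernel or are absorbed into the conjugating triangular matrices $\big(\begin{smallmatrix}1&0\\ \pm i(\sigma(s)-s^2/4)&1\end{smallmatrix}\big)$ already present in the definition \eqref{generalized airy kernel-integrable form}; the Jacobian $\tfrac1{\sqrt2\,n^{1/6}}$ of the change of variables is exactly the prefactor in \eqref{kernel-asymptotic}. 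What survives in the limit, after using $\zeta_n(x)\to v_1$ and $\zeta_n(y)\to v_2$, is precisely $K_{\Psi}(v_1,v_2)$.

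For the error I would invoke the small-norm estimate $R(z)=I+O(n^{-1/3})$ uniform on $\mathbb{C}$ coming from the matching of $P$ with $N$ on $\partial D_{\mu_n}$, together with the analyticity of $\Psi(\zeta;s)$ in $s\in\mathbb{R}$ guaranteed by Theorems \ref{Asymptotic of sigma form} and \ref{Asymptotic of Painleve} --- the Painlev\'e XXXIV transcendent $u(s)$ and the $\sigma$-function have no spontaneous pole on $\mathbb{R}$, so the model solution and hence the parametrix are well defined for every real $s$. This yields \eqref{kernel-asymptotic} with an $O(n^{-1/3})$ correction, uniform for $v_1,v_2$ in compact subsets of $\mathbb{R}\setminus\{0\}$.

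The main obstacle I anticipate is the construction and control of the local parametrix at $\mu_n$: one must establish solvability of the $\Psi$ model RH problem for all $s\in\mathbb{R}$ and all admissible $(\alpha,\omega)$ (via a vanishing lemma, consistent with the Painlev\'e XXXIV regularity in Theorem \ref{Asymptotic of Painleve}), and verify that its large-$\zeta$ asymptotics match the outer parametrix $N$ to leading order after conjugation by $E_n$, \emph{uniformly} as $s$ ranges over a compact interval --- this merged double-scaling parametrix is a genuinely new object, neither the Airy nor the confluent-hypergeometric one. A secondary technical point is that $v_1,v_2\neq0$ cannot be relaxed: at $v_i=0$ the arguments of $K_n$ collide with the Fisher--Hartwig point and the local coordinate degenerates, and covering a neighbourhood of the origin would require the singular behaviour of $\psi_1,\psi_2$ at $\zeta=0$, which is not needed for the stated result.
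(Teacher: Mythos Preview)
Your proposal is correct and follows essentially the same route as the paper's proof in Section~\ref{sec:4}: express $K_n$ via Christoffel--Darboux in terms of $Y$, trace back through $Y\to T\to S\to R$ (after the rescaling $z\mapsto\sqrt{2n}\,z$ so that the relevant disk is $U(1,r)\ni\lambda=1+\tfrac{s}{2n^{2/3}}$), insert the local parametrix $P^{(1)}(z)=E(z)\,\Psi\!\big(n^{2/3}(f(z)-f(\lambda));\,n^{2/3}f(\lambda)\big)\,e^{n\phi(z)\sigma_3}(z-\lambda)^{-\alpha\sigma_3}$, observe that the analytic prefactor $R(z)E(z)$ cancels between the two arguments up to $O(x-y)$, and use $n^{2/3}(f(\lambda+\tfrac{v}{2n^{2/3}})-f(\lambda))\to v$, $n^{2/3}f(\lambda)\to s$. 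Your remark that the lower-triangular conjugation by $\big(\begin{smallmatrix}1&0\\ \mp i(\sigma(s)-s^2/4)&1\end{smallmatrix}\big)$ is harmless is exactly right, since the bilinear form $\psi_2(x)\psi_1(y)-\psi_1(x)\psi_2(y)$ is invariant under left multiplication by any $SL_2$ matrix independent of $x,y$; in the paper the vector $(\psi_1,\psi_2)^T$ is simply read off as the appropriate column of $\Psi$ (see \eqref{psi-kernel-1}--\eqref{psi-kernel-2}), which is equivalent.
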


The rest of the  paper is organized as follows. In Section \ref{sec:2}, we introduce the Riemann-Hilbert problem (RH problem, for short) for the  Painlev\'{e} XXXIV equation. The existence of a unique solution to the RH problem is proved and thus the related Painlev\'{e} XXXIV tanscendents and $\sigma$-form of the Painlev\'{e} II equation are pole free on the real axis. The asymptotics of the Painlev\'e XXXIV functions  are also discussed in this section.  In Section \ref{sec:3}, we state the RH problem for the orthogonal polynomials associated with the weight \eqref{pGUE}. We show that the Hankel determinant of finite size satisfies the  $\sigma$-form of the Painlev\'e IV equation. A differential identity for the Hankel determinant is also derived.  The large-$n$ asymptotic analysis of the RH problem for the orthogonal polynomials is then carried out by applying  the Deift-Zhou method. In Section \ref{sec:4}, using the asymptotic results of the RH problem for the orthogonal polynomials obtained in the previous section, we obtain  the asymptotics of the Hankel determinant   and the recurrence coefficients, expressed in terms of the $\sigma$-form of the Painlev\'{e} II equation and the Painlev\'e XXXIV tanscendents. An asymptotic formula of the correlation kernel are also proved in this section. These furnish proofs of Theorems \ref{Asymptotic of Hankel determinant}, \ref{Asymptotic of coefficients} and \ref{Asymptotic of kernel}. The proofs of the rest theorems, namely Theorems \ref{Asymptotic of sigma form} and \ref{Asymptotic of Painleve},  are left to the Appendix. In the Appendix, the asymptotics of the Painlev\'{e} XXXIV tanscendents and $\sigma$-form of the Painlev\'{e} II equation as $s\to +\infty$ are derived  by analyzing  the RH problem  related to the Painlev\'e XXXIV equation.


 \section{Riemann-Hilbert problem for the  Painlev\'{e} XXXIV equation}
 \indent\setcounter{section} {2}
\setcounter{equation} {0} \label{sec:2}
The   matrix-valued RH  problem for  the Painlev\'e XXXIV equation is as follows; see \cite{FokasItsKapaevNovokshenov, ItsKuijlaarsOstensson2008,ItsKuijlaarsOstensson2009}.

\begin{figure}[ht]
\begin{center}
\resizebox*{6cm}{!}{\includegraphics{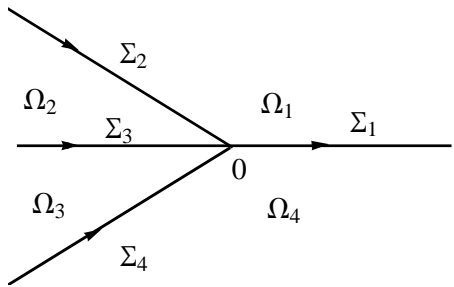}}
\caption{Regions and contours for $\Psi$}
\label{fig-P34} \end{center}
\end{figure}

\begin{description}
  \item(a)~~  $\Psi(\zeta;s)$ ($\Psi(\zeta)$, for short) is analytic in
  $\mathbb{C}\backslash \bigcup_{j=1}^4\Sigma_j$; see Fig.\;\ref{fig-P34} for the jump contours;

  \item(b)~~  $\Psi(\zeta)$  satisfies the jump condition
  \begin{gather}\label{Psi-jump}
  \Psi_+ (\zeta)=\Psi_- (\zeta)
  \left\{ \begin{array}{ll}
            \left(
                               \begin{array}{cc}
                                 1 & \omega \\
                                 0 & 1 \\
                                 \end{array}
                             \right), & \zeta \in {\Sigma}_1, \\[.4cm]
           \left(
                               \begin{array}{cc}
                                 1 & 0 \\
                                 e^{2\alpha\pi i} & 1 \\
                                 \end{array}
                             \right), & \zeta \in {\Sigma}_2,\\[.4cm]
           \left(
                               \begin{array}{cc}
                                 0 & 1 \\
                                 -1 & 0 \\
                                 \end{array}
                             \right),&
                                           \zeta \in {\Sigma}_3,\\  [.4cm]
            \left(
                               \begin{array}{cc}
                                 1 & 0 \\
                                 e^{-2\alpha\pi i} & 1 \\
                                 \end{array}
                             \right),  & \zeta \in \Sigma_4;
          \end{array}
    \right .
  \end{gather}
\item(c)~~     As $\zeta\rightarrow \infty$
\begin{equation}\label{psi-infinity}
 \begin{array}{rl}
   \Psi(\zeta)& =
\zeta^{-\frac{1}{4}\sigma_3}M_0
\left [I+\frac{\hat{\sigma}(s)(\sigma_3-i\sigma_1) }{2\sqrt{\zeta}}-
  \frac{(\hat{\sigma}'(s) +\hat{\sigma}^2(s))\sigma_2}{2\zeta}+O\left (\zeta^{-\frac{3}{2}}\right )\right ]
e^{-\vartheta \sigma_3} \\[.3cm]
  &=     \left [I+\frac 1{2\zeta}
     \left(\begin{array}{cc}\hat{ \sigma}'(s) +\hat{\sigma}^2(s) & -2i\hat{\sigma}(s)\\ * & -(\hat{\sigma}'(s) +\hat{\sigma}^2(s))\end{array} \right)+O\left (  \zeta^{-2}\right )\right]
   \zeta^{-\frac{1}{4}\sigma_3}M_0  e^{-\vartheta \sigma_3} ,
 \end{array}
\end{equation}
  where $M_0= \frac{I+i\sigma_1}{\sqrt{2}}$,     $\vartheta=\vartheta(\zeta,s)=\frac{2}{3}\zeta^{3/2}+s\zeta^{1/2}$,
  and $\sigma_j$ are the Pauli matrices, namely,
  \begin{equation*}
\sigma_1=\left(
                   \begin{array}{cc}
                     0 &1 \\
                    1 & 0 \\
                   \end{array}
                 \right),   ~~\sigma_2=\left(
                   \begin{array}{cc}
                     0 & -i \\
                    i & 0 \\
                   \end{array}
                 \right),~~\mbox{and}~\sigma_3=\left(
                   \begin{array}{cc}
                     1 & 0 \\
                    0 & -1 \\
                   \end{array}
                 \right);
 \end{equation*}
\item(d)~~ As $\zeta\rightarrow 0$, $\zeta \in \Omega_j$ for $j=1,2,3,4$,
\begin{equation}\label{Psi0-1}
 \Psi(\zeta)=\Psi_0(\zeta)\zeta^{\alpha \sigma_3} E_j~~\mbox{if}~2\alpha\notin   \mathbb{N}_0,
 \end{equation}or
 \begin{equation}\label{Psi0-2}
\Psi(\zeta)= \Psi_0(\zeta)\zeta^{\alpha \sigma_3}\left (I+\frac k 2(\ln \zeta +\pi)\sigma_+\right ) E_j~~\mbox{if}~2\alpha\in   \mathbb{N}_0,
\end{equation}
where  $\sigma_+=\left(
                   \begin{array}{cc}
                     0 & 1 \\
                     0 & 0 \\
                   \end{array}
                 \right)$, $\Psi_0(\zeta)$ is analytic at $\zeta=0$, and
$ E_j$ are certain constant matrices; see  \cite[(3.21)]{ItsKuijlaarsOstensson2008}.
\end{description}

\begin{pro}
For $\alpha>-\frac 12$ and $\omega\in \mathbb{C}\setminus (-\infty,0)$, there exits a unique solution to the above Riemann-Hilbert problem for $\Psi(\zeta;s)$.
\end{pro}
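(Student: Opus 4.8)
\noindent\emph{Strategy and uniqueness.}
The plan is to obtain uniqueness from a Liouville-type argument and existence from a vanishing lemma combined with Fredholm theory, as is standard for the Painlev\'e~XXXIV and related model RH problems \cite{FokasItsKapaevNovokshenov,ItsKuijlaarsOstensson2008,ItsKuijlaarsOstensson2009}; the one genuinely new point is that $\omega$ is complex, so one must check that $\omega\in\mathbb{C}\setminus(-\infty,0)$ is exactly the hypothesis that keeps the argument alive. For uniqueness: every jump matrix in \eqref{Psi-jump} has determinant~$1$, so $\det\Psi$ has no jump, and together with \eqref{psi-infinity} and \eqref{Psi0-1}--\eqref{Psi0-2} this makes $\det\Psi$ single-valued on $\mathbb{C}\setminus\{0\}$, bounded near $\zeta=0$, and equal to $1$ at infinity; hence $\det\Psi\equiv 1$ and $\Psi$ is everywhere invertible. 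If $\Psi$ and $\tilde{\Psi}$ are two solutions, then $R:=\Psi\tilde{\Psi}^{-1}$ is analytic across each $\Sigma_j$ because the jump factors cancel; near $\zeta=0$ the common factors $\zeta^{\alpha\sigma_3}$, the triangular logarithmic term in the resonant case $2\alpha\in\mathbb{N}_0$, and the matrices $E_j$ cancel, so $R$ is analytic at the origin as well; and $R(\zeta)\to I$ at infinity by \eqref{psi-infinity}. By Liouville's theorem, $R\equiv I$.

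\smallskip\noindent\emph{Reduction to a vanishing lemma.}
Put $N(\zeta)=\Psi(\zeta)\,e^{\vartheta(\zeta,s)\sigma_3}M_0^{-1}\zeta^{\sigma_3/4}$, which normalises the problem at infinity, $N(\zeta)\to I$: the off-diagonal jump on $\Sigma_3$ is absorbed into the branch cut of $\zeta^{\sigma_3/4}$, so $N$ carries jumps only on $\Sigma_1,\Sigma_2,\Sigma_4$, and away from the origin these differ from $I$ by exponentially small amounts (on $\Sigma_2,\Sigma_4$ since $\Re\vartheta>0$ there, on $\Sigma_1$ through the factor $\omega e^{-2\vartheta}$). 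Taking account of the prescribed singular behaviour \eqref{Psi0-1}--\eqref{Psi0-2} at $\zeta=0$ — where $\alpha>-\tfrac12$ provides the integrability needed in the local analysis and the logarithmic case $2\alpha\in\mathbb{N}_0$ is treated separately — one verifies that $N$ solves a regular RH problem whose jump matrix lies in $L^2\cap L^\infty$ on the contour, hence is equivalent to a singular integral equation $(I-C_w)\mu=f$ with $C_w$ a Cauchy operator that is Fredholm of index zero. Existence of $\Psi$ then reduces to the vanishing lemma: the homogeneous problem (same jumps, but $N(\zeta)\to 0$ at infinity) has only the trivial solution.

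\smallskip\noindent\emph{The vanishing lemma.}
Suppose $N_h$ solves the homogeneous problem; then $N_h(\zeta)=C\zeta^{-1/2}+O(\zeta^{-1})$ at infinity for a constant matrix $C$, by \eqref{psi-infinity}. Using that the configuration is invariant under $\zeta\mapsto\bar\zeta$ (with $\Sigma_4=\overline{\Sigma_2}$ and $\Sigma_1\subset\mathbb{R}$) and that the jump matrices satisfy $v_4=\overline{v_2}$ for real $\alpha$, form $H(\zeta)=N_h(\zeta)\,[N_h(\bar\zeta)]^{\dagger}$ for $\Im\zeta>0$: it is analytic off $\Sigma_2$, extends analytically across $\Sigma_2$ because $v_4=\overline{v_2}$, and satisfies $H(\zeta)=CC^{\dagger}\zeta^{-1}+O(\zeta^{-3/2})$ at infinity with $CC^{\dagger}$ Hermitian and nonnegative. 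Cauchy's theorem on the upper half-plane gives $\int_{\mathbb{R}}\bigl(H_+(x)+H_+(x)^{\dagger}\bigr)\,dx=0$, that is, $\int_{\mathbb{R}}N_{h,-}(x)\,\mathrm{Herm}(v(x))\,N_{h,-}(x)^{\dagger}\,dx=0$, where $v$ is the normalised jump of $N$ across $\mathbb{R}$ — equal to $I$ on the arc of $\mathbb{R}$ where $N$ is analytic, and to the conjugated $\left(\begin{smallmatrix}1&\omega\\0&1\end{smallmatrix}\right)$-jump on $\Sigma_1$. The heart of the matter is that $\mathrm{Herm}(v(x))$ is positive semidefinite for every $x\in\mathbb{R}$, including in the neighbourhood of $\zeta=0$ where the $\zeta^{\pm\alpha}$-singular data and the matrices $E_j$ come in, and that this holds precisely when $\omega\in\mathbb{C}\setminus(-\infty,0)$, equivalently $\omega=e^{-2\pi i\beta}$ with $\Re\beta\in(-\tfrac12,\tfrac12)$, or $\omega=0$. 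Granting this, the integrand above is a nonnegative Hermitian matrix with vanishing integral, hence is identically zero; since $v=I$ on an interval this forces $N_{h,-}N_{h,-}^{\dagger}=0$ there, so $N_h$ vanishes on that interval and, by analyticity and the jump relations, $N_h\equiv 0$ throughout. The Fredholm alternative then makes the inhomogeneous equation uniquely solvable, and undoing the transformation produces $\Psi(\zeta;s)$, whose properties (c)--(d) are read off from the construction.

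\smallskip\noindent\emph{Main obstacle.}
The delicate step, and the one truly using $\omega\in\mathbb{C}\setminus(-\infty,0)$, is the sign analysis above: one must pin down $\mathrm{Herm}(v(x))$ near $\zeta=0$, where the prescribed $\zeta^{\pm\alpha}$ (and, for $2\alpha\in\mathbb{N}_0$, logarithmic) singular factors together with the $\omega$-dependent connection matrices $E_j$ of \eqref{Psi0-1}--\eqref{Psi0-2} all contribute, and verify that $\omega\in\mathbb{C}\setminus(-\infty,0)$ is exactly when it is non-negative — this must cover complex $\omega$ with negative real part, whereas a negative real $\omega$ is a resonance at which the construction breaks down. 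Checking that the $E_j$ are well defined and that the local model at $\zeta=0$ behaves well (here $\alpha>-\tfrac12$ enters) is part of the same analysis. As a shortcut, for $\omega\in\mathbb{C}\setminus(-\infty,0)$ one can instead reduce to the Painlev\'e~XXXIV RH problem analysed in \cite{ItsKuijlaarsOstensson2009} and cite the solvability proved there.
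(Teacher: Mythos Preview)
Your approach --- uniqueness by a Liouville argument, existence via a vanishing lemma combined with Fredholm theory --- is exactly the paper's. The paper's own proof is in fact a single sentence that simply cites \cite{ItsKuijlaarsOstensson2009} and \cite{XuZhao} for the vanishing lemma, without any further detail, so your sketch goes well beyond what the paper provides. One correction to your closing ``shortcut'': \cite{ItsKuijlaarsOstensson2009} treats only the case $\omega=0$ (pure root-type singularity), so invoking it alone does not cover general $\omega\in\mathbb{C}\setminus(-\infty,0)$; the paper therefore also cites \cite{XuZhao}, where the vanishing lemma is proved with the jump parameter $\omega$ present and the hypothesis $\omega\notin(-\infty,0)$ is actually used.
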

\begin{proof}
The  existence of the solution   $\Psi(\zeta)$ is obtained by proving a vanishing lemma; see
  \cite{ItsKuijlaarsOstensson2009} and \cite{XuZhao} for a proof of the vanishing lemma.
\end{proof}

Now we observe that the entries of $\Psi(\zeta)$  are closely related with equations \eqref{sigma form} and  \eqref{pXXXV} as follows:

\begin{pro}\label{Sigma form and psi}  Let
\begin{equation}\label{Sigma form and psi-1}
\sigma(s)=i \lim_{\zeta\rightarrow \infty}\zeta \left(\Psi(\zeta;s)e^{\vartheta\sigma_3}\frac{I-i\sigma_1}{\sqrt{2}}\zeta^{\frac{1}{4}\sigma_3} \right)_{12}+\frac {s^2}{4},
\end{equation}
  then $\sigma(s)$ satisfies the Jimbo-Miwa-Okamoto $\sigma$-form of the Painlev\'{e} II equation \eqref{sigma form}. Also,
\begin{equation*}
u(s)=-
 \sigma'(s)
 \end{equation*}
 satisfies the Painlev\'e XXXIV equation \eqref{pXXXV}.
 Moreover, for $\alpha>-\frac 12$ and $\omega\in \mathbb{C}\setminus (-\infty,0)$, the scalar functions $\sigma(s)$ and $u(s)$ are analytic for $s\in \mathbb{R}$.
\end{pro}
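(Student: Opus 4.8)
The plan is to extract the ODE satisfied by $\sigma(s)$ directly from the RH problem for $\Psi(\zeta;s)$, following the standard Flaschka–Newell / Jimbo–Miwa isomonodromy machinery. First I would introduce the function $Y(\zeta;s) = \Psi(\zeta;s)\,e^{\vartheta(\zeta,s)\sigma_3}$ on each sector, so that $Y$ has constant jumps (independent of $s$), and observe that the $\zeta\to\infty$ expansion \eqref{psi-infinity} becomes, after conjugating by $M_0$ and $\zeta^{-\frac14\sigma_3}$, an expansion of the form $Y(\zeta;s)=\zeta^{-\frac14\sigma_3}M_0\bigl(I+\Psi_1(s)\zeta^{-1/2}+\Psi_2(s)\zeta^{-1}+O(\zeta^{-3/2})\bigr)$ with $\Psi_1 = \frac12\hat\sigma(s)(\sigma_3-i\sigma_1)$ and $\Psi_2=-\frac12(\hat\sigma'+\hat\sigma^2)\sigma_2$. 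Here the quantity in \eqref{Sigma form and psi-1} is precisely $\sigma(s)=i(\Psi_1)_{12}+\frac{s^2}{4}$ read off from the $(1,2)$ entry of the $\zeta^{-1/2}$-coefficient after the indicated conjugation; I would first carry out that bookkeeping to confirm $\hat\sigma(s) = \sigma(s) - \frac{s^2}{4}$ up to the explicit normalization, so that $\hat\sigma$ is an affine shift of $\sigma$.

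Next, since the jump matrices for $Y$ are piecewise constant, both $A(\zeta;s):=\partial_\zeta Y\cdot Y^{-1}$ and $B(\zeta;s):=\partial_s Y\cdot Y^{-1}$ are rational in $\zeta$ (entire, with controlled behavior at $\zeta=0$ coming from condition (d) and at $\zeta=\infty$ from \eqref{psi-infinity}); this gives a Lax pair. Inserting the asymptotic expansion at $\infty$ I would determine $A$ and $B$ explicitly: $A$ has a double pole structure at $\zeta=\infty$ reflecting $\vartheta=\frac23\zeta^{3/2}+s\zeta^{1/2}$ and a simple pole at $\zeta=0$ with residue conjugate to $\alpha\sigma_3$, while $B$ is linear in $\zeta$ with a $\zeta^{-1}$ term. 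The coefficients of these rational functions are built from $\hat\sigma$, $\hat\sigma'$, and the sub-leading data $\Psi_2$; equating mixed second derivatives (the zero-curvature condition $\partial_s A - \partial_\zeta B + [A,B]=0$) — or more economically, comparing powers of $\zeta$ in $\partial_s Y = B Y$ order by order against the expansion at infinity — yields a closed system of ODEs in $s$ for the entries of $\Psi_1,\Psi_2$. Eliminating everything except $\hat\sigma$ gives a second-order nonlinear ODE which, after the shift $\sigma=\hat\sigma+\frac{s^2}{4}$ and simplification, is exactly \eqref{sigma form}; the relation $u=-\sigma'$ and \eqref{pXXXV} then follow either from the same elimination (keeping $(\Psi_1)_{11}$-type data, which is $\propto u$) or by direct substitution of $u=-\sigma'$ into \eqref{sigma form} and differentiating. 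Since all of this is the known isomonodromic correspondence for the $\Psi$-RH problem of Its–Kuijlaars–Östensson, I would cite \cite{ItsKuijlaarsOstensson2008, ItsKuijlaarsOstensson2009, FokasItsKapaevNovokshenov} for the derivation of the ODEs and only highlight how the present monodromy data $(\alpha,\omega)$ enter.

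For analyticity of $\sigma(s)$ and $u(s)$ on the real line, the argument is: by the preceding Proposition, for every $\alpha>-\frac12$ and $\omega\in\mathbb C\setminus(-\infty,0)$ and every $s\in\mathbb R$ the RH problem for $\Psi(\zeta;s)$ has a unique solution (proved via the vanishing lemma). Standard RH theory then gives that $\Psi(\zeta;s)$, hence the expansion coefficients $\hat\sigma(s)$ and $\Psi_2(s)$, depend real-analytically — in fact analytically in a complex neighborhood of each real $s_0$ — on the parameter $s$: one shows solvability persists for $s$ in a complex neighborhood because the singular integral operator $I - C_{w(s)}$ is invertible at $s_0$ and depends analytically on $s$ (the jump matrices are constant, only the conjugating factor $e^{\vartheta(\zeta,s)\sigma_3}$ varies analytically), and then $\partial_s$ of the solution is recovered from the same invertible operator. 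Therefore $\sigma(s)=i(\Psi_1(s))_{12}+\frac{s^2}{4}$ and $u(s)=-\sigma'(s)$ are analytic for $s\in\mathbb R$; equivalently, the Painlevé XXXIV transcendent and the $\sigma$-PII function are pole-free on $\mathbb R$. The main obstacle is bookkeeping: getting the normalizations in \eqref{psi-infinity} and the conjugations in \eqref{Sigma form and psi-1} exactly right so that the elimination reproduces \eqref{sigma form} with the precise constant $(2\alpha)^2$ and no spurious factors — the analyticity half is essentially a soft consequence of the vanishing lemma plus continuity of RH solutions in parameters.
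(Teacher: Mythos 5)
Your overall strategy is the same as the paper's: the ODEs come from the isomonodromic Lax pair associated with the $\Psi$-RH problem, which the paper simply quotes from Its--Kuijlaars--\"Ostensson (after passing to $\widehat\Psi = \bigl(\begin{smallmatrix}1 & 0 \\ i(\sigma - s^2/4) & 1 \end{smallmatrix}\bigr)\Psi$), and the analyticity on $\mathbb{R}$ is a consequence of the vanishing lemma guaranteeing unique solvability of the RH problem for every real $s$, together with standard analytic dependence of the solution on $s$. So the route you take is essentially the paper's route, just narrated in more detail.

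One slip worth flagging: you say you would pass to $Y(\zeta;s)=\Psi(\zeta;s)\,e^{\vartheta(\zeta,s)\sigma_3}$ ``so that $Y$ has constant jumps.'' This is backwards. In the $\Psi$-RH problem it is $\Psi$ itself that already has piecewise constant, $(\zeta,s)$-independent jumps (see \eqref{Psi-jump}); the factor $e^{\vartheta\sigma_3}$ lives only in the asymptotic normalization at infinity. Multiplying on the right by $e^{\vartheta\sigma_3}$ would actually introduce $\zeta$- and $s$-dependent jumps on those rays where $\vartheta$ is analytic (e.g.\ $\Sigma_1$ would acquire the jump $\bigl(\begin{smallmatrix}1 & \omega e^{2\vartheta} \\ 0 & 1 \end{smallmatrix}\bigr)$), so $Y$ would have worse jumps, not better. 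The correct observation is simply that because the jumps of $\Psi$ are constant, the logarithmic derivatives $A=\Psi_\zeta\Psi^{-1}$ and $B=\Psi_s\Psi^{-1}$ have no jump across the rays, hence are rational in $\zeta$ (with pole structure dictated by conditions (c) and (d)); this is exactly what produces the Lax pair quoted in the text. With that fix your plan goes through: inserting the large-$\zeta$ expansion \eqref{psi-infinity} determines the rational coefficients of $A$ and $B$ in terms of $\hat\sigma$, $\hat\sigma'$ and the subleading data, the compatibility condition yields a closed ODE, and the shift $\sigma=\hat\sigma+\tfrac{s^2}{4}$ together with $u=-\sigma'$ gives \eqref{sigma form} and \eqref{pXXXV}. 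The analyticity argument is also fine once you recognize that the $s$-dependence enters through the normalization at infinity (or, after a local gauge $\Psi\mapsto\Psi e^{\vartheta\sigma_3}$ near $\infty$, through $s$-analytic jumps); solvability at $s_0\in\mathbb{R}$ plus the invertibility of $I-C_{w(s)}$ in a complex neighborhood of $s_0$ then gives local analyticity of $\hat\sigma(s)$ and hence of $\sigma$ and $u$ — equivalently, the relevant P34 and $\sigma$-PII solutions are pole-free on the real line.
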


Let
$\widehat{\Psi}(\zeta;s)=\left(
                  \begin{array}{cc}
                    1 & 0\\
                    i(\sigma(s)-\frac {s^2}{4}) & 1 \\
                  \end{array}
                \right)\Psi(\zeta;s)
$, then $\widehat{\Psi}(\zeta;s)$  satisfies the following  Lax pair system of differential equations
\begin{align*}
&\frac {\partial}{\partial\zeta}\widehat{\Psi}(\zeta;s)=\left(
      \begin{array}{cc}
        \frac {u'}{2\zeta} & i-i\frac {u}{\zeta} \\
        -i \zeta-i(u+s)-i\frac {(u')^2-(2\alpha)^2}{4u \zeta} & -\frac {u'}{2\zeta} \\
      \end{array}
    \right)\widehat{\Psi}(\zeta;s), \nonumber \\
&\frac {\partial}{\partial s}\widehat{\Psi}(\zeta;s)=\left(
      \begin{array}{cc}
        0 & i\\
        -i\zeta-2i(u+\frac s2) & 0 \\
      \end{array}
    \right)\widehat{\Psi}(\zeta;s);
\end{align*}
see \cite[Lemma 3.2]{ItsKuijlaarsOstensson2008}.

It is well known that  the Painlev\'e XXXIV equation \eqref{pXXXV}  is related to the   Painlev\'{e} II equation
\begin{equation}\label{PII}
 q''=sq+2q^3-(2\alpha+\frac {1}{2})\end{equation}
 by the formulas
 \begin{equation}\label{PII-xxxiv}
 u(s)=2^{\frac {1}{3}} U(-2^{-\frac {1}{3}}s),\quad U(s)=q'+q^2+\frac {s}{2};\end{equation}
 see \cite[(5.0.2)(5.0.55)]{FokasItsKapaevNovokshenov}.
Moreover,  the solution $u(s)$ associated with the model RH problem, as stated in Proposition \ref{Sigma form and psi}, is related to a family of
special solutions to the second Painlev\'{e} equation \eqref{PII} corresponding to the Stokes multipliers $\{s_1= -e^{-2\pi i\alpha}, s_2=\omega, s_3=-e^{2\pi i \alpha}\}$; cf. \cite[ 11.3]{FokasItsKapaevNovokshenov}, see also \cite{ItsKuijlaarsOstensson2008,XuZhao} for the relation between the model RH problem for $\Psi$ and the RH problem for the
second Painlev\'{e} equation. Thus we may extract  the asymptotics  of $u(s)$,     knowing  that  of $q(s)$.

It is also known that the solutions $q(s)$ corresponding to the Stokes multipliers $\{s_1= -e^{-2\pi i\alpha}, s_2=\omega, s_3=-e^{2\pi i \alpha}\}$
is uniquely determined by the following asymptotic behavior
\begin{equation}\label{PII-asy}
 q(s)=\sqrt{\frac {-s}{2}}\sum_{k=0}^nb_k(-s)^{\frac {3}{2}k}+O\left (s^{-\frac {3}{2}n-1}\right )+c_{+}(-s)^{-3\alpha-1}e^{-\frac {2\sqrt{2}}{3}(-s)^{\frac {3}{2}}}\left (1+O\left (s^{-\frac {1}{4}}\right )\right ),\end{equation}
 as $\arg (-s)\in(-\frac {\pi}{3}, \frac {\pi}{3}]$ and $s\to \infty$, \cite[(11.5.56)]{FokasItsKapaevNovokshenov}.
 The  first few   coefficients  are
 $$b_0=1,\quad  b_1=\frac {\alpha}{\sqrt{2}},\quad  b_2=-\frac {1+6\alpha^2}{8},$$
 and
 $$c_+=\frac {e^{2\pi i\alpha}-\omega}{\pi} 2^{-5\alpha-3}\Gamma(2\alpha+1);$$
 see  \cite[(11.5.58) and (11.5.68)]{FokasItsKapaevNovokshenov}.
 A combination of  the asymptotics of $q(s)$ and the relation \eqref{PII-xxxiv} will then gives   the asymptotic approximations of $u(s)$ and $\sigma(s)$ as
 $s\to\infty$ and $\arg s\in(-\frac {\pi}{3}, \frac {\pi}{3}]$, as stated  in \eqref{thm: sigma asy positive} and \eqref{thm: painleve  positive infinity}.

We also mention that, for $\alpha=0$ and $\omega=1$, the solution $q(x)$ determined by \eqref{PII-asy} or the Stokes
 multipliers $\{s_1= -1, s_2=1, s_3=-1\}$ is the classical solution
 $$q(x)=-2^{-1/3}\frac {\Ai'(-2^{-1/3}x)}{\Ai(-2^{-1/3}x)};$$
 cf. \cite[(11.4.15)]{FokasItsKapaevNovokshenov}.
Thus  by \eqref{PII-xxxiv}  the Painlev\'{e} XXXIV function corresponding to the parameter $\alpha=0$ and $\omega=1$ is trivial, that is, $u(x)=0$; see also \cite[Lemma 3]{XuZhao}.

 The asymptotics of $q(s)$ as $s\to +\infty$ can be found  in, e.g. \cite[Chapter 10]{FokasItsKapaevNovokshenov}. However, the leading asymptotics of
 $q(s)$  is canceled out  when we substitute  it into  \eqref{PII-xxxiv}. Therefore,  the leading term alone  is not enough to derive the asymptotics of $u(s)$; see also the discussion in \cite{ItsKuijlaarsOstensson2008,ItsKuijlaarsOstensson2009}. We will derive the asymptotics of $u(s)$ and $\sigma(s)$ as
 $s\to-\infty$ by using  the  RH problem for $\Psi$. Details will be  given in the Appendix.

\section{Nonlinear steepest descend  analysis of orthogonal polynomials}
\indent\setcounter{section} {3} \label{sec:3}
In this section, we start by presenting  the RH problem $Y(z;\mu,n)$ for the orthogonal polynomials associated with the weight function \eqref{weight}.
Then we show that the logarithmic derivative of the Hankel determinants can be expressed in terms of the solutions to this RH problem.
For $n$ fixed, by relating the solution $Y(z;\mu,n)$ to the Lax pair for the Painlev\'e IV equation, we show that the logarithmic derivative of the Hankel determinant satisfies the Jimbo-Miwa-Okamoto $\sigma$-form of  the Painlev\'e IV equation. Then we perform a nonlinear steepest descent analysis of
the RH problem for $Y$ via the Deift-Zhou method \cite{DeiftZhouA,DeiftZhouU,DeiftZhouS} .

\subsection{Riemann-Hilbert problem for orthogonal polynomials }
The RH problem for the orthogonal polynomials with respect to \eqref{weight} is as follows.
\begin{description}
  \item(Y1)~~  $Y(z)$ is analytic in
  $\mathbb{C}\backslash \mathbb{R}$;

  \item(Y2)~~  $Y(z)$  satisfies the jump condition
  \begin{equation}\label{}
  Y_+(x)=Y_-(x) \left(
                               \begin{array}{cc}
                                 1 & w(x) \\
                                 0 & 1 \\
                                 \end{array}
                             \right),
\qquad x\in \mathbb{R},\end{equation} where $w(x)$ is defined in \eqref{weight};

  \item(Y3)~~  The behavior of $Y(z)$ at infinity is
  \begin{equation}\label{Y-infinity}Y(z)=\left (I+\frac {Y_{1}}{z}+\frac {Y_{2}}{z^2}+O\left (\frac 1 {z^3}\right )\right )\left(
                               \begin{array}{cc}
                                 z^n & 0 \\
                                 0 & z^{-n} \\
                               \end{array}
                             \right),\quad \quad z\rightarrow
                             \infty ;\end{equation}
\item(Y4)~~ The behavior of $Y(z)$ at $\mu$ is
\begin{equation}\label{}Y(z)= \left\{\begin{array}{ll}
O\left(
                               \begin{array}{cc}
                                 1 & |z-\mu|^{2\alpha} \\
                                 1 & |z-\mu|^{2\alpha} \\
                               \end{array}\right), & z\rightarrow\mu,\quad \alpha <0,\\[.4cm]
O\left(
                               \begin{array}{cc}
                                 1 & \log|z-\mu| \\
                                 1 & \log|z-\mu| \\
                               \end{array}\right), & z\rightarrow\mu,\quad \alpha =0,\\[.4cm]
O\left(
                               \begin{array}{cc}
                                 1 & 1 \\
                                 1 & 1 \\
                               \end{array}\right), & z\rightarrow\mu,\quad \alpha >0.
 \end{array}\right.\end{equation}
\end{description}

For $\alpha> -\frac {1}{2}$ and $\omega\geq0$, the system of orthogonal polynomials with respect to \eqref{weight} are well defined. It follows from the Sochocki-Plemelj formula and Liouville's theorem that the unique solution of the RH problem for $Y$ is given by
\begin{equation}\label{Y}
Y(z)= \left (\begin{array}{cc}
\pi_n(z)& \frac 1{2\pi i}\int_\mathbb{R} \frac{\pi_n(x) w(x)}{x-z}dx\\
-2\pi i \gamma_{n-1}^2 \;\pi_{n-1}(z)& - \gamma_{n-1}^2\;
\int_\mathbb{R} \frac{\pi_{n-1}(x) w(x)}{x-z}dx\end{array} \right ).
\end{equation}
where  $\pi_n(z)$ is the monic orthogonal polynomial with respect to the weight $w(x)$ defined in \eqref{weight}; see  \cite{Fokas}.
For general $\omega\in \mathbb{C}\setminus(-\infty,0)$, we will prove later that for $n$
 large enough, the RH problem for $Y$  can be solved and thus the
polynomials orthogonal with respect to \eqref{weight} exist for large $n$.

Now we introduce
\begin{equation}\label{pis-piv}
\Phi(z,\mu)= Y(z+\mu)e^{-\left (\frac {z^2}{2}+\mu z\right )\sigma_3}z^{\alpha\sigma_3},
\end{equation}
then $\Phi(z,\mu)$ satisfies  constant jumps on the real axis and has a regular singularity at  $z=0$  and  an irregular singularity at infinity, with
 \begin{equation}\label{pis-piv-infinity}\Phi(z,\mu)=\left (I+\frac {Y_1+n\mu\sigma_3}{z}+O\left ( \frac 1 {z^{2}}\right )\right )e^{-\left (\frac {z^2}{2}+\mu z\right )\sigma_3}z^{ (n+\alpha)\sigma_3},\quad \quad z\rightarrow
                             \infty .\end{equation}
Thus, it is related to the Garnier-Jimbo-Miwa Lax pair $\Phi_{\mathrm{PIV}}$ for the Painlev\'e IV equation with the parameters $\theta_{\infty}=-(\alpha+n)$ and $ \theta_0=\alpha$ by
\begin{equation}
 \Phi(z,\mu)=\Phi_{\mathrm{PIV}}(e^{\frac 12 \pi i}z,e^{\frac 12 \pi i}\mu) e^{-\frac {\pi}2 i(\alpha+n)\sigma_3};
\end{equation}
see \cite [Theorem 5.2,Proposition 5.4]{FokasItsKapaevNovokshenov} and \cite[Eq.(C.30)-Eq.(C.34)]{jmu}.
From this relation, we get the $\sigma$-form of the Painlev\'e IV equation \cite [Eq.(C.30)-Eq.(C.37)]{jmu}.
\begin{pro}\label{PIV} Let
\begin{equation}
 \sigma_{\mathrm{IV}}(\mu)=2 \lim_{z\to\infty}z(Y(z)z^{-n\sigma_3}-I)_{11},
\end{equation}
then $ \sigma_{\mathrm{IV}}$ satisfies the Jimbo-Miwa-Okamoto $\sigma$-form of  the Painlev\'e IV equation
\begin{equation}\label{sigma-painleve IV}
  (\sigma_{\mathrm{IV}}'')^2=4(\mu\sigma_{\mathrm{IV}}'-\sigma_{\mathrm{IV}})^2-4\sigma_{\mathrm{IV}}'(\sigma_{\mathrm{IV}}'-4\alpha)(\sigma_{\mathrm{IV}}'+2n).
\end{equation}
\end{pro}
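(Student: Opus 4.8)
The plan is to identify $\sigma_{\mathrm{IV}}(\mu)$ with a logarithmic-derivative-type Hamiltonian quantity attached to the Painlev\'e IV Lax pair, and then quote the classical Jimbo–Miwa computation. First I would start from \eqref{pis-piv} and \eqref{pis-piv-infinity}, which show that $\Phi(z,\mu)$ is a solution of the Garnier–Jimbo–Miwa isomonodromy system with irregular singularity at $z=\infty$ (Poincar\'e rank $2$) and a regular singularity at $z=0$, with formal monodromy exponents $\theta_\infty=-(\alpha+n)$ at infinity and $\theta_0=\alpha$ at the origin; the identification with the standard $\Phi_{\mathrm{PIV}}$ up to the trivial gauge $e^{-\frac{\pi}{2}i(\alpha+n)\sigma_3}$ and the rescaling $z\mapsto e^{\frac{\pi}{2}i}z$ is recorded just before the statement. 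Under this correspondence the coefficient of $z^{-1}$ in the expansion of $\Phi$, namely $Y_1+n\mu\sigma_3$, is (a gauge transform of) the matrix whose entries encode the PIV Hamiltonian data. In particular, the $(1,1)$ entry of $2\lim_{z\to\infty} z\bigl(Y(z)z^{-n\sigma_3}-I\bigr)$ is exactly $2(Y_1)_{11}$, and one checks from \eqref{Y} that $(Y_1)_{11}$ is a rational expression in the recurrence coefficients / norming constants, precisely of the form that the Jimbo–Miwa $\sigma$-function of PIV takes.

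The key steps, in order: (1) Spell out the gauge/scaling dictionary between $\Phi(z,\mu)$ and $\Phi_{\mathrm{PIV}}$, tracking how the $O(1/z)$ term transforms; this pins down $\sigma_{\mathrm{IV}}(\mu)=2(Y_1)_{11}$ in terms of the auxiliary Painlev\'e IV functions (the PIV unknown $y(\mu)$ and its canonically conjugate variable). (2) Invoke the isomonodromy deformation equations for the Lax pair — i.e. the compatibility of the $z$-equation and the $\mu$-equation for $\Phi_{\mathrm{PIV}}$ — to obtain that $y(\mu)$ solves the Painlev\'e IV equation with parameters $(\theta_0,\theta_\infty)=(\alpha,-(\alpha+n))$; this is [Theorem 5.2, Proposition 5.4]{FokasItsKapaevNovokshenov}. (3) Substitute into the Jimbo–Miwa–Okamoto $\tau$-function / $\sigma$-function formula for PIV, namely the identity [Eq.(C.30)–Eq.(C.37)]{jmu} expressing $\sigma_{\mathrm{IV}}=\mu\,\frac{d}{d\mu}\log\tau_{\mathrm{IV}}$ (or the analogous Hamiltonian combination), and verify by direct algebra that this $\sigma_{\mathrm{IV}}$ satisfies \eqref{sigma-painleve IV}; the parameters $4\alpha$ and $2n$ in \eqref{sigma-painleve IV} appear exactly as $-4\theta_0$-type and $2(\theta_0-\theta_\infty)$-type combinations, so matching them is a bookkeeping check once the dictionary from step (1) is fixed.

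I expect the main obstacle to be step (1): getting the normalization, the shift by $n\mu\sigma_3$, the $z\mapsto e^{\pi i/2}z$ rotation, and the constant right-multiplication by $e^{-\frac{\pi}{2}i(\alpha+n)\sigma_3}$ all consistent, so that the precise definition $\sigma_{\mathrm{IV}}(\mu)=2\lim_{z\to\infty}z(Y(z)z^{-n\sigma_3}-I)_{11}$ lands on the particular normalization of the $\sigma$-function for which the coefficients $4\alpha$ and $-2n$ in \eqref{sigma-painleve IV} are correct (including signs). This is purely a change-of-variables/gauge computation — no analysis is involved — but it is where sign and factor errors hide. Once that is done, steps (2) and (3) are direct citations of \cite{FokasItsKapaevNovokshenov} and \cite{jmu} followed by an elementary verification that the stated second-order ODE holds; equivalently, one may differentiate the differential identity for $\log H_n$ derived earlier in this section and eliminate the auxiliary variables, but the Lax-pair route is cleaner and is the one the surrounding text sets up.
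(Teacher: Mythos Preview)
Your proposal is correct and matches the paper's approach: the paper does not give an independent proof of Proposition~\ref{PIV} but simply records the identification of $\Phi(z,\mu)$ with the Garnier--Jimbo--Miwa Lax pair $\Phi_{\mathrm{PIV}}$ (with $\theta_\infty=-(\alpha+n)$, $\theta_0=\alpha$, up to the rotation $z\mapsto e^{\pi i/2}z$ and the right gauge factor) and then cites \cite[Theorem~5.2, Proposition~5.4]{FokasItsKapaevNovokshenov} and \cite[(C.30)--(C.37)]{jmu} for the $\sigma$-form, exactly as you outline in steps (1)--(3). One small slip: the relation is $\sigma_{\mathrm{IV}}=\frac{d}{d\mu}\log\tau_{\mathrm{IV}}$ (no multiplicative $\mu$), consistent with Lemma~\ref{lem:Hankel-differential identity } where $\sigma_{\mathrm{IV}}=\frac{d}{d\mu}\ln H_n$; otherwise your bookkeeping plan for step~(1) is precisely what is needed.
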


At the end of this section, we state and prove a differential identity for the  Hankel determinant  associated
with the weight $w(x)$ in \eqref{weight}. The identity serves as the start point of our asymptotic study of the Hankel determinant.
We also show that the logarithmic derivative of the Hankel determinant satisfies the Jimbo-Miwa-Okamoto $\sigma$-form of  the Painlev\'e IV equation, as was first obtained in \cite{ForresterWitte}.

\begin{lem}\label{lem:Hankel-differential identity }
Let  $H_n$ be the Hankel determinant associated with $w(x;\mu)$ in \eqref{pGUE}, then
\begin{equation}\label{differential identity}
    \frac d{d\mu}\ln H_{n}=2 \lim_{z\to\infty}z(Y(z)z^{-n\sigma_3}-I)_{11},
\end{equation}
and  $ \sigma_{\mathrm{IV}}(\mu)= \frac d{d\mu}\ln H_{n}$ satisfies the Jimbo-Miwa-Okamoto $\sigma$-form of  the Painlev\'e IV equation
\begin{equation}\label{sigma-painleve IV-1}
  (\sigma_{\mathrm{IV}}'')^2=4(\mu\sigma_{\mathrm{IV}}'-\sigma_{\mathrm{IV}})^2-4\sigma_{\mathrm{IV}}'(\sigma_{\mathrm{IV}}'-4\alpha)(\sigma_{\mathrm{IV}}'+2n).
\end{equation}
\end{lem}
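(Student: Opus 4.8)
The plan is to establish the differential identity \eqref{differential identity} first; the second assertion then follows immediately from Proposition \ref{PIV}, which already identifies $\sigma_{\mathrm{IV}}(\mu)=2\lim_{z\to\infty}z(Y(z)z^{-n\sigma_3}-I)_{11}$ with a solution of the $\sigma$-form \eqref{sigma-painleve IV-1}. It is convenient to argue first for $\alpha>0$ and $\omega\ge0$, so that $\pi_0,\dots,\pi_n$ all exist, $H_n=\prod_{j=0}^{n-1}h_j>0$ with $h_j=\int_{\mathbb R}\pi_j(x)^2w(x)\,dx$, and all the integrals below converge absolutely; the extension to $-\tfrac12<\alpha\le0$ and to general $\omega\in\mathbb C\setminus(-\infty,0)$ (with $n$ large, so that $H_n\ne0$) is then obtained by analytic continuation, both sides of \eqref{differential identity} being holomorphic in $\alpha$ on $\Re\alpha>-\tfrac12$ and in $\omega$.

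Starting from $\ln H_n=\sum_{j=0}^{n-1}\ln h_j$, I differentiate $h_j=\int\pi_j(x;\mu)^2w(x;\mu)\,dx$ in $\mu$. The term containing $\partial_\mu\pi_j$ drops out: since the monic normalization is $\mu$-independent, $\partial_\mu\pi_j$ has degree at most $j-1$ and is therefore orthogonal to $\pi_j$. The algebraic core of the argument is the pointwise identity
\begin{equation*}
\partial_\mu w=-\partial_x w-2xw,\qquad x\neq\mu,
\end{equation*}
which follows from $\partial_\mu|x-\mu|^{2\alpha}=-\tfrac{2\alpha}{x-\mu}|x-\mu|^{2\alpha}=-\partial_x|x-\mu|^{2\alpha}$ and $\partial_x e^{-x^2}=-2xe^{-x^2}$, the $\delta$-type contributions of the jump factor at $x=\mu$ being annihilated by $|x-\mu|^{2\alpha}$ when $\alpha>0$. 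Hence
\begin{equation*}
\partial_\mu h_j=-\int\pi_j^2\,\partial_x w\,dx-2\int x\,\pi_j^2\,w\,dx .
\end{equation*}
Integrating the first integral by parts gives $2\int\pi_j\pi_j'\,w\,dx$, which vanishes by orthogonality ($\deg\pi_j'=j-1$) since the boundary contributions of $\pi_j^2w$ at $\pm\infty$ are zero, while the three-term recurrence \eqref{recurrence relations} gives $\int x\,\pi_j^2\,w\,dx=a_j h_j$. Therefore $\partial_\mu\ln h_j=-2a_j$ and $\frac{d}{d\mu}\ln H_n=-2\sum_{j=0}^{n-1}a_j$.

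It remains to evaluate the sum. Matching the coefficients of $z^{j}$ in $z\pi_j=\pi_{j+1}+a_j\pi_j+b_j^2\pi_{j-1}$ yields $a_j=p_{j,1}-p_{j+1,1}$, where $\pi_j(z)=z^j+p_{j,1}z^{j-1}+\cdots$ and $p_{0,1}:=0$; the sum then telescopes, $\sum_{j=0}^{n-1}a_j=p_{0,1}-p_{n,1}=-p_{n,1}$, so $\frac{d}{d\mu}\ln H_n=2p_{n,1}$. Finally, $Y_{11}(z)=\pi_n(z)$ by \eqref{Y}, so $(Y(z)z^{-n\sigma_3})_{11}=1+p_{n,1}z^{-1}+O(z^{-2})$ and $p_{n,1}=\lim_{z\to\infty}z(Y(z)z^{-n\sigma_3}-I)_{11}$, which is \eqref{differential identity}; combining it with Proposition \ref{PIV} shows that $\sigma_{\mathrm{IV}}(\mu)=\frac{d}{d\mu}\ln H_n$ satisfies \eqref{sigma-painleve IV-1}.

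The only delicate point is technical rather than conceptual: the legitimacy of differentiating under the integral sign and of the integration by parts near the moving singularity $x=\mu$. For $\alpha>0$ this is straightforward because $w$, $xw$ and $\partial_x w$ all lie in $L^1(\mathbb R)$ and $\pi_j^2w$ is absolutely continuous on $\mathbb R$ with $\pi_j^2w\to0$ at $\pm\infty$; the remaining range $-\tfrac12<\alpha\le0$ is handled, as noted, by analytic continuation in $\alpha$, since $H_n$ and the coefficients $p_{n,1}$ (ratios of moment determinants via Heine's formula) are holomorphic in $\alpha$ for $\Re\alpha>-\tfrac12$.
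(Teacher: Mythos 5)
Your proof is correct and arrives at the same key identity $\frac{d}{d\mu}\ln H_n = 2p_{n,1}$, but via two genuinely different sub-arguments. For the differentiation step, the paper sidesteps the delicate point entirely by substituting $x\mapsto x+\mu$ before differentiating: after the shift the weight is $e^{-(x+\mu)^2}|x|^{2\alpha}\theta(x)$, so the root and jump factors are $\mu$-independent and the only $\mu$-derivative is the benign $-2(x+\mu)e^{-(x+\mu)^2}$ — no $|x-\mu|^{2\alpha-1}$ singularity ever appears and no analytic-continuation fallback is needed. You instead differentiate the unshifted weight via $\partial_\mu w=-\partial_x w-2xw$, integrate by parts across the moving singularity, and push from $\alpha>0$ to $\alpha>-\tfrac12$ by analytic continuation; that works but requires the extra care you flag. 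For the summation step, the paper passes through the confluent Christoffel–Darboux formula and the decomposition $x\pi_n'=n\pi_n-p_n\pi_{n-1}+\cdots$, whereas you read $a_j=p_{j,1}-p_{j+1,1}$ off the three-term recurrence and telescope; your telescoping argument is shorter and more elementary. Both chains are valid and equivalent in content; the paper's first step is slicker, your second step is slicker.
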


\begin{proof}
By \eqref{Hankel determiant} and taking logarithmic derivative with respect to $\mu$, we have
 \begin{equation}\label{log derivative H}
 \frac d{d\mu}\ln H_{n}=-2\sum_{k=0}^{n-1}\gamma_{k}^{-1}\frac d {d\mu}\gamma_{k},
\end{equation}
where
\begin{equation}\label{leading coef1}
\gamma_{k}^{-2}=\int_{\mathbb{R}}\pi_k(x)^2w(x)dx=\int_{\mathbb{R}}(\pi_k(x+\mu))^2w(x+\mu)dx.
\end{equation}
Taking derivative with respect to $\mu$ on both sides of \eqref{leading coef1} and using the orthogonality and the definition of the weight \eqref{weight}, we get
\begin{equation}\label{leading coef}\gamma_{k}^{-1}\frac d {d\mu}\gamma_{k}=\gamma_{k}^2\int_{\mathbb{R}}(x+\mu)\pi_k(x+\mu)^2w(x+\mu)dx=\gamma_{k}^2\int_{\mathbb{R}}x\pi_k(x)^2w(x)dx.
\end{equation}
It follows from \eqref{log derivative H}, \eqref{leading coef} and the Christoffel-Darboux formula that
\begin{equation}\label{log derivative H-integral}\frac d{d\mu}\ln H_{n}=-2\gamma_{n-1}^2\int_{\mathbb{R}}x\left(\frac d {dx}\pi_n(x) \pi_{n-1}(x)-\pi_n(x) \frac d {dx}\pi_{n-1}(x)\right)w(x)dx.
\end{equation}
Write $\pi_n(x)=x^n+p_nx^{n-1}+\cdots$,  we have the following decomposition into the orthogonal system $\{\pi_k\}_{k=0}^n$,
\begin{equation}\label{decomposition}x\frac d{dx}\pi_n(x)=n\pi_n-p_n\pi_{n-1}+\cdots \end {equation}
By using the orthogonality, we have $\frac d{d\mu}\ln H_{n}=2p_n$.
Then \eqref{differential identity} follows from \eqref{Y}, and the fulfilment of \eqref{sigma-painleve IV-1}  follows from  Proposition \ref{PIV}.
\end {proof}

\subsection{The   transformations $Y\rightarrow T\to S$}
From this section on, we take $\mu:=\mu_n=\sqrt{2n}+\frac {s}{\sqrt{2}n^{1/6}}$ in the weight function \eqref{weight}.
We introduce the transformation
 \begin{equation}\label{Y to T}T(z)= (2n)^{-\frac 12(n+\alpha)\sigma_3}e^{-\frac 1 2 nl \sigma_3} Y(\sqrt{2n}z) e^ {n \left (\frac 1 2 l
-g(z)\right )\sigma_3}(2n)^{\frac {\alpha}2 \sigma_3},\quad z\in
\mathbb{C}\backslash \mathbb{R}, \end{equation}
where the $g$-function
\begin{equation}\label{g}g(z)=\int_{-1}^{1}  \ln(z-x) \varphi(x) dx, \quad \varphi(x)=\frac 2 {\pi } \sqrt{1-x^2}, \quad z\in \mathbb{C}\setminus (-\infty, 1],  \end{equation}
and $l=\-1-2\ln 2$, with the logarithm taking the branch $\arg (z-x)\in (-\pi , \pi)$.
We also introduce the $\phi$-function
\begin{equation}\label{phi}
\phi(z)=2\int_{1}^z\sqrt{x^2-1}dx=z\sqrt{z^2-1}-\ln\left(z+\sqrt{z^2-1}\right), \quad z\in \mathbb{C}\setminus (-\infty, 1],
\end{equation}where the branches are chosen such that $\sqrt{z^2-1}$ is analytic in $\mathbb{C}\setminus [-1, 1]$, taking positive values on  $(1, +\infty)$, $z+\sqrt{z^2-1}$ maps $\mathbb{C}\setminus [-1, 1]$ onto  the outside of the unit disk, with $z+\sqrt{z^2-1}\sim 2z$ as $z\to\infty$, and the logarithm taking the principal branch.
The $\phi$-function and $g$-function are related by the phase condition
\begin{equation}\label{phase condition}
2\left[g(z)+\phi(z)\right]-2z^2-l=0, \quad z\in \mathbb{C}\backslash(-\infty,1].
\end{equation}

So defined,  $T$ is normalized at infinity in the sense that $T(z)=I+O(1/z)$,
and satisfies the jump
condition
\begin{equation}\label{Jump-T}T_+(x)=T_-(x)\left\{
\begin{array}{ll}
 \left(
                               \begin{array}{cc}
                                 1 & |x-\lambda|^{2\alpha}\theta(x-\lambda)e^{-2n \phi(x)} \\
                                 0 & 1 \\
                               \end{array}
                             \right),&   x\in(1,+\infty),\\
                             &\\
  \left(
                               \begin{array}{cc}
                                 e^{2n\phi_+(x)} & |x-\lambda|^{2\alpha}\theta(x-\lambda) \\
                                 0 & e^{2n \phi_-(x)} \\
                               \end{array}
                             \right),&   x\in(-1, 1), \\
                             &\\
   \left(
                               \begin{array}{cc}
                                 1 & |x-\lambda|^{2\alpha} e^{-2n \phi_+(x)} \\
                                 0 & 1 \\
                               \end{array}
                             \right), &  x\in(-\infty,-1),
\end{array}
\right .
\end{equation}
where $\lambda=\frac {\mu_n}{\sqrt{2n}}=1+\frac {s}{2n^{2/3}}$, $\theta(x)=\omega$ for $x>0$ and $\theta(x)=1$ for $x<0$.

To remove the    oscillation of the diagonal entries of the jump matrix on $(-1, 1)$, we deform the interval $[-1,\lambda]$ into a lens-shaped region as illustrated in Fig.\;\ref{fig2},  and introduce  the following transformation
\begin{equation}\label{T-S}
S(z)=\left \{
\begin{array}{ll}
  T(z), & \mbox{for $z$ outside the lens,}
  \\ &\\
  T(z) \left( \begin{array}{cc}
                                 1 & 0 \\
                                  -(\lambda-z)^{-2\alpha} e^{2n\phi(z)} & 1 \\
                               \end{array}
                             \right) , & \mbox{for $z$ in the upper lens,}\\ &\\
T(z) \left( \begin{array}{cc}
                                 1 & 0 \\
                                  (\lambda-z)^{-2\alpha} e^{2n\phi(z)} & 1 \\
                               \end{array}
                             \right) , & \mbox{for $z$ in the lower lens. }
\end{array}\right .\end{equation}

It is easy to see that $S$ satisfies the jump conditions
\begin{figure}[ht]
\begin{center}
\resizebox*{7cm}{!}{\includegraphics{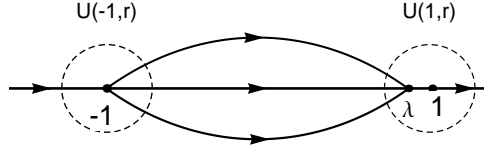}}
\caption{Regions and contours for $S$. $\Sigma_S$ consists of the bold oriented contours: $U(\pm 1. r)$ are disks of radius $r$, centered respectively at $\pm 1$. }
\label{fig2} \end{center}
\end{figure}
\begin{equation}\label{Jump-S}S_+(z)=S_-(z)\left\{
\begin{array}{ll}
 \left(
                               \begin{array}{cc}
                                 1 & \omega|x-\lambda|^{2\alpha}e^{-2n\phi(x)} \\
                                 0 & 1 \\
                               \end{array}
                             \right),&   z=x\in(\max(\lambda,1),+\infty),\\[.4cm]

  \left(
                               \begin{array}{cc}
                                 0 & |x-\lambda|^{2\alpha} \\
                                 -|x-\lambda|^{-2\alpha} & 0 \\
                               \end{array}
                             \right),&   z=x\in(-1, \min(\lambda, 1) ), \\[.4cm]

  \left( \begin{array}{cc}
                                 1 & 0 \\
                                  (\lambda-z)^{-2\alpha} e^{2n\phi(z)} & 1 \\
                               \end{array}
                             \right),&   z~\mbox{on lens}, \\[.4cm]

   \left(
                               \begin{array}{cc}
                                 1 & |x-\lambda|^{2\alpha} e^{-2n \phi_+(x)} \\
                                 0 & 1 \\
                               \end{array}
                             \right), & z= x\in(-\infty,-1),
\end{array}
\right.
\end{equation}
and
\begin{equation}\label{Jump-S-2}S_+(x)=S_-(x)\left\{
\begin{array}{ll}
 \left(
 \begin{array}{cc}
                                e^{2n\phi_+(x)} & \omega|x-\lambda|^{2\alpha} \\
                                 0 & e^{2n\phi_-(x)}\\
                               \end{array}
                             \right),&   x\in(\lambda,1),~\mbox{if}~\lambda<1,\\[.4cm]
  \left(
                               \begin{array}{cc}
                                 0 & |x-\lambda|^{2\alpha} e^{-2n\phi(x)}\\
                                 -|x-\lambda|^{-2\alpha}e^{2n\phi(x)} & 0 \\
                               \end{array}
                             \right),&   x\in(1,\lambda),~\mbox{if}~\lambda>1.
\end{array}
\right.
\end{equation}

We note that
 for $n$ large, the jumps  for $S$ is  close  to the identity matrix,  except the one on   $[-1,\lambda]$. Here both cases, namely  $\lambda<1$ and $\lambda>1$, are possible.

\subsection{Global Parametrix}
The global parametrix solves the following approximating RH problem, with only the jump along $(-1, \lambda)$ remains:
\begin{description}
\item(N1)~~  $N(z)$ is analytic in  $\mathbb{C}\backslash
[-1,\lambda]$;
\item(N2)~~   \begin{equation}\label{Jump-N} N_{+}(x)=N_{-}(x)\left(
       \begin{array}{cc}
       0 & |x-\lambda|^{2\alpha} \\
       -|x-\lambda|^{-2\alpha} & 0 \\
       \end{array}
       \right),\quad x\in  (-1,\lambda);\end{equation}
\item(N3)~~    \begin{equation}\label{NinfiniytOT} N(z)= I+O(z^{-1}) ,\quad z\rightarrow\infty .\end{equation}
  \end{description}

A solution of the RH problem can be constructed explicitly as
\begin{equation}\label{N-expression}
 N(z) =d(\infty)^{\sigma_3} N_0(z) d(z)^{-\sigma_3},
 \end{equation}
 where
   \begin{equation}\label{N-0}
  N_0(z)=
\left(
  \begin{array}{cc}
    \frac {\varsigma(z) + \varsigma^{-1}(z)} {2}&\frac {\varsigma(z) - \varsigma^{-1}(z)} {2i} \\
    -\frac {\varsigma(z) - \varsigma^{-1}(z)} {2i} &\frac {\varsigma(z) + \varsigma^{-1}(z)} {2} \\
  \end{array}
\right), \quad
 \varsigma(z)=\left ( \frac {z-\lambda}{z+1} \right )^{1/ 4},
  \end{equation}
 where $\varsigma(z)$ is analytic in $\mathbb{C}\setminus [-1, \lambda]$, and $\varsigma(z)\sim 1$ as $z\to\infty$.

 The Szeg\H{o} function  $d(z)$ in \eqref{N-expression} is analytic  in $\mathbb{C}\setminus [- 1,\lambda]$ and satisfies
  $$d_+(x)d_-(x)=|x-\lambda|^{2\alpha}, \quad x\in(-1,\lambda);$$ see \cite{Szego1975}.  Indeed, the  Szeg\H{o} function can be written  explicitly as
\begin{equation}\label{szego function}d(z)=(z-\lambda)^\alpha(\sqrt{z-\lambda}+\sqrt{z+1})^{-2\alpha}(1+\lambda)^{\alpha},\quad  d(\infty)=
\left (\frac{1+\lambda}{4}\right )^{\alpha},
\end{equation} where $\arg(z+1)\in (-\pi, \pi)$, $\arg(z-\lambda)\in (-\pi, \pi)$ and the logarithm takes the principal branch with the branch cut
along $(-\infty,0)$.

\subsection{Local parametrix at $z=-1$}
The local parametrix at $z=-1$ satisfies the following RH problem.
\begin{description}
  \item(a)~~ $P^{(-1)}(z)$  is analytic in $U(-1,r) \backslash  \Sigma_{S}$; see Fig.\;\ref{fig2} for the contours;
  \item(b)~~
 $P^{(-1)}_+(z)=P^{(-1)}_-(z)J_{S}(z), ~~z\in\Sigma_S \cap U(-1,r)$, where $J_{S}(z)$ are the jump matrices of $S(z)$ in \eqref{Jump-S};
  \item(c)~~
$P^{(-1)}(z)N^{-1}(z)=I+ O(1/n)$ for $z\in \partial U(-1, r)$.
 \end{description}

\begin{figure}[ht]
\begin{center}
\resizebox*{6cm}{!}{\includegraphics{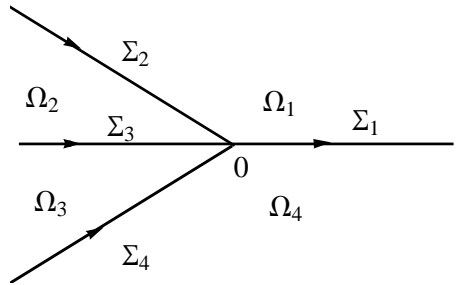}}
\caption{Regions and contours for $\Phi_A$}
\label{fig-Airy} \end{center}
\end{figure}

The local parametrix can be constructed as follows:
\begin{equation}\label{local parametrix at L}
P^{(-1)}(z)=\sigma_3E_{-1}(z)\Phi_A\left (\frac 32n\phi(-z)^{2/3}\right )e^{n\phi(z)\sigma_3}\sigma_3,
 \end{equation}
 where $E_{-1}(z)$ is a certain matrix-valued function  analytic in $U(-1,r)$,
$\Phi_A$ is the Airy parametrix  defined as
\begin{equation}\label{Airy-model-solution}
  \Phi_A(z)=M\left\{
                 \begin{array}{ll}
                  \left(
                     \begin{array}{cc}
                        \Ai(z) & \Ai(e^{-\frac{2\pi i}{3}} z) \\
                   \Ai'(z)&e^{-\frac{2\pi i}{3}} \Ai'(e^{-\frac{2\pi i}{3}} z) \\
                      \end{array}
                 \right)e^{-\frac{\pi i}{6}\sigma_3}, & z\in \Omega_1, \\[.5cm]
                       \left(
                     \begin{array}{cc}
                        \Ai(z) & \Ai(e^{-\frac{2\pi i}{3}} z) \\
                      \Ai'(z)&e^{-\frac{2\pi i}{3}} \Ai'(e^{-\frac{2\pi i}{3}}z) \\
                      \end{array}
                  \right)e^{-\frac{\pi i}{6}\sigma_3}\left(
                                                       \begin{array}{cc}
                                                    1 & 0 \\
                                                         -1 & 1 \\
                                                       \end{array}
                                                      \right)
                  , & z\in \Omega_2, \\[.5cm]
                            \left(
                     \begin{array}{cc}
                         \Ai(z) & -e^{-\frac{2\pi i}{3}}\Ai(e^{\frac{2\pi i}{3}} z) \\
                        \Ai'(z)&- \Ai'(e^{\frac{2\pi i}{3}} z) \\
                     \end{array}
                  \right)e^{-\frac{\pi i}{6}\sigma_3}\left(
                                                       \begin{array}{cc}
                                                        1 & 0 \\
                                                      1 & 1 \\
                                                       \end{array}
                                                    \right)
                  , &z\in \Omega_3, \\[.5cm]
                        \left(
                     \begin{array}{cc}
                    \Ai(z) & -e^{-\frac{2\pi i}{3}} \Ai(e^{\frac{2\pi i}{3}} z) \\
                        \Ai'(z)&- \Ai'(e^{\frac{2\pi i}{3}}z) \\
                      \end{array}
                  \right)e^{-\frac{\pi i}{6}\sigma_3}, & z\in \Omega_4,
                 \end{array}
          \right.
   \end{equation}
   with the regions indicated in Fig. \ref{fig-Airy} and  $M=\sqrt{2\pi}e^{\frac 16\pi i}\left(
                                                               \begin{array}{cc}
                                                                 1 &0\\
                                                                 0 & -i \\
                                                               \end{array}
                                                             \right)
   $; cf.
  \cite[(7.9)]{DeiftZhouS} and \cite[Ch.\;7.6]{d}.
  The Airy parametrix $\Phi_A$ satisfies the asymptotic condition at infinity
  \begin{equation}\label{Airy-model-infinity}
\Phi_A(z)=z^{-\sigma_3/4}\left (I+O(z^{-3/2})\right )\frac {I+i\sigma_1}{\sqrt{2}}e^{-\frac 23z^{\frac 32\sigma_3}}.
 \end{equation}
 To match $P^{-1}(z)$ with the global parametrix $N(z)$, we choose
 $$E_{-1}(z)=\sigma_3N(z)\sigma_3\left (\frac 32n\phi(-z)^{2/3}\right )^{\sigma_3/4}\frac {I-i\sigma_1}{\sqrt{2}}.$$


\subsection{Local parametrix at $z=\lambda$}

We seek a local parametrix $P^{(1)}(z)$  near $ \lambda= \frac {\mu_n}{\sqrt{2n}}=1+\frac {s}{2n^{2/3}}$. The parametrix solves   the following RH problem:
\begin{description}
  \item(a)~~ $P^{(1)}(z)$  is analytic in $U(1,r) \backslash  \Sigma_{S}$;   see Fig.\;\ref{fig2} for the contours;
  \item(b)~~ On $\Sigma_{S}\cap U(1,r)$, $P^{(1)}(z)$ satisfies the same jump condition as $S(z)$,
  \begin{equation}\label{PJS}
P^{(1)}_+(z)=P^{(1)}_-(z)J_{S}, ~~z\in\Sigma_S \cap U(1,r);\end{equation}
  \item(c)~~  $P^{(1)}(z)$ satisfies the following matching condition on $\partial U(1,r)$:
\begin{equation}\label{mathcing condition}
P^{(1)}(z)N^{-1}(z)=I+ O\left (n^{-1/3}\right );
 \end{equation}
 \item(d)~~  The behavior at $z=\lambda$ is the same as that of $S(z)$.
 \end{description}

It is observed that $P^{(1)}(z)e^{-n\phi(z)\sigma_3}(z-\lambda)^{\alpha \sigma_3}$ solves a RH problem
with the same constant jumps as the model RH problem \eqref{Psi-jump}.
We take the conformal mapping
\begin{equation}\label{f-t}
f(z)=\left (\frac{3}{2}\phi(z)\right )^{2/3}=2(z-1)+O\left ((z-1)^2\right )
\end{equation}from a neighborhood of $z=1$ to that of the origin.
Then for $r$ small enough, the local parametrix  ${P}^{(1)}(z)$ can be constructed as follows
\begin{equation}\label{parametrix}
P^{(1)}(z)=E(z)\Psi\left (n^{2/3}(f(z)-f(\lambda));n^{2/3}f(\lambda)\right )e^{n\phi(z)\sigma_3}(z-\lambda)^{-\alpha \sigma_3} ,\end{equation}
where
\begin{equation}\label{E}
E(z)=N(z)(z-\lambda)^{\alpha \sigma_3}\frac{1}{\sqrt{2}}(I-i\sigma_1)
\left [n^{2/3}(f(z)-f(\lambda))\right ]^{\sigma_3/4} .\end{equation}

To verify the matching condition \eqref{mathcing condition}, in view of \eqref{psi-infinity},
we can write the exponent of \eqref{parametrix} as
\begin{align}\label{F-n}
F_n(z):=&n\phi(z)- \vartheta\left ((n^{2/3}(f(z)-f(\lambda)),n^{2/3}f(\lambda)\right ) \nonumber\\
=&\frac n4f(\lambda)^2(f(z)-f(\lambda))^{-\frac 12}\left [1+O(\lambda-1)\right ]\\
=&n^{-\frac 1 3} \frac {s^2}4 \left (f(z)-f(\lambda)\right )^{-\frac 12} +O\left (\frac 1 n\right )\nonumber \end{align}
for $|z-1|=r$ and $n\to \infty$. Now substituting it in \eqref{psi-infinity} and \eqref{parametrix}, we have
 the matching condition \eqref{mathcing condition}.

\subsection{The final transformation: $S\rightarrow R$}
The final transformation is defined as
\begin{equation}\label{S-R}
R(z)=\left\{ \begin{array}{ll}
                S(z)N^{-1}(z), & z\in \mathbb{C}\backslash \left \{ U(-1,r)\cup U(1,r)\cup \Sigma_S \right \},\\
               S(z) \left\{P^{(-1)}(z)\right\}^{-1}, & z\in   U(-1,r)\backslash \Sigma_{S},  \\
               S(z)  \left\{P^{(1)}(z)\right\}^{-1}, & z\in   U(1,r)\backslash
               \Sigma_{S} .
             \end{array}\right .
\end{equation}
From the matching condition \eqref{mathcing condition}, we have
\begin{equation}\|J_R(z)-I\|_{L^2\cap L^{\infty}(\Sigma_R)}=O(n^{-1/3}).
\end{equation}
Thus
 \begin{equation}\label{R-asymptotic}R(z)=I+O(n^{-1/3}),
\end{equation}  where the error bound is uniform for $z$ in whole complex plane.

\section{Proof of the main theorems}\label{sec:4}
The differential identity \eqref{differential identity} provides an expression of the logarithmic derivative of the Hankel determinant,
in terms of  the large-$z$ behavior of $Y(z)$; see  \eqref{Y-infinity}.  From    \eqref{Y}, one can also write the recurrence coefficients in \eqref{recurrence relations} as
\begin{equation}\label{recurrence coefficients and Y}
a_n=(Y_1)_{11}+\frac {(Y_2)_{12}}{(Y_1)_{12}}\quad\mbox{and}\quad
b^2_n=(Y_1)_{12}(Y_1)_{21},
\end{equation}
where $Y_1$ and  $Y_2$ are the coefficients in the large-$z$ expansion of $Y(z)$ in \eqref{Y-infinity}. In this section,
by using the asymptotics of $Y$ obtained in Section \ref{sec:3}, we proceed to derive the large-$n$ asymptotic
behavior of the Hankel determinant, the recurrence coefficients and the correlation kernel.

Tracing back the sequence of transformations $Y\to T\to S\to R$, given in  \eqref{Y to T}, \eqref{T-S}  and \eqref{S-R}, we have
\begin{equation}\label{Y tracing back}
Y(\sqrt{2n}z)=(2n)^{\frac 12(n+\alpha)\sigma_3}e^{\frac 12 nl\sigma_3}R(z)N(z)e^{ng(z)\sigma_3-\frac 12 nl \sigma_3}(2n)^{-\frac {\alpha}2\sigma_3}.
\end{equation}
As $z\rightarrow\infty$,
\begin{equation}\label{g-expand}
e^{ng(z)\sigma_3}z^{-n\sigma_3}=I+\frac{G_1}{z}+\frac{G_2}{z^2}
+O\left (\frac 1 {z^3}\right ),
\end{equation}
where the matrix-valued coefficients $G_1=0$ and $G_2=\left\{-\frac{n}{2}\int_{-1}^1 x^2\varphi(x)dx\right\} \sigma_3$. Denoting
\begin{equation}\label{N-expand}
N(z)=I+\frac{N_1}{z}+\frac {N_2}{z^2}+O\left (\frac 1 {z^3}\right ) ~~\mbox{as}~z\to\infty
\end{equation}
and
\begin{equation}\label{R-expand}
R(z)=I+\frac{R_1}{z}+\frac {R_2}{z^2}+O\left (\frac 1 {z^3}\right )~~\mbox{as}~z\to\infty,
\end{equation}
and substituting \eqref{N-expand} and \eqref{R-expand}
in \eqref{Y tracing back}, we have
\begin{equation}\label{Y1}
Y_1=(2n)^{\frac {\alpha}2\sigma_3}\sqrt{2n}e^{\frac 12 n l \sigma_3}(R_1+N_1)e^{-\frac 12 n l \sigma_3}(2n)^{-\frac {\alpha}2\sigma_3},\end{equation}
and
\begin{equation}\label{Y2}
Y_2=2n(2n)^{\frac {\alpha}2\sigma_3}e^{\frac 12 n l \sigma_3}(R_2+G_2+N_2+R_1N_1)e^{-\frac 12 n l \sigma_3}(2n)^{-\frac {\alpha}2\sigma_3}.\end{equation}

To evaluate  $N_1$ and $N_2$, we expand the Szeg\H{o} function \eqref{szego function} at infinity,
\begin{align}\label{D-expand}
d(z)=&\left(\frac {1+\lambda}{4}\right)^{\alpha}\left \{1-\frac{\alpha(\lambda+1)}{2}\frac 1z\right .\nonumber\\
&\left .+\frac {4\alpha(2\alpha-1)+4\alpha(2\alpha-3)(\lambda-1)+\alpha(2\alpha-5)(\lambda-1)^2}{16z^2}+O\left (\frac 1{z^3}\right )\right \}.
\end{align}
We also expand $N_0(z)$ in \eqref{N-0} at infinity,
\begin{equation}\label{N-0-expand-1}
N_0(z)=I+\frac{N_1^0}{z}+\frac{N_2^0}{z^2}+O\left (\frac 1{z^3}\right ),
\end{equation}
where
\begin{equation}\label{N-0-expand-2}
N_1^0=-\frac{1}{4}(1+\lambda)\sigma_2,\ \ \
N_2^0=\frac{1-\lambda^2}{8}\sigma_2+ \frac{(1+\lambda)^2}{32}I;
\end{equation}
see \cite{XuZhao}.
Substituting \eqref{D-expand}-\eqref{N-0-expand-2} in  \eqref{N-expression} gives  \eqref{N-expand}, with
\begin{equation}\label{N-expand-2}
N_1=\left(\frac {1+\lambda}{4}\right)^{\alpha\sigma_3}\left(-\frac { (1+\lambda)\sigma_2}{4}+\frac { \alpha (1+\lambda)\sigma_3}{2}\right)\left(\frac {1+\lambda}{4}\right)^{-\alpha\sigma_3},
\end{equation}
and
\begin{align}\label{N-expand-3}
N_2=&\left(\frac {1+\lambda}{4}\right)^{\alpha\sigma_3}\left\{-\frac {i \alpha(1+\lambda)^2\sigma_1}{8}+\frac { (1-\lambda^2)\sigma_2}{8}+\frac { (1+\lambda)^2 I}{32}\right . \nonumber\\
&\left . +\frac {2\alpha^2(\lambda+1)^2I+(4\alpha +12\alpha(\lambda-1)+5\alpha(\lambda-1)^2)\sigma_3}{16}\right\}\left(\frac {1+\lambda}{4}\right)^{-\alpha\sigma_3}.
\end{align}
A combination of  \eqref{psi-infinity}, \eqref{parametrix} and \eqref{E} yields   the following more precise description  of the jump for $R(z)$ on $\partial U(1, r)$:
\begin{align}\label{R-jump-1}
J_R(z)=&N(z)(z-\lambda)^{\alpha \sigma_3}\left(I+\frac{\hat{\sigma}(s)(\sigma_3-i\sigma_1)}{2n^{1/3} (f(z)-f(\lambda))^{1/2}}-\frac {(\hat{\sigma}(s)^2+\hat{\sigma}'(s))\sigma_2}{2n^{2/3} (f(z)-f(\lambda))}+O\left (\frac 1 n\right )\right)\nonumber\\
 &\times e^{F_n(z)\sigma_3} (z-\lambda)^{-\alpha \sigma_3} N^{-1}(z),
\end{align}
where $F_n(z)$ is defined in \eqref{F-n}. In  deriving    \eqref{R-jump-1}, use has been made of the fact that $n^{2/3}f(\lambda)=s+O(n^{-2/3})$.
From the approximation of $F_n$ in \eqref{F-n}, we further have
\begin{align}\label{R-jump-2}
J_R(z)= & N(z)(z-\lambda)^{\alpha \sigma_3} \left\{I+\frac{(F_1+\varphi_1)\sigma_3
-i\varphi_1\sigma_1}{n^{1/3}}\right.\nonumber\\
& \left.+\frac{r_4(z,\lambda) I-r_3(z,\lambda) \sigma_2
}{n^{2/3}}+O\left(\frac 1 n\right )\right\}
(z-\lambda)^{-\alpha \sigma_3} N^{-1}(z),
\end{align}
where
\begin{gather}\label{r-3,4}
F_1(z,\lambda,n)=\frac{s^2}{4(f(z)-f(\lambda))^{1/2}},\\
\varphi_1(z,\lambda)=\frac{\hat{\sigma}(s)}{2(f(z)-f(\lambda))^{1/2}},\quad
\varphi_2(z,\lambda)=-\frac{\hat{\sigma}(s)^2+\hat{\sigma}'(s) }{2(f(z)-f(\lambda))},\\
r_3(z,\lambda)=F_1\varphi_1-\varphi_2, \quad\mbox{and}\quad
 r_4(z,\lambda)=F_1\varphi_1+\frac {F_1^2} 2.
\end{gather}

Alternatively,   $N_0(z)$ in \eqref{N-0}  can be written as
\begin{equation*}\label{}
N_0(z)=\frac{I-i\sigma_1}{\sqrt{2}} \varsigma(z)^{-\sigma_3} \frac{I+i\sigma_1}{\sqrt{2}}.
\end{equation*}
Accordingly, for $|z-1|=r$ we have
\begin{equation}\label{rort}
J_R(z) =d(\infty)^{\sigma_3}\frac{I-i\sigma_1}{\sqrt{2}} \left(I+\frac{\Delta_1(z)}{n^{1/3}}+\frac{\Delta_2(z)}{n^{2/3}}+O\left (\frac 1 n\right )\right)\frac{I+i\sigma_1}{\sqrt{2}} d(\infty)^{-\sigma_3},
\end{equation}
where
\begin{align*}
\Delta_1(z)&=\varsigma(z)^{-\sigma_3}\frac{I+i\sigma_1}{\sqrt{2}}d(z)^{-\sigma_3}(z-\lambda)^{\alpha \sigma_3}
 \left [(F_1+\varphi_1)\sigma_3-i\varphi_1\sigma_1\right ]\nonumber\\
  &\quad  \times (z-\lambda)^{-\alpha \sigma_3} d(z)^{\sigma_3}\frac{I-i\sigma_1}{\sqrt{2}}\varsigma(z)^{\sigma_3}\nonumber\\
&=\varsigma(z)^{-\sigma_3}\frac{I+i\sigma_1}{\sqrt{2}}
 \left [(F_1+\varphi_1)\sigma_3-i\varphi_1\eta^{\sigma_3}\sigma_1\eta^{-\sigma_3}\right ]
 \frac{I-i\sigma_1}{\sqrt{2}}\varsigma(z)^{\sigma_3}\nonumber\\
&=(F_1+\varphi_1)\varsigma(z)^{-\sigma_3}\sigma_2\varsigma(z)^{\sigma_3}
  -i\varphi_1\varsigma(z)^{-\sigma_3}\frac{I+i\sigma_1}{\sqrt{2}}\eta^{\sigma_3}\sigma_1\eta^{-\sigma_3}
   \frac{I-i\sigma_1}{\sqrt{2}}\varsigma(z)^{\sigma_3}\nonumber\\
&=r_1(z,t)\sigma_++r_2(z,t)\sigma_--i\varphi_1\left(\frac{i\eta^{-2}-i\eta^2}{2}\sigma_3+
\frac{\eta^2+\eta^{-2}-2}{2}\varsigma(z)^{-\sigma_3}\sigma_1\varsigma(z)^{\sigma_3}\right)
\end{align*}
and
\begin{align*}
\Delta_2(z)&=\varsigma(z)^{-\sigma_3}\frac{I+i\sigma_1}{\sqrt{2}}d(z)^{-\sigma_3}(z-\lambda)^{\alpha \sigma_3}
 \left [r_4(z,t) I-r_3(z,t) \sigma_2\right ]\nonumber\\
 & \times (z-\lambda)^{-\alpha \sigma_3} d(z)^{\sigma_3}\frac{I-i\sigma_1}{\sqrt{2}}\varsigma(z)^{\sigma_3}\nonumber\\
&=\varsigma(z)^{-\sigma_3}\frac{I+i\sigma_1}{\sqrt{2}}
 \left [r_4(z,\lambda) I-r_3(z,\lambda) \eta^{\sigma_3}\sigma_2\eta^{-\sigma_3}\right ]
 \frac{I-i\sigma_1}{\sqrt{2}}\varsigma(z)^{\sigma_3}\nonumber\\
&=r_4(z,\lambda) I-r_3(z,\lambda)\varsigma(z)^{-\sigma_3}\frac{I+i\sigma_1}{\sqrt{2}} \eta^{\sigma_3}\sigma_2\eta^{-\sigma_3}\frac{I-i\sigma_1}{\sqrt{2}} \varsigma(z)^{\sigma_3}
\nonumber\\
&=r_3(z,\lambda) \sigma_3+r_4(z,\lambda)I+\frac{\eta^2+\eta^{-2}-2}{2}r_3(z,\lambda) \sigma_3\\
&\quad +
\frac{i\eta^{2}-i\eta^{-2}}{2}r_3(z,\lambda)\varsigma(z)^{-\sigma_3}\sigma_1\varsigma(z)^{\sigma_3},
\end{align*}
with
\begin{gather}\label{r-1,2,eta}  r_1(z,\lambda)=-\varsigma^{-2}(z)(F_1+2\varphi_1)i,\quad r_2(z,\lambda)=\varsigma^2(z) F_1 i, \quad \eta(z)=\frac {(z-\lambda)^{\alpha }}{d(z)}.
\end{gather}
By \eqref{szego function}, we have
\begin{equation}\label{eta-expand}
   \eta=1+2\alpha \sqrt{\frac {z-\lambda}{1+\lambda}}+2\alpha^2\frac{z-\lambda}{1+\lambda}+O((z-\lambda)^{3/2}),\quad z\rightarrow \lambda,
\end{equation}
and hence
\begin{equation}\label{eta-expand-1}
   \eta^2+\eta^{-2}-2=O(z-\lambda), \quad   \eta^2-\eta^{-2}=8\alpha\sqrt{\frac{z-\lambda}{1+\lambda}}+O(z-\lambda).
\end{equation}
Thus
\begin{equation}
R(z) =d(\infty)^{\sigma_3}\frac{I-i\sigma_1}{\sqrt{2}}
\left(I+\frac{R^{(1)}(z)}{n^{1/3}}+\frac{R^{(2)}(z)}{n^{2/3}}+O\left
(\frac 1 n\right )\right)\frac{I+i\sigma_1}{\sqrt{2}}d(\infty)^{-\sigma_3},
\end{equation}
where  $R^{(1)}$ and   $R^{(2)}$ satisfy the scalar RH problems
\begin{equation}\label{}
R^{(1)}_+(z)-R^{(1)}_-(z)=\Delta_1(z),~~
R^{(2)}_+(z)-R^{(2)}_-(z)=\Delta_2(z)+R^{(1)}_-(z)\Delta_1(z).
\end{equation}
From \eqref{r-3,4} and \eqref{r-1,2,eta}, we know that  $r_1(z,t)$, $r_3(z,t)$ and $ r_4(z,t)$ have a simple pole at $z=\lambda$
and $r_2(z,t)$  is analytic at $z=\lambda$.  By Cauchy's theorem, we get
\begin{equation}\label{R-1}
 R^{(1)}(z)=\left\{\begin{array}{ll}
                   \frac{k_1(s)}{z-\lambda}\sigma_+-\Delta_1(z), &z\in U(1,r),\\[.3cm]
                    \frac{k_1(s)}{z-\lambda}\sigma_+, &z\not\in \overline{U(1,r)},
                  \end{array}
\right.
\end{equation}
and
\begin{equation}\label{R-2}
 R^{(2)}(z)=\frac{k_3(t)
\sigma_3+ k_4(s) I-k_1(s)k_2(s)\sigma_-\sigma_++k_5(s)\sigma_+}{z-\lambda} ,\ \ z\not\in
\overline{U(1,r)}.\end{equation}
where $k_j(s)=\mathrm{Res}_{z=\lambda} r_j(z,\lambda)$ for  $j=1,3,4$, $k_2(s)=\lim_{z\rightarrow\lambda} r_2(z,\lambda)$ and  $k_5(s)$ arises from the  residue of the term $\eta$ in $\Delta_2(z)$ and $R^{(1)}_-(z)\Delta_1(z)$. After some careful calculations, we get from \eqref{r-1,2,eta} and \eqref{r-3,4}
\begin{equation} \label{k-i}
\begin{split}
 k_1(s)&=-i \left (\hat{\sigma}(s)+\frac {s^2}4\right ) +O(n^{-2/3})\\
k_2(s)&=\frac{is^2} {8}+O(n^{-2/3})\\
k_3(s) &=\frac 14 \left (\hat{\sigma}(s)+\frac {s^2} 4\right )\hat{\sigma}(s)+\frac 14\hat{\sigma}'(s)\\
k_4(s) &=\frac{s^2} {16}\left (\hat{\sigma}(s)+ \frac{s^2 } {4}\right )\\
k_5(s)&=4i\alpha k_3(s)+\alpha\hat{\sigma}(s)  k_1(s)+O(n^{-2/3})=i\alpha\hat{\sigma}'(s)+O(n^{-2/3}).
\end{split}\end{equation}
Expanding $R^{(1)}(z)$ and $R^{(2)}(z)$ into Laurent series at infinity
$$R^{(j)}(z)=I+\frac {R_j} z+O\left (\frac 1 {z^2}\right ),\quad j=1,2,$$
then, from \eqref{R-1} and \eqref{R-2} we have
\begin{align}\label{R-i}
R_1=R_2=&d(\infty)^{\sigma_3}\left\{\frac{k_1(s)(i\sigma_3+\sigma_1)}{2n^{1/3}} +\frac{-\left
[2k_3(s)+k_1(s)k_2(s)\right ]\sigma_2+k_5(s)(i\sigma_3+\sigma_1) }{2n^{2/3}}\right.\nonumber\\
&\left .+O\left (\frac {1} {n}\right )  \right\}d(\infty)^{-\sigma_3},\end{align}
where  use has been  made of the estimate $2k_4(s)-k_1(s)k_2(s)=O(n^{-2/3})$, and the identities
\begin{equation*}
 \frac{I-i\sigma_1}{\sqrt{2}}\sigma_+\frac{I+i\sigma_1}{\sqrt{2}}=\frac 12 (i\sigma_3+\sigma_1),~~\frac{I-i\sigma_1}{\sqrt{2}}\sigma_3\frac{I+i\sigma_1}{\sqrt{2}}=-\sigma_2;
\end{equation*}
cf. \eqref{Psi0-2} for $\sigma_+$.

\subsection {Proof of Theorem  \ref{Asymptotic of Hankel determinant}: asymptotics of the Hankel determinant}
Substituting  \eqref{Y1} into  the differential identity \eqref{differential identity}, we have
 \begin{equation}\label{Hankel determinant-R}
\frac {d}{d\mu} \ln H_n(\mu;\alpha,\omega)=2\sqrt{2n}((N_1)_{11}+(R_1)_{11}).
\end{equation}
Replacing  $N_1$ and $R_1$ with \eqref{N-expand-2} and \eqref{R-i}  yields
 \begin{align}\label{Hankel determinant and R}
\frac {d}{d\mu} \ln H_n(\mu;\alpha,\omega)&=2\sqrt{2n}\left (\frac { \alpha (1+\lambda)}{2}+\frac{ik_1(s)}{2n^{1/3}} +\frac{ik_5(s) }{2n^{2/3}}+O\left (\frac 1 n\right )\right )\nonumber\\
&=2\sqrt{2n}\left (\frac { \alpha (1+\lambda)}{2}+\frac{(\hat{\sigma}(s)+\frac {s^2}4)}{2n^{1/3}} -\frac{\alpha\hat{\sigma}'(s) }{2n^{2/3}}+O\left (\frac 1 n\right )\right )\nonumber\\
&=\sqrt{2n}\left (2\alpha +\frac{\sigma(s)}{n^{1/3}} -\frac{\alpha(\sigma'(s)-s) }{n^{2/3}}+O\left (\frac 1 n\right )\right ),
\end{align}
where $n^{2/3}(\lambda-1)=\frac s2+O(n^{-2/3})$,  and $\sigma(s)=\hat{\sigma}(s)+\frac {s^2}4$ is the Jimbo-Miwa-Okamoto $\sigma$-form of the Painlev\'{e} II equation;
see Proposition \ref{Sigma form and psi}. Using the relation $u(s)=-\sigma'(s)$, we obtain \eqref{Hankel-Asy}, as stated in Theorem \ref{Asymptotic of Hankel determinant}.

\subsection {Proof of Theorem \ref{Asymptotic of coefficients}: asymptotics of the recurrence coefficients  }
It follows from \eqref{recurrence coefficients and Y} and \eqref{Y1} that
\begin{equation}\label{b-n and R-i}
b_n^2=2n\left\{(N_1)_{12}(N_1)_{21}+
(N_1)_{12}(R_1)_{21}+(N_1)_{21}(R_1)_{12}+
(R_1)_{12}(R_1)_{21} \right\}.
\end{equation}
From \eqref{N-0-expand-1}, \eqref{R-i} and \eqref{b-n and R-i}, we get
\begin{align}\label{}
b_{n} &  = \sqrt{2n} \displaystyle{\left\{\frac {(1 +\lambda)^2}{16}+\left [\frac { 1 +\lambda }{4}
  ( 2k_3(s)+k_1(s)k_2(s)) +\frac{ k_1 ^2(s)}4\right ]\frac 1 {n^{  2/ 3}}
  +O\left (\frac {1} {n}\right )\right\}^{1/2}}\nonumber\\
   & = \sqrt{2n}\displaystyle{\left\{\frac { 1 +\lambda }{4}+ \left [ k_3(s)+\frac 1 2 k_1(s)k_2(s)
  +\frac 1 4 k_1^2(s)\right ]\frac 1 {n^{  2/ 3}}
  +O\left (\frac {1} {n}\right )\right\}}.
\end{align}
By the definition of $k_j(s)$ in \eqref{k-i}, we obtain the asymptotic approximation of $b_n$,
\begin{align}\label{bn-asymptotic-1}
b_n&=\sqrt{2n}\left(\frac 12+\frac {n^{2/3}(\lambda-1)+\hat{\sigma}'(s)} {4}\frac 1 {n^{  2/ 3}} +O\left (\frac{1}{n}\right ) \right)\nonumber\\
&=\sqrt{2n}\left(\frac 12+\frac {\sigma'(s)} {4}\frac 1 {n^{  2/ 3}} +O\left (\frac{1}{n}\right ) \right)\nonumber\\
&=\sqrt{2n}\left(\frac 12-\frac {u(s)} {4}\frac 1 {n^{2/3}} +O\left (\frac{1}{n}\right )\right),
\end{align}
 where $u(s)=-\sigma'(s)$ satisfies the Painlev\'e XXXIV equation.

In a similar way, we compute the asymptotics of the recurrence coefficient $a_n$. First, a combination of  \eqref{recurrence coefficients and Y}, \eqref{Y1} and \eqref{Y2} gives
\begin{equation}\label{an-R-i}
a_n=\sqrt{2n}\left\{(N_1)_{11}+(R_1)_{11}+\frac{(N_2)_{12}+(R_2)_{12}+(R_1)_{11}(N_1)_{12}+(R_1)_{12}(N_1)_{22}}
{(N_1)_{12}+(R_1)_{12}}\right\}.
\end{equation}
Noting that $(G_2)_{12}=0$, $R_1=R_2=O(n^{-1/3})$, $(N_1)_{11}=-(N_1)_{22}$,  and $ (N_1)_{12}=\frac i4 (1+\lambda)d^2(\infty)=O(1) $, the formula \eqref{an-R-i} is then turned  into \begin{align}\label{an-R-i-1}
\frac{a_n}{\sqrt{2n}}=&\left((N_1)_{11}+\frac{(N_2)_{12}}{(N_1)_{12}}\right)
+\left[\left(2(R_1)_{11}+\frac {(R_1)_{12}}{(N_1)_{12}}\right)-\frac {(R_1)_{12}}{(N_1)_{12}}\left((N_1)_{11}+\frac {(N_2)_{12}}{(N_1)_{12}}\right)\right]\nonumber\\
  & -\frac {(R_1)_{12}}{(N_1)_{12}^2}\left[  (N_1)_{12}\left((R_1)_{11}+\frac {(R_1)_{12}}{(N_1)_{12}}\right)-(R_1)_{12}\left((N_1)_{11}+\frac {(N_2)_{12}}{(N_1)_{12}}\right)\right]+O\left(\frac 1 n \right).
\end{align}
From \eqref{N-expand-2} and \eqref{N-expand-3}, we have
\begin{equation*}
(N_1)_{11}+\frac{(N_2)_{12}}{(N_1)_{12}}=\frac {\lambda-1}{2}=O\left (\frac 1 {n^{2/3}}\right ).
\end{equation*}
By \eqref{N-expand-2} and \eqref{R-i}, we get
\begin{equation*}
2(R_1)_{11}+\frac {(R_1)_{12}}{(N_1)_{12}}=\frac {2k_3(s)+k_1(s)k_2(s)}{n^{2/3}}+O\left (\frac 1 n\right ),
\end{equation*}
and
\begin{equation*}
\frac {(R_1)_{12}(R_1)_{11}}{(N_1)_{12}}=\frac {k_1(s)^2}{2n^{2/3}}+O\left (\frac 1 n\right ).
\end{equation*}
Thus,  \eqref{an-R-i-1} is further simplified to
\begin{align}\label{an-asymptotic-proof}
\frac{a_n}{\sqrt{2n}}&=\frac {\lambda-1}{2}+\frac {k_1(s)^2+4k_3(s)+2k_1(s)k_2(s)}{2n^{2/3}}+O\left (\frac 1 n\right )\nonumber\\
&=\frac {\hat{\sigma}'(s)+\frac s2}{2n^{2/3}}+ O\left (\frac 1 n\right )\nonumber\\
&=\frac {\sigma'(s)}{2n^{2/3}}+O\left (\frac 1 n\right ).
\end{align}
Substituting in  the relation $u(s)=-\sigma'(s)$, we obtain   the asymptotic approximation \eqref{an-asymptotic} of the recurrence coefficient $a_n$,
as stated in Theorem \ref{Asymptotic of coefficients}.

\subsection {Proof of Theorem \ref{Asymptotic of kernel}: asymptotics of the correlation kernel   }

Now Christoffel-Darboux formula applies, and the correlation kernel \eqref{Weighted OP kernel} can be written as
\begin{equation}\label{}K_n(x,y)=\sqrt{w(x)w(y)}\gamma_{n-1}^2\frac {\pi_n(x)\pi_{n-1}(y)-\pi_n(y)\pi_{n-1}(x)}{x-y}.
\end{equation}
From \eqref{Y}, the   kernel can be expressed in terms of $Y(z)$, such that
\begin{equation}\label{kernel and Y}
\begin{split}\sqrt{2n}K_n(\sqrt{2n}x,\sqrt{2n}y)=&(2n)^{\alpha}|x-\lambda|^{\alpha}|y-\lambda|^{\alpha}e^{-n(x^2+y^2)}\sqrt{\theta(x-\lambda)\theta(y-\lambda)}
\\ &\times  \frac {\left \{Y(\sqrt{2n}y)_+^{-1}Y(\sqrt{2n}x)_+\right \}_{21}}{2\pi i(x-y)},
\end{split}
\end{equation}
where $\lambda=\mu/\sqrt{2n}$.
Appealing to  the transformation \eqref{Y to T} and  the phase condition \eqref{phase condition}, we arrive at
\begin{equation}\label{kernel and T}
\begin{split}
\sqrt{2n}K_n(\sqrt{2n}x,\sqrt{2n}y)=&e^{-n(\phi_+(x)+\phi_+(y))}|x-\lambda|^{\alpha}|y-\lambda|^{\alpha}\sqrt{\theta(x-\lambda)\theta(y-\lambda)}\\
&\times \frac {\left\{T(y)_+^{-1}T(x)_+\right\}_{21}}{2\pi i(x-y)}
\end{split}
\end{equation}
for $x,y\in \mathbb{R}$.
Tracing back the transformation \eqref{T-S}, \eqref{S-R} and \eqref{parametrix}, we have
\begin{equation}\label{T in terms of psi-1}|x-\lambda|^{\alpha}e^{-n\phi_+(x)}T(x)\left(
                                                               \begin{array}{c}
                                                                 1 \\
                                                                 0 \\
                                                               \end{array}
                                                             \right)
=R(z)E(z)\Psi_+(n^{2/3}(f(z)-f(\lambda));n^{2/3}f(\lambda))e^{-\alpha\pi i\sigma_3}\left(
                                                               \begin{array}{c}
                                                                 1 \\
                                                                1 \\
                                                               \end{array}
                                                             \right)
\end{equation}
for $x\in(\lambda-\delta,\lambda)$, where $\delta\in (0, r)$ is a constant,
and
\begin{equation}\label{T in terms of psi-2}|x-\lambda|^{\alpha}e^{-n\phi_+(x)}T(x)\left(
                                                               \begin{array}{c}
                                                                 1 \\
                                                                 0 \\
                                                               \end{array}
                                                             \right)
=R(z)E(z)\Psi_+(n^{2/3}(f(z)-f(\lambda));n^{2/3}f(\lambda))\left(
                                                               \begin{array}{c}
                                                                 1 \\
                                                                0 \\
                                                               \end{array}
                                                             \right)
\end{equation}
for $x\in(\lambda,\lambda+\delta)$.
Since $R(z)$ and $E(z)$ are analytic in $(\lambda-\delta,\lambda+\delta)$, one has
\begin{equation}\label{RE derivative} E^{-1}(y)R^{-1}(y)R(x)E(x)-I=O(x-y)\quad\mbox{as}\quad x-y\to 0. \end{equation}
From  \eqref{f-t} we see that
\begin{equation}\label{f x-n}\lim_{n\to \infty}n^{2/3}\left (f\left (\lambda+\frac {v}{2n^{2/3}}\right )-f(\lambda)\right )=v\quad\mbox{and}\quad  \lim_{n\to \infty}n^{2/3}f(\lambda)=s. \end{equation}
Thus,  inserting the expressions \eqref{T in terms of psi-1}-\eqref{T in terms of psi-2} and the estimates \eqref{RE derivative}-\eqref{f x-n} into \eqref{kernel and T}, we arrive at
\begin{equation}\label{kernel-asymptotic-proof}\lim_{n\to\infty}\frac 1{\sqrt{2}n^{1/6}}K_n\left (\mu_n+\frac {v_1}{\sqrt{2}n^{1/6}},\mu_n+\frac {v_2}{\sqrt{2}n^{1/6}}\right )=\frac {\psi_1(v_1;s)\psi_2(v_2;s)-\psi_1(v_2;s)\psi_2(v_1;s)}{2\pi i(v_1-v_2)},
\end{equation}
where
\begin{equation}\label{psi-kernel-1}\left(
                                    \begin{array}{c}
                                      \psi_1(v;s) \\
                                      \psi_2(v;s) \\
                                    \end{array}
                                  \right)
 =\sqrt{\omega}\Psi(v;s) \left(
              \begin{array}{c}
                1 \\
                0 \\
              \end{array}
            \right)\quad \mbox{for } v>0,
 \end{equation}
  and
 \begin{equation}\label{psi-kernel-2}\left(
                                    \begin{array}{c}
                                      \psi_1(v;s) \\
                                      \psi_2(v;s) \\
                                    \end{array}
                                  \right)
 =e^{-\pi i\alpha}\Psi(v;s) \left(
              \begin{array}{c}
                1 \\
                e^{2\pi i\alpha} \\
              \end{array}
            \right)\quad \mbox{for } v<0.
 \end{equation}

\section*{Acknowledgements}
                         The work of Shuai-Xia Xu  was supported in part by the National
Natural Science Foundation of China under grant numbers
11571376 and 11201493, GuangDong Natural Science Foundation under grant numbers 2014A030313176.
                                   Yu-Qiu Zhao  was supported in part by the National
Natural Science Foundation of China under grant number 11571375.

\vskip 2cm

\section*{Appendix. Asymptotics of the $\sigma$-form of Painlev\'e II   and the Painlev\'e XXXIV function }
\begin{appendix}

\setcounter{equation}{0}
\renewcommand{\theequation}{A.\arabic{equation}}
\renewcommand{\thesubsection}{A.\arabic{subsection}}

In the Appendix, we provide the asymptotics of Jimbo-Miwa-Okamoto $\sigma$-form of the Painlev\'e II equation and the Painlev\'e XXXIV function  as $s\rightarrow -\infty$. The proof is similar to that of \cite{ItsKuijlaarsOstensson2009}.

For negative $s$, take  a re-scaling of variable
\begin{equation}\label{A}
 A(z)=(-s)^{\sigma_3/4}\Psi (-sz).
\end{equation}
Then $ A(z)$ is analytic in $\mathbb{C}\backslash{\cup_{j=1}^4\Sigma_j}$ and shares the same jumps and the same  behavior at $0$  with $\Psi(\zeta)$, $j=1,2,3,4$. Here $\Sigma_j$ are the same rays as illustrated in Fig.\;\ref{fig-P34}.  According to the large-$z$ behavior  of $\Psi(z)$ in \eqref{psi-infinity}, one has
\begin{equation}\label{A1-infinity}
A(z)=\left (I+A_1/z+O(z^{-2})\right )z^{-\sigma_3/4}\frac 1{\sqrt{2}}(I+i\sigma_1)e^{-\tau \left(\frac 23 z^{3/2}-z^{1/2}\right)\sigma_3},\quad z\rightarrow \infty,
\end{equation}
where $\tau=(-s)^{3/2}$.
Using  \eqref{psi-infinity} and  \eqref{Sigma form and psi-1}, the $\sigma$-form can be expressed in terms of $A(z)$,
\begin{equation}\label{sigma-A1}
\sigma(s)=i(-s)^{1/2}(A_1)_{12}.
\end{equation}

\subsection{Case 1: $\omega=0 $}
In the case, there is no jump for $\Psi(\zeta)$  along $\zeta\in \Sigma_1$; cf. \eqref{Psi-jump} and Fig.\;\ref{fig-P34}.
Denote
\begin{equation}\label{g1}h(z)=-\frac 23 z^{3/2}+z^{1/2},\end{equation}
then
$$h(z)^2=z(1+O(z)), \quad z\to 0,$$
and  $B(z)=A(z)e^{-\tau h(z)\sigma_3}$  solves the following  RH problem.
\begin{description}
\item(B1)  $B(z)$ is analytic in $\mathbb{C}\backslash \Sigma_B$, $\Sigma_B= \Sigma_2\cup \Sigma_3 \cup \Sigma_4$; cf. Fig.\;\ref{fig-P34} for the contours;
\item(B2) $B(z)$  satisfies the jump condition
         \begin{equation*}
            B_+(z)=B_-(z) \left\{
            \begin{array}{ll}
              \left(\begin{array}{cc} 1 & 0\\ e^{2\pi i\alpha-2\tau h(z)} & 1 \end{array}\right), &  z\in \Sigma_2, \\[.4cm]
              \left(\begin{array}{cc} 0 & 1\\ -1 & 0 \end{array}\right), &  z\in \Sigma_3= (-\infty,0), \\[.4cm]
              \left(\begin{array}{cc} 1 & 0\\ e^{-2\pi i\alpha-2\tau h(z)} & 1 \end{array}\right), & z\in \Sigma_4;
            \end{array}
           \right .
         \end{equation*}
\item(B3)
$ B(z)=\left (I+O(z^{-1})\right )
  z^{-\sigma_3/4}\frac 1{\sqrt{2}}(I+i\sigma_1)$ as  $z\rightarrow \infty$;
\item(B4)  $B(z)$ has the same behavior at $0$ as $A(z)$ does.
\end{description}
The jump matrices are close to the identity matrix except the one on $(-\infty,0]$. Thus we arrive at the approximating  RH problem.
\begin{description}
  \item(a)~~  $B^{(\infty)}(z)$ is analytic in
  $\mathbb{C}\backslash (-\infty,0]$;

  \item(b)~~  $B^{(\infty)}(z)$  satisfies the jump condition
 \begin{equation*}
B^{(\infty)}_+(z)=B^{(\infty)}_-(z)
\left(\begin{array}{cc} 0 & 1\\ -1 & 0 \end{array}\right), \quad z\in (-\infty,0);
\end{equation*}
\item(c)~~
  $B^{(\infty)}(z)=
    \left (I+O\left ( 1/ z\right )\right)
  z^{-\sigma_3/4}\frac 1{\sqrt{2}}(I+i\sigma_1)$ as $z\rightarrow \infty$.
\end{description}

The solution is straightforward. Explicitly we can write down
\begin{equation}\label{B-infiniy}
B^{(\infty)}(z)=z^{-\sigma_3/4}\frac 1{\sqrt{2}}(I+i\sigma_1).
\end{equation}

It is observed that $B^{(\infty)}(z)$ fails to approximate $B(z)$ at the origin since they demonstrate different singularities there. Hence a parametrix has to be brought in.
The local parametrix at the origin  is constructed in terms of the  Bessel functions as
\begin{equation}\label{P-0}
P^{(0)}(z)=E(z)\Phi(\tau^2g^2(z))e^{-\tau h(z)\sigma_3},
\end{equation}
where
\begin{equation*}\label{Phi}
\Phi(z)=\left\{\begin{array}{lll}
\left(\begin{array}{cc} I_{2\alpha} (z^{1/2}) & \frac i \pi K_{2\alpha}(z^{1/2})\\ \pi i z^{\frac 12} I'_{2\alpha}(z^{1/2}) & -z^{\frac 12}K'_{2\alpha}(z^{1/2})\end{array}\right), & |\arg z|<\frac {2\pi}3,\\[.4cm]
\left(\begin{array}{cc} I_{2\alpha} (z^{1/2}) & \frac i \pi K_{2\alpha}(z^{1/2})\\ \pi i z^{1/2} I'_{2\alpha}(z^{1/2}) & -z^{1/2}K'_{2\alpha}(z^{1/2})\end{array}\right)\left(\begin{array}{cc} 1 & 0\\ -e^{2\alpha \pi i} & 1\end{array}\right), & \frac {2\pi}3<\arg z<\pi,\\[.4cm]
\left(\begin{array}{cc} I_{2\alpha} (z^{1/2}) & \frac i \pi K_{2\alpha}(z^{1/2})\\ \pi i z^{\frac 12} I'_{2\alpha}(z^{1/2}) & -z^{\frac 12}K'_{2\alpha}(z^{1/2})\end{array}\right)\left(\begin{array}{cc} 1 & 0\\ e^{-2\alpha \pi i} & 1\end{array}\right), & -\pi<\arg z<-\frac {2\pi }3;\end{array}\right.
\end{equation*}
see \cite{KuijlaarsMcLaughlinAsscheVanlessen}. To match the local parametrix $P^{(0)}(z)$ with the outside parametrix $B^{(\infty)}$, we choose
$$E(z)=\left( \frac {\pi^2\tau^2g^2(z)}{z}\right)^{\sigma_3/4}=\left( \pi\tau\left (1-\frac 23z\right )\right)^{\sigma_3/2},$$
which is analytic for $|z|<\frac 32$. Then from \eqref{B-infiniy} and \eqref{P-0} we have
\begin{equation}\label{matching-2}
P^{(0)}(z)(B^{(\infty)})^{-1}(z)=I+\left(\begin{array}{cc} 0 & i \frac {16\alpha^2-1} {8 h(z)\sqrt{z}} \\  -i \frac {(16\alpha^2+3)\sqrt{z}} {8  h(z)}  & 0 \end{array}\right)\tau^{-1}+O(\tau^{-2}), ~~z\in \partial U_0,
\end{equation}where $U_0$ is a disk centered at the origin with radius less than $3/2$, and $\tau=(-s)^{3/2}\to +\infty $ as $s\to-\infty$.

In the final transformation, we define
\begin{equation}\label{}
C(z)=\left\{\begin{array}{lll}
B(z)(B^{(\infty)}(z))^{-1}, & z\in \mathbb{C}\backslash U_0,\\
B(z)(P^{(0)}(z))^{-1}, & z\in U_0.\end{array}\right.
\end{equation}
Then $C(z)$ is analytic in $\mathbb{C}\backslash \Sigma_C$ and  $J_C(z)=I+O(\tau^{-1})$. Here $\Sigma_C$ is the union of $\partial U_0$ and the parts of $\Sigma_2$ and $\Sigma_4$ outside of $U_0$.   Thus
$C(z)=I+O(1/\tau)$,  uniformly in the complex plane. Take the expansion as $z\to \infty$ $$C(z)=I+C_1/z+O(z^{-2}).$$
Noting that  \eqref{matching-2} actually holds in each small  annulus  around the origin, we have
$$(C_1)_{12}=i\mathrm{Res}_{z=0} \frac {16\alpha^2-1} {8\tau h(z)\sqrt{z}}+O(\tau^{-2})=i \frac {16\alpha^2-1} {8(-s)^{3/2}}+O(s^{-3}).$$
Substituting it  into \eqref{sigma-A1}, we  obtain   the asymptotic approximation  of $\sigma(s)$
\begin{equation}\label{}
\sigma(s)=\frac {s^2}{4}+\frac {16\alpha^2-1} {8}s^{-1}+O(s^{-5/2}),\quad s\to -\infty.
\end{equation}


\subsection{Case 2: $\omega=e^{-2\beta\pi i}$ with $|\Re \beta |<1/2 $}
In \cite{ItsKuijlaarsOstensson2009}, Its, Kuijlaars and \"{O}stenssonthe obtained the  leading asymptotics of the family of Painlev\'{e} XXXIV functions corresponding to $\omega=e^{-2\beta\pi i}$, $|\Re \beta |<1/2 $; see \cite[A.41] {ItsKuijlaarsOstensson2009}. The $\sigma$-form and the Painlev\'{e} XXXIV function are related by $u(x)=-\sigma'(x)$. Unfortunately, to find the $O(1/s) $ term in the asymptotics of  the $\sigma$-form by using the above relation,  we   need  the  $O(1/s^2) $ term  asymptotic approximation  of $u(s)$ which is not available in \cite[A.41] {ItsKuijlaarsOstensson2009}.
In this section, we compute the  asymptotics of the $\sigma$-form directly using its relation with $\Psi$ established in Proposition \ref{Sigma form and psi}.

Instead of $h(z)$ in \eqref{g1}, we introduce in this case   the  function
\begin{equation}\label{g2}
h(z)=\frac 23 (z-1)^{3/2},\quad \arg(z-1)\in(-\pi,\pi).
\end{equation}
It is readily seen that
$$h(z)-(2z^{3/2}/3-z^{1/2})=1/(4\sqrt{z})+O(z^{-3/2}),\quad z\rightarrow \infty.$$
This time, we define
\begin{equation}\label{B1}
B(z)=\left(\begin{array}{cc}
1 & 0\\
-\frac i 4 \tau & 1
\end{array}\right)A(z)e^{\tau h(z)\sigma_3}.
\end{equation}
Then, $B(z)$  satisfies the RH problem.
\begin{description}
\item(B1)  $B(z)$ is analytic in $\mathbb{C}\backslash \Sigma_B$, where $\Sigma_B=\cup^4_{j=2}\Sigma_j \cup (0,1)\cup(1, +\infty)$; cf. Fig.\;\ref{fig-P34} and  Fig.\;\ref{fig-C};

\item (B2) $B(z)$  satisfies the jump condition
\begin{equation*}
   B_+(z)=B_-(z)
   \left\{\begin{array}{ll}
      \left(\begin{array}{cc} e^{-\tau (g_-(z)-g_+(z))} & e^{-2\pi i\beta}\\ 0 & e^{\tau (g_-(z)-g_+(z))} \end{array}\right), & z\in (0,1),\\[.4cm]
  \left(\begin{array}{cc} 1 & e^{-2\pi i\beta-2\tau h(z)}\\ 0 & 1 \end{array}\right), & z\in (1,\infty),\\[.4cm]
 \left(\begin{array}{cc} 1 & 0\\ e^{2\pi i\alpha+2\tau h(z)} & 1 \end{array}\right), & z\in \Sigma_2,\\[.4cm]
 \left(\begin{array}{cc} 0 & 1\\ -1 & 0 \end{array}\right), & z\in\Sigma_3,\\[.4cm]
 \left(\begin{array}{cc} 1 & 0\\ e^{-2\pi i\alpha+2\tau h(z)} & 1 \end{array}\right), &z\in \Sigma_4;
          \end{array}
   \right .
\end{equation*}

\item(B3) $
 B(z)=\left (I+O(z^{-1})\right )
  z^{-\sigma_3/4}\frac 1{\sqrt{2}}(I+i\sigma_1)$ as $z\rightarrow \infty$;
\item(B4)  $B(z)$ has the same behavior at $0$ as $A(z)$ does.
\end{description}

We take the transformation
\begin{equation}\label{C}
C(z)=B(z)\left\{\begin{array}{ll}
I, & z\in \mathbb{C}\backslash \{\Omega_U\cup\Omega_L\},\\ [.2cm]
\left(\begin{array}{cc}
1 & 0\\
- e^{2\pi i\beta} e^{2\tau h(z)} &1
\end{array}\right), & z\in\Omega_U,\\[.4cm]
\left(\begin{array}{cc}
1 & 0\\
e^{2\pi i\beta} e^{2\tau h(z)} &1
\end{array}\right), & z\in\Omega_L,\end{array}\right.
\end{equation}where the regions are illustrated in Fig.\;\ref{fig-C}.
\begin{figure}[t]
\begin{center}
\resizebox*{6cm}{!}{\includegraphics{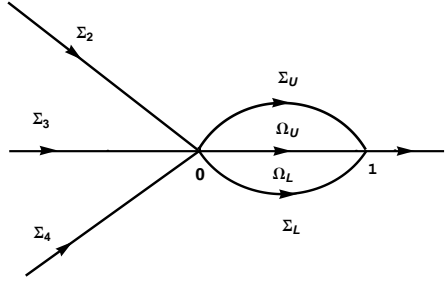}}
\caption{Regions and contours for $C(z)$ when $\omega=e^{-2\beta\pi i}$ with $|\Re \beta |<1/2 $ }
\label{fig-C} \end{center}
\end{figure}
Then  $C(z)$  satisfies the following RH problem.

\begin{description}
\item(C1)  $C(z)$ is analytic in $\mathbb{C}\backslash \Sigma_C$, with $\Sigma_C$ consisting  of all contours illustrated in Fig.\;\ref{fig-C};

\item (C2) $C(z)$ satisfies the jump conditions
\begin{equation}\label{C-jump-remains}
 C_+(z)=C_-(z)
 \left\{\begin{array}{ll}
           \left(\begin{array}{cc} 0 & 1\\ -1 & 0 \end{array}\right), &z\in \Sigma_3,\\[.4cm]
           \left(\begin{array}{cc}0 & e^{-2\pi i\beta}\\ e^{2\pi i\beta} &0 \end{array}\right), & z\in (0,1),
        \end{array}
  \right .
\end{equation}
and
\begin{equation}\label{C-jump-small}
C_+(z)=C_-(z)\left\{\begin{array}{ll}
\left(\begin{array}{cc} 1 & e^{-2\pi i\beta-2\tau h(z)}\\ 0 & 1 \end{array}\right),  & z\in (1,\infty);\\[.4cm]
 \left(\begin{array}{cc} 1 & 0\\ e^{2\pi i\beta+2\tau h(z)} & 1 \end{array}\right), &  z\in \Sigma_{U}\cup \Sigma_{L};\\[.4cm]
 \left(\begin{array}{cc} 1 & 0\\ e^{2\pi i\alpha+2\tau h(z)} & 1 \end{array}\right), &  z\in \Sigma_2;\\[.4cm]
 \left(\begin{array}{cc} 1 & 0\\ e^{-2\pi i\alpha+2\tau h(z)} & 1 \end{array}\right), & z\in \Sigma_4;
\end{array}
  \right .
\end{equation}
\item(C3) $
 C(z)=\left (I+O(z^{-1})\right )
  z^{-\sigma_3/4}\frac 1{\sqrt{2}}(I+i\sigma_1)$  as   $ z\rightarrow \infty$;
\item(C4)  $C(z)$ has the same behavior at $0$ as $A(z)$ does.
\end{description}

The jumps \eqref{C-jump-small} of $C(z)$ are close to the identity matrix for large $\tau$, while those in \eqref{C-jump-remains} are not.
Therefore, we have an approximating RH problem.
\begin{description}
  \item(a)~~  $C^{(\infty)}(z)$ is analytic in
  $C\backslash (-\infty,1]$;

  \item(b)~~  $C^{(\infty)}(z)$  satisfies jump conditions
 \begin{equation*}
C^{(\infty)}_+(z)=C^{(\infty)}_-(z)\left\{\begin{array}{lll}
\left(\begin{array}{cc} 0 & 1\\ -1 & 0 \end{array}\right), & z\in (-\infty,0),\\[.4cm]
\left(\begin{array}{cc} 0 & e^{-2\pi i\beta}\\ -e^{2\pi i\beta} & 0 \end{array}\right), & z\in (0,1);
\end{array}\right.
\end{equation*}
\item(c)~~
  $C^{(\infty)}(z)=
    \left (I+C^{(\infty)}_1/ z+O(z^{-2})\right)
  z^{-\sigma_3/4}\frac 1{\sqrt{2}}(I+i\sigma_1)$ as $z\rightarrow \infty$.
\end{description}

The solution of the above RH problem  is given by
\begin{equation}\label{C-infiniy}
C^{(\infty)}(z)=\left(\begin{array}{cc} 1 & 0\\ 2\beta & 1\end{array}\right)(z-1)^{-\sigma_3/4}\frac 1{\sqrt{2}}(I+i\sigma_1)\left(\frac {\sqrt{z-1}+i}{\sqrt{z-1}-i}\right)^{\beta \sigma_3},
\end{equation}
where the  square  root and the logarithm  take the principal branches with their branch cuts along $(-\infty,1)$ and  $(-\infty,0)$, respectively.

We note that $C^{(\infty)}(z)$ fails to approximate $C(z)$ near $z=0$ and $z=1$. Thus we need to construct parametrices at these two points. First,
we seek a local parametrix defined in  $U(1,r)$ and satisfied the following RH problem.
\begin{description}
  \item(a)~~ $C^{(1)}(z)$  is analytic in $U(1,r) \backslash  \Sigma_{C}$; see Fig.\;\ref{fig-C} for $\Sigma_{C}$;
  \item(b)~~$C^{(1)}(z)$ satisfies the jump condition
 $P^{(1)}_+(z)=P^{(1)}_-(z)J_{C}(z)$ for $z\in\Sigma_C \cap U(1,r)$, where $J_{C}(z)$ are the jump matrices for $C(z)$,  given in \eqref{C-jump-remains} and \eqref{C-jump-small};
  \item(c)~~The matching condition
$C^{(1)}(z)=(I+ O(1/\tau))C^{(\infty)}(z)$ holds on $\partial U(1, r)$.
 \end{description}

 The local parametrix can be constructed in terms of the  Airy function
$$C^{(1)}(z)=E_1(z)\Phi_A\left (\tau^{2/3}(z-1)\right )e^{\pi i\beta\sigma_3}e^{\tau h(z)\sigma_3},$$
where $\Phi_A(\zeta)$ and $h(z)$ are defined in \eqref{Airy-model-solution} and \eqref{g2}, respectively.  To match $C^{(1)}(z)$  with the outside parametrix $C^{(\infty)}(z)$, we choose the analytic factor
$$E_1(z)=C^{(\infty)}(z)e^{-\pi i\beta\sigma_3}\left (\tau^{2/3}(z-1)\right )^{\frac 14\sigma_3}\frac {I-i\sigma_1}{\sqrt{2}}.$$

Next, we turn to  $z=0$, we seek a  local parametrix at the origin of the   form
\begin{equation}\label{C-0}
C^{(0)}(z)=E(z) \Phi(\tau f(z))e^{\tau h(z)\sigma_3}e^{\beta \pi i\sigma_3/2},
\end{equation}
where $\tau=(-s)^{3/2}$,
$$f(z)=\frac 23\left (1-i(z-1)^{\frac {3}{2}}\right )=z(1+O(z)), \quad \arg (z-1)\in(0,2\pi),$$
and $E(z)$ is analytic in a neighborhood $U(0,r)$. The function $\Phi$ in \eqref{C-0} is to be constructed by using arguments from \cite[Sec.\;4]{DeiftItsKrasovky}. To begin with, let
\begin{equation}\label{}
\Phi^{(\mathrm{CHF})}(\zeta)=\Phi(\zeta)\left\{\begin{array}{lll}
e^{-\alpha \pi i\sigma_3},& \Re \zeta>0,~\Im \zeta>0,\\
e^{\alpha \pi i\sigma_3},& \Re \zeta>0,~\Im \zeta<0,\\
I, & \Re \zeta<0.
\end{array}\right.
\end{equation}
Then $\Phi^{(\mathrm{CHF})}(\zeta)$ satisfies the following jump condition on contours illustrated in Fig.\;\ref{fig-CHF},
\begin{equation}\label{CHF-jump}
\Phi^{(\mathrm{CHF})}_+(\zeta)=\Phi^{(\mathrm{CHF})}_-(\zeta)\left\{\begin{array}{ll}
\left(\begin{array}{cc} 0 & e^{-\pi i\beta}\\ -e^{\pi i\beta}  & 0\end{array}\right),  & z\in\Sigma_1,\\[.4cm]
\left(\begin{array}{cc} 1 & 0\\ e^{-\pi i(2\alpha-\beta)} & 1 \end{array}\right),  & z\in\Sigma_2,\\[.4cm]
e^{\pi i\alpha\sigma_3}, & z\in\Sigma_3\cup \Sigma_7,\\[.2cm]
 \left(\begin{array}{cc} 1 & 0\\ e^{\pi i(2\alpha-\beta)} & 1 \end{array}\right), &  z\in \Sigma_{4},\\[.4cm]
 \left(\begin{array}{cc} 0 & e^{\pi i\beta}\\ -e^{-\pi i\beta} & 0 \end{array}\right), &  z\in \Sigma_5,\\[.4cm]
 \left(\begin{array}{cc} 1 & 0\\ e^{-\pi i(2\alpha+\beta)} & 1 \end{array}\right), & z\in \Sigma_6,\\[.4cm]
 \left(\begin{array}{cc} 1 & 0\\ e^{\pi i(2\alpha+\beta)} & 1 \end{array}\right), & z\in \Sigma_8.
\end{array}
  \right .
\end{equation}

\begin{figure}[ht]
\begin{center}
\resizebox*{7cm}{!}{\includegraphics{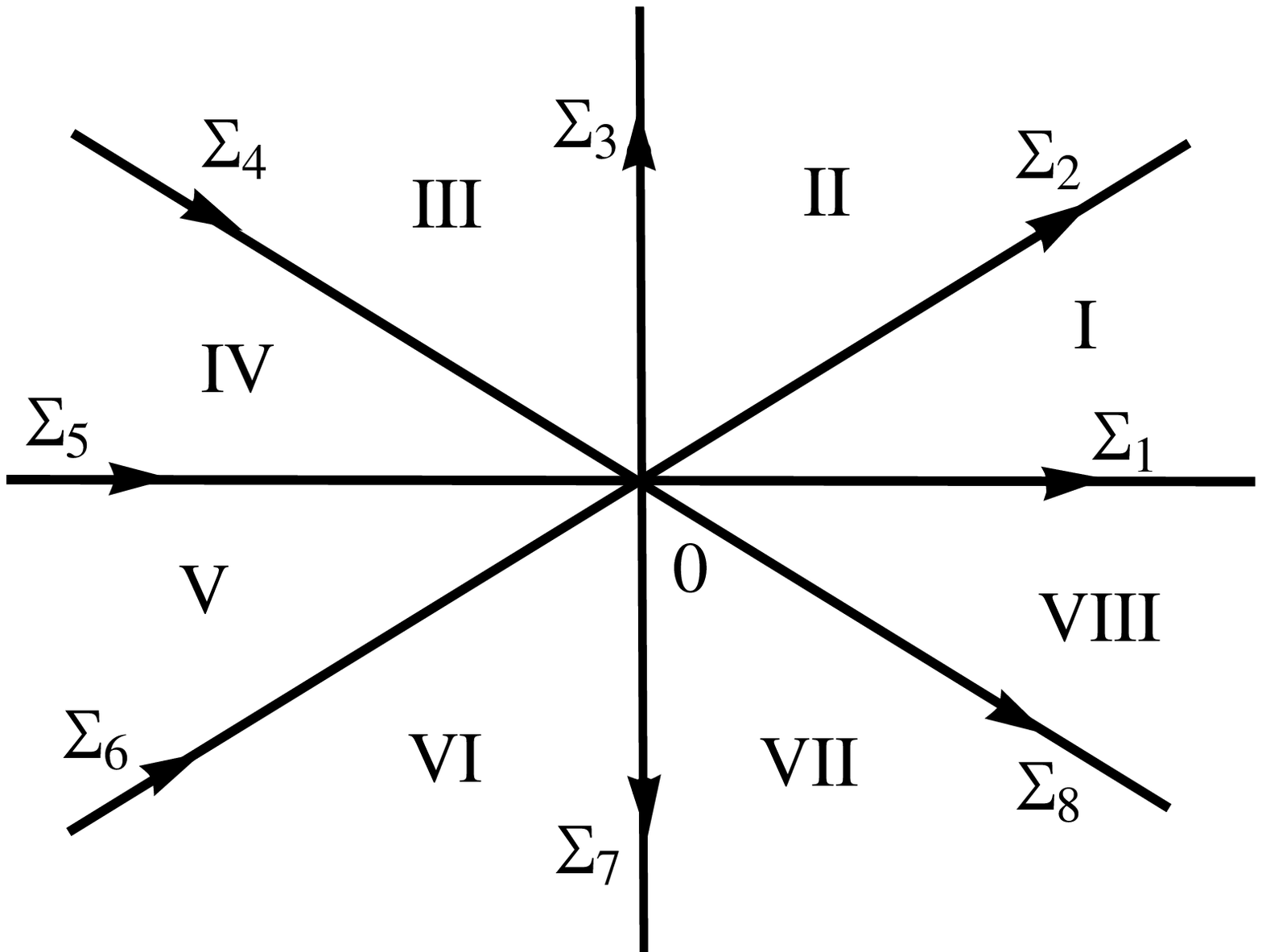}}
\caption{Regions and contours for $\Phi^{(\mathrm{CHF})}$}
\label{fig-CHF} \end{center}
\end{figure}

Such a RH problem with constant jumps  has been treated in \cite{ItsKrasovsky} and \cite{DeiftItsKrasovky}, with the solution being   constructed explicitly for $\zeta\in I$ (cf. Fig.\;\ref{fig-CHF}) as
\begin{align*}\label{}
\Phi^{(\mathrm{CHF})}(\zeta)&=C_1\left(\begin{array}{ll}
(2e^{\pi i/2}\zeta)^\alpha\psi(\alpha+\beta,1+2\alpha,2e^{\pi i/2}\zeta)e^{i\pi(\alpha+2\beta)}e^{-i\zeta}\\
-\frac{\Gamma(1+\alpha+\beta)}{\Gamma(\alpha-\beta)}(2e^{\pi i/2}\zeta)^{-\alpha}\psi(1-\alpha+\beta,1-2\alpha,2e^{\pi i/2}\zeta)e^{i\pi(-3\alpha+\beta)}e^{-i\zeta}\end{array}\right.\nonumber\\
&\quad\quad\quad\quad\quad\quad\left.\begin{array}{rr}
-\frac{\Gamma(1+\alpha-\beta)}{\Gamma(\alpha+\beta)}(2e^{\pi i/2}\zeta)^\alpha\psi(1+\alpha-\beta,1+2\alpha,2e^{-\pi i/2}\zeta)e^{i\pi(\alpha+\beta)}e^{i\zeta}\\
(2e^{\pi i/2}\zeta)^{-\alpha}\psi(-\alpha-\beta,1-2\alpha,2e^{-\pi i/2}\zeta)e^{-i\pi\alpha}\end{array}\right),
\end{align*}
where the constant matrix  $ C_1=2^{\beta \sigma_3}e^{\beta\pi i\sigma_3/2}
\left(\begin{array}{cc}e^{-i\pi(\alpha+2\beta)} & 0\\ 0 & e^{i\pi(2\alpha+\beta)}\end{array}\right)$ and  $\psi(a,b,z)$ is the confluent hypergeometric function; see \cite[Proposition 4.1]{DeiftItsKrasovky}. The solution in the other sectors are determined by using the jump relation in \eqref{CHF-jump}.
Then the function  $\Phi^{(\mathrm{CHF})}(\zeta)$ possesses  the following asymptotic behavior as $\zeta\to\infty$ (see \cite[(4.42)]{DeiftItsKrasovky}):
\begin{equation*}
\begin{split}
  \Phi^{(\mathrm{CHF})}(\zeta)=  &  \left [I+\frac {1}{\zeta} \left(
                                                        \begin{array}{cc}
                                                         - \frac {i(\alpha^2-\beta^2)}{2} & -i2^{2\beta-1}\frac {\Gamma(1+\alpha-\beta)}{\Gamma(\alpha+\beta)}e^{i\pi(\alpha-\beta)}  \\
                                                           i2^{-2\beta-1}\frac {\Gamma(1+\alpha-\beta)}{\Gamma(\alpha-\beta)}e^{-i\pi(\alpha-\beta)} &  \frac {i(\alpha^2-\beta^2)}{2} \\
                                                        \end{array}
                                                      \right)
+O\left (\frac 1 {\zeta^2}\right )\right ] \\
    & \times \zeta^{-\beta \sigma_3}e^{-i\zeta\sigma_3},
\end{split}
\end{equation*}
where $\arg \zeta\in(-\pi/{2},~{3}\pi/{2})$.

Now we take the function $E(z)$ in \eqref{C-0} as
\begin{equation*}\label{}
E(z)=\left\{\begin{array}{ll}
C^{(\infty)}(z)e^{-\left (\alpha+\frac {\beta}2\right )\pi i\sigma_3}e^{\frac {2}{3}i\tau \sigma_3}(\tau f(z))^{\beta \sigma_3},& \Im z>0,\\
C^{(\infty)}(z)\left(\begin{array}{cc}0 & 1\\-1 & 0\end{array}\right)
e^{-\left (\alpha+\frac {\beta}2\right )\pi i\sigma_3}e^{\frac {2}{3}i\tau \sigma_3}(\tau f(z))^{\beta \sigma_3}, & \Im z<0,  \end{array}\right.
\end{equation*}
with $\arg f(z)\in (0,2\pi)$. Then it is readily verified that $E(z)$ is analytic in $U(0,r)$ for sufficiently small $r$, and  the following  matching condition is fulfilled on $\partial U(0,r)$:
\begin{align}\label{matching-A2}
&C^{(0)}(z)\left (C^{(\infty)}(z)\right )^{-1}=I+ C^{(\infty)}(z)f(z)^{\beta\sigma_3}e^{2i\tau \sigma_3/3}\nonumber\\
 &\left [\frac 1 {\tau f(z)}\left(\begin{array}{cc}
-\frac{i(\alpha^2-\beta^2)}{2} &-i\frac{\Gamma(1+\alpha-\beta)}{2\Gamma(\alpha+\beta)}e^{-(\alpha+2\beta)\pi i }(2\tau )^{2\beta}\\
i\frac{\Gamma(1+\alpha+\beta)}{2\Gamma(\alpha-\beta)}e^{(\alpha+2\beta)\pi i}(2\tau)^{-2\beta} & \frac{i(\alpha^2-\beta^2)}{2}\end{array}\right)+O\left (\tau^{2|\Re \beta|-2}\right )\right ]\nonumber\\
&\times e^{-2i\tau \sigma_3/3}f(z)^{-\beta\sigma_3}\left (C^{(\infty)}(z) \right )^{-1}.
\end{align}

In the final transformation, we put
\begin{equation}\label{}
D(z)=\left\{\begin{array}{lll}
C(z)C^{(\infty)}(z)^{-1}, & \mathbb{C}\backslash \overline{U(0,r)\cup U(1,r)} ,\\
C(z)(C^{(0)}(z))^{-1}, & z\in U(0,r),\\
C(z)(C^{(1)}(z))^{-1}, & z\in U(1,r).\end{array}\right.
\end{equation}
In this case, on the remaining  contour consisting of $\partial U(0,r)$, $\partial U(1,r)$, and the parts of $\Sigma_C\setminus(-\infty, 1]$ (see Fig.\;\ref{fig-C}) outside of these two circles,
the jump $J_D(z)=D_-(z)^{-1}D_+(z)=I+O\left (\tau^{2|\Re \beta|-1}\right )$. Thus $D(z)=I+O\left (\tau^{2|\Re \beta|-1}\right )$  uniformly for $z$ in the complex plane.
Expanding  $D(z)$ as
 $$D(z)=I+D_1/z+O(1/z^{2})~~\mbox{for}~z\to \infty,$$
 we get from \eqref{matching-A2} that
 \begin{equation}\label{D-1}
 \begin{split}
 (D_1)_{12}= &  \frac{ 1} {4(-s)^{3/2}}\left(2i(\alpha^2-\beta^2)-\frac{\Gamma(1+\alpha-\beta)}{\Gamma(\alpha+\beta)}
e^{i\theta(s;\alpha,\beta)}+\frac{\Gamma(1+\alpha+\beta)}{\Gamma(\alpha-\beta)}
e^{-i\theta(s;\alpha,\beta)}\right) \\
     & +O\left(s^{3(|\Re \beta|-1)}\right ),
 \end{split}
\end{equation}
with $\theta(s;\alpha,\beta)=\frac 43 |s|^{3/2}-\alpha \pi-6i \beta \ln 2-3i\beta\ln|s|$ for $ \omega=e^{-2\pi i\beta}$ with $|\Re \beta|<1/2$.
Tracing  back the transformations $A(z)\to B(z)\to C(z)\to D(z)$, we have
\begin{equation}\label{}
A(z)=\left(\begin{array}{cc} 1 & 0\\ \frac i 4 \tau & 1 \end{array}\right) D(z) C^{(\infty)}(z) e^{-\tau h(z) \sigma_3}.
\end{equation}
Consequently, in view of \eqref{A1-infinity}, one has
\begin{equation}\label{A-1-asy}
(A_1)_{12}=(D_1+C^{(\infty)}_1)_{12}+\frac i 4 \tau=(D_1)_{12}+2\beta+\frac i 4 (-s)^{3/2}.
\end{equation}
Thus, a combination of \eqref{psi-infinity}, \eqref{A},  \eqref{D-1} and \eqref{A-1-asy} gives  the asymptotic approximation  of $\sigma(s)$ in \eqref{thm: sigma asy negative infinity} as $s\to -\infty$. The asymptotics of $u(s)$ in  \eqref{thm: painleve  negative infinity} as $s\to -\infty$ follows from the relation $u(s)=-\sigma'(s)$.

\end{appendix}

\end{document}